 \newcommand\R{\mathord{\mathbb R}}
 \newcommand\C{{\mathbb C}}
 \newcommand\F{{\mathbb F}}
 \renewcommand\H{\mathord{\mathbb H}}
 \newcommand\N{\mathord{\mathbb N}}
 \renewcommand\P{\mathord{\mathbb P}}
 \newcommand\W{\mathord{\mathbb W}}
  \renewcommand{\b}{\mathbf{b}\mathnormal}
  \renewcommand{\i}{\mathbf{i}\mathnormal}
  \newcommand{\e}{\mathbf{e}}
  \newcommand{\m}{\mathbf{m}}
  \newcommand{\n}{\mathbf{n}}
  \newcommand{\U}{\mathbf{U}}
  \renewcommand{\u}{\mathbf{u}}
  \renewcommand{\v}{\mathbf{v}}
  \newcommand{\V}{\mathbf{V}}
  \newcommand{\w}{\mathbf{w}}
  \newcommand{\x}{\mathbf{x}}
  \newcommand{\y}{\mathbf{y}}
  \newcommand{\z}{\mathbf{z}}
  \newcommand{\0}{\mathbf{0}}
 \newcommand\cA{{\cal A}}
 \newcommand\cB{{\cal B}}
 \newcommand\cC{{\cal C}}
 \newcommand\cH{{\cal H}}
 \newcommand\cM{{\cal M}}
 \newcommand\cS{{\cal S}}
 \newcommand\cT{{\cal T}}
 \newcommand\cW{{\cal W}}
 \newcommand\cX{{\cal X}}
 \newcommand\cY{{\cal Y}}
 \newcommand\cZ{{\cal Z}}
 \newcommand\rH{{\rm H}}
 \newcommand\rS{{\rm S}}
  \newcommand{\lan}{\langle}
  \newcommand{\ran}{\rangle}
  \newcommand{\an}[1]{\lan#1\ran}
  \def\diag{\mathop{{\rm diag}}\nolimits}
  \newcommand{\hs}{\hspace*{\parindent}}
  \newcommand{\tr}{\mathop{\mathrm{tr}}\nolimits}
  \newcommand{\trans}{^\top}
  \newcommand{\qed}{\hspace*{\fill} $\Box$\\}
  \newcommand{\dist}{\mathrm{dist}}
  \renewcommand{\rS}{\mathrm{S}}
  \newcommand{\rank}{\mathrm{rank\;}}
  \newtheorem{theo}{\bfseries \hs Theorem}[section]
  \newtheorem{lemma}[theo]{\bfseries \hs Lemma}
  \newtheorem{corol}[theo]{\bfseries \hs Corollary}
  \newtheorem{algo}[theo]{\bfseries \hs Algorithm}
  \newtheorem{example}[theo]{\bfseries \hs Example}
  \numberwithin{equation}{section} 
\newcommand{\bbm}{\begin{bmatrix}}
\newcommand{\ebm}{\end{bmatrix}}
 \newtheorem{problem}[theo]{\bfseries \hs Problem}
 \renewcommand{\span}{\mathrm{span}}
 \renewcommand\dim{{\rm dim\;}}
 \newcommand\sym{{\rm sym}}
\begin{document}

 \title{Theoretical and computational aspects of entanglement}
 \author{Harm Derksen\footnotemark[1],
  Shmuel Friedland\footnotemark[2], Lek-Heng Lim\footnotemark[3] and Li Wang\footnotemark[2]
 }
 \renewcommand{\thefootnote}{\fnsymbol{footnote}}
  \footnotetext[1]{
 Department of Mathematics, University of Michigan, Ann Arbor, MI 48109, USA
 \texttt{hderksen@umich.edu}}
 \footnotetext[2]{
 Department of Mathematics, Statistics and Computer Science,
 University of Illinois at Chicago, Chicago, Illinois 60607-7045,
 USA, \texttt{friedlan@uic.edu,liwang8@uic.edu}.  }
 \footnotetext[3]{
Computational and Applied Mathematics Initiative, Department of Statistics,
University of Chicago,USA
\texttt{lekheng@galton.uchicago.edu}}

 \renewcommand{\thefootnote}{\arabic{footnote}}
 \date{}
 \maketitle
 \begin{abstract}  We show that the two notions of entanglement: the maximum of the  geometric measure entanglement and the maximum of the nuclear norm is attained for the same states.  We affirm the conjecture of Higuchi-Sudberry on the maximum entangled state of four qubits.  We introduce the notion of $d$-density tensor for mixed $d$-partite states.  We show that $d$-density tensor is separable if and only if its nuclear norm is $1$.  We suggest an alternating method for computing the nuclear norm of tensors.
 We apply the above results to symmetric tensors.

 \end{abstract}

 \noindent \emph{Keywords}: Entanglement, geometric measure of entanglement, spectral and nuclear norms of tensors, symmetric tensors, $d$-qubits, symmetric $d$-qubits,  densty tensors, computation of spectral norm.
 
\noindent\emph{2010 Mathematics Subject Classification}. Primary 15A69, 65K10, 81P40. 
 \section{Introduction}
 The most important notion in quantum mechanics is the notion of (quantum) entaglement of $d$-partite systems \cite{EPR35, Sch35,Sch36}.  (Recall that a $d$-partite state is represented by a $d$-mode tensor $\cT$ of Hilbert-Schmidt norm one: $\|\cT\|=1$.)  A state $\cT$ is called entangled if it is not a product state, (rank one tensor).
 One of the quantitative ways to measure the entanglement of a $d$-partite state $\cT$ is the geometric measure of entanglement of $\cT$ \cite{WG03}.   It is given by the distance of $\cT$ to the variety of product states.  In mathematical terms the geometric measure of entanglement  of $\cT$ is equal to $\sqrt{2\left(1-\|\cT\|_{\infty}\right)}$, where $\|\cT\|_{\infty}$ is the spectral norm of the state $\cT$ \cite{FW16}.   Thus, $\cT$ is entangled if and only if $\|\cT\|_{\infty}<1$.
 
Another important notion in quantum mechanics is a mixed state \cite{Fan57}, which is represented by a hermitian nonnegative definite matrix of trace one.  A pure state is represented by a rank one density matrix.  Mathematically, a $d$-partite quantum state is described by a $2d$-mode tensor $\cH$, which was called density tensor in \cite{FL14, FLn16}.   A mixed density tensor $\cH$ corresponding to the mixture of product states is called separable, or separable state \cite{Per96}.  Thus separable density tensors are generalizations of product states, and inseparable density tensors, i.e., density tensors which are not separable, are analogous to the entangled states. 
The following result for the mixed density tensor was discovered in \cite{FL14}:  Let $\|\cT\|_1$ be the nuclear norm of a $d$-partite tensor, which is the dual norm of the spectral norm \cite{FLn16}.  Then the nuclear norm of a density tensor is at least one, and equality holds if and only if the density is separable.  Hence $\|\cH\|_1$ measures the inseparability of the corresponding mixed states for $d\ge 2$. (For separable bipartite states, i.e., $d=2$, this result was discovered in \cite{Ru02}.)

The aim of this paper to discuss further theoretical and numerical aspect of nuclear norm of tensors initiated in \cite{FL14,FLn16} and their relationship to entanglement.   We first describe our main theoretical results
which are related to entanglement.  With respect to the geometric measure of entanglement, the most entangled state is a $d$-partite state with the minimal spectral norm \cite{TWP09, FW16}.  We propose here another measure of entanglement a $d$-partite state $\cT$: the value of the nuclear norm $\|\cT\|_1$.  Clearly, $\|\cT\|_1\ge 1$ and equality holds if and only if $\cT$ is a product state.  Hence a maximum entangled state with respect to the this measure is a state with maximum $\|\cT\|_1$.  We show that a state has maximum geometric measure of entanglement if and only if it has maximum nuclear norm.  Similarly, the the most inseparable density tensor is the one with the maximum nuclear norm.  We show that the nuclear norm of the most inseparable density tensor is achieved for all pure states which are maximally entangled. 

As pointed out in \cite{GFE09} most qubit states are too entangled to use for quantum computations.  On the other hand, the symmetric $d$-qubits, 
are much less entangled for large values of $d$ \cite{FK16}, and their geometric measure of entanglement is polynomially computable \cite{FW16}.  Furthermore, 
the symmetric qubits are actually available in current designs for quantum computers \cite{AA13}.  Therefore we also discuss in this paper the maximum entangled and
maximum inseparable states corresponding to symmetric tensors, also known as Bosons in physics.

The second part of this paper is devoted to the computational aspects of the nuclear norm.  We first propose a simple numerical algorithm to compute the nuclear norm
of a tensor, which is an analog of the alternating method for computing the spectral norm of a tensor  \cite{LMV00,FMPS,FT15,KB09}.  
Note that this algorithm gives an upper bound on the nuclear norm.  It will usually converge to a local minimum or at least to a critical point.  
We remark that the computation of the spectral  and nuclear norm of tensors is NP-hard (for $d\ge 3$) \cite{HL13,FLn16}.  In general, one would not expect to have a polynomial time algorithm to compute the spectral norm, unless P=NP.

Next we consider the case of symmetric tensors.  We propose a variation of our alternating algorithm to symmetric tensors.  We compare our algorithm to a different approach  suggested by J. Nie in \cite{Ni16}, where the Lasserre hierarchy of semi-definite relaxations based on moments is applied to the non-convex polynomial optimization problem.  The iterations of the Nie's algorithm yield a lower bound for the nuclear norm.

We also try to find most entangled states and most entangled symmetric states using software.  For $3$-qubits our software confirms that the $W$-state is the most entangled state.  For $4$-qubits we prove that the conjectured $4$-qubit given in \cite{HS00} is the most entangled one.
For symmetric $d$-qubits we also confirm numerically that the $d$-symmetric qubits suggested in \cite{AMM10} are the most entangled ones.

We now survey briefly the contents of this paper.  In \S\ref{sec:specnucnorm} we recall the definitions and properties of the spectral and nuclear norms.  We discuss the distortion constants of these two norms with respect to the Hilbert-Schmidt norm. 
In \S\ref{sec:symtens} we discuss similar notions and results for symmetric tensors.
In \S\ref{sec:separab} we discuss the notion of density tensors related to the mixed $d$-partite state.  We show that a density tensor is separable if and only if its nuclear norm is $1$.  In \S\ref{sec:bisymdt} we discuss the density tensors corresponding to the mixed symmetric states, which are called bisymmetric density tensors.
In \S\ref{sec:maxentang} we discuss the notion of the most entangled states and mixed states with respect to spectral and nuclear norms of tensors.  In \S\ref{sec:maxentan34}  we discuss the most entangled $3$ and $4$ qubits.  We show that the $4$-qubit state given by Higuchi-Sudbery \cite{HS00} is the most entangled with respect to the spectral and nuclear norm.  In \S\ref{sec:mineucnrm} we recall two well-known minimization problems: the minimum sum of Euclidean norms \cite{ACCO} and a Second Order Cone Programming (SOCP) \cite{BV04}, which are the foundations for proposing the alternating method for nuclear norm calculation. 
 In \S\ref{sec:altmetnucnrm} and  \S\ref{sec:altmetsymten} we give alternating methods for computing the nuclear norm of nonsymmetric and symmetric tensors respectively. 
In \S\ref{sec:numres} we give some numerical examples to demonstrate the performance of Algorithms \ref{alg:nonsym} and \ref{alg:sym}.

 \section{The spectral and the nuclear norms of tensors}\label{sec:specnucnorm}
 Assume that $d$ is a positive integer and let $\n=(n_1,\ldots,n_d)\in\N^d$ .   In this paper we assume that we are dealing with a field $\F$ which is either is the field of complex numbers $\C$, which is fundamental in quantum  mechanics, or the field of real numbers $\R$, which appears frequently in engineering applications.
Denote by $\F^{\mathbf{n}}$ the $d$-dimensional tensor product $\otimes_{i=1}^d \F^{n_i}$ and by $[d]$ the set of positive integers $\{1,\ldots,d\}$.
Note that the dimension of the vector space $\F^{\n}$ is $N(\n)=\prod_{i=1}^d n_i$.
Recall that $\x=(x_1,\ldots,x_n)\trans\in\F^n, A=[a_{i,j}]\in\F^{m\times n}, \cT=[t_{i_1,\ldots,i_d}]\in\F^{\mathbf{n}}$ are called vector, matrix and $d$-mode tensor 
(for $d\ge 3$), with the entries $x_i, a_{i,j},t_{i_1,\ldots,i_d}$ respectively.  

Assume that $J=\{j_1,\ldots,j_k\}\subseteq[d]$, where
$1\le j_1<\cdots<j_k\le d$. Let $\n'=(n_{j_1},\ldots,n_{j_k})$ and $\cY=[y_{i_{j_1},\ldots,i_{j_k}}]\in \C^{\mathbf{n}'}$. Then
\[\cT\times \cY=\sum_{i_{j_p}\in [n_{j_p}], p\in[k]} t_{i_1,\ldots,i_d}y_{i_{j_1},\ldots,i_{j_k}}\]
is $d-k$ mode tensor.  In particular, for
for $k=d$ one has that the standard inner product on $\F^{\n}$ is given by $\an{\cT,\cY}=\cT\times \bar\cY$.    Then $\|\cT\|=\sqrt{\an{\cT,\cT}}$ is the Hilbert-Schmidt norm of $\cT$.

We now recall the two important norms on $\F^{\n}$, which are of major importance in quantum mechanics for $\F=\C$.
Let
\begin{equation}\label{defPiFn}
\Pi^{\mathbf{n}}(\F)=\{\otimes_{i=1}^d \x_i,\;\x_i\in\F^{n_i},\;\|\x_i\|=1, \; i\in[d]\}.
\end{equation}
$\Pi^{\n}(\F)$, or its projectivization $\P\Pi^{\n}(\F)$, is called the Segre variety.
The spectral norm of a tensor is given by
\begin{equation}\label{defspecnrom}
\|\cT\|_{\infty,\F}=\max\{|\an{\cT,\cX}|, \; \cX\in\Pi^{\n}(\F)\}, \textrm{ for }\cT\in\F^{\n}.
\end{equation}
Clearly, $\|\cT\|_{\infty,\F}\le \|\cT\|$, and for a nonzero tensor $\cT$ the equality $\|\cT\|_{\infty,\F}=\|\cT\|$ if and only if $\cT$ is a rank one tensor.  That is
$\cT=\otimes_{j=1}^d \x_j$, where $\x_j\in \F^{n_j}\setminus\{\0\}$ for $j\in[d]$.
We let $\|\cT\|_{\infty}=\|\cT\|_{\infty,\C}$. It is shown in \cite{FW16} that
\[\|\cT\|_{\infty}=\max\{\Re\left(\an{\cT,\cX}\right), \;\cX\in \Pi^{\n}(\C)\}.\]

The nuclear norm of $\cT\in\F^{\mathbf{n}}$ is defined as follows:
\begin{equation}\label{nucnormdef}
\|\cT\|_{1,\F}=\min\{\sum_{i=1}^r\prod_{j=1}^d \|\x_{i,j}\|, \; \cT=\sum_{i=1}^r \otimes_{j=1}^d \x_{i,j},\; \x_{i,j}\in\F^{n_j}, i\in[r], j\in[d]\}.
\end{equation}
Again we let $\|\cT\|_1=\|\cT\|_{1,\C}$.
It is shown in \cite{FLn16} that we can choose in the characterization \eqref{nucnormdef} $r=N(\n)$ for $\F=\R$.  As $\C^{\n}$ can be viewed as $\R^{\n}\oplus \R^{\n}$ it follows that we can choose in the characterization \eqref{nucnormdef} $r=2N(\n)$ for $\F=\C$.  

Furthermore, it is known that the nuclear norm is the dual norm to
 the spectral norm over $\F$ \cite{FLn16}.   That is,
 \begin{equation}\label{dualcharspecnuc}
 \|\cT\|_{q,\F}=\max\{|\an{\cT,\cY}|, \;\|\cY\|_{p,\F}=1\}, \quad \frac{1}{p}+\frac{1}{q}=1,\;p\in\{1,\infty\}.
 \end{equation}
 Hence the following well known inequality holds
 \begin{equation}\label{prodineq}
 \|\cT\|^2\le \|\cT\|_{1,\F}\|\cT\|_{\infty,\F} \textrm{ for all } \cT\in\F^{\n}.
 \end{equation}
 (Assume for example that $\|\cT\|_{1,\F}=1$ and take in the maximal characterization of $\|\cT\|_{\infty,\F}$ $\cY=\cT$.)
 
 Note that the characterization of spectral norm and the characterization \eqref{dualcharspecnuc} yield that the extreme points of the unit ball of the nuclear norm is the set $\Pi^{\n}(\F)$.  Hence $\|\cT\|_{1,\F}\ge \|\cT\|$.  Equality for a nonzero $\cT$ holds if and only if $\cT$ is a rank
 one tensor.
 
 Observe that by the definition 
 \[ \|\cT\|_{\infty,\R}\le \|\cT\|_{\infty}, \quad \|\cT\|_{1,\R}\ge \|\cT\|_1, \quad \textrm{ for }\cT\in \R^{\n}.\]
 For $d\ge 3$ one may have strict inequalities \cite{FLn16}.  However for matrices, $d=2$, we have always equalities in the above inequalities,
 since the spectral norm and the nuclear norm of a matrix $T$ is the maximal singular value and the sum of singular values respectively. 
 
 Let $\alpha(\n,\F)$ and $\beta(\n,\F)$ be the best constants for comparison of the norms $\|\cT\|_{1,\F}$, $\|\cT\|$ and $\|\cT\|_{\infty,\F}$:
\begin{equation}\label{distortineq}
\frac{1}{\alpha(\n,\F)}\|\cT\|_{1,\F}\le \|\cT\|\le \frac{1}{\beta(\n,\F)}\|\cT\|_{\infty,\F}, \textrm{ for all }\cT\in\F^{\n}.
\end{equation}
Thus
\begin{eqnarray}\label{charalpha}
&&\alpha(\n,\F)=\max\{\|\cT\|_{1,\F}, \;\cT\in\F^{\n}, \|\cT\|=1\},\\  
&&\beta(\n,\F)=\min\{\|\cT\|_{\infty,\F},\;\cT\in\F^{\n},\|\cT\|=1\}.\label{charbeta}
\end{eqnarray}
The following result is well known for matrices.
\begin{lemma}\label{alphbetdim2} Let  $1<m\le n$ be integers.  Let $T\in \F^{m\times n}$ and assume $\|T\|=1$.  Then 
\begin{enumerate}
\item The equality $\|T\|_1=\alpha(m,n)$ holds if and only if $m$ singular values of $T$ are $\frac{1}{\sqrt{m}}$.
\item The equality $\|T\|_\infty=\beta(m,n)$ holds if and only if $m$ singular values of $T$ are $\frac{1}{\sqrt{m}}$.
\end{enumerate}
In particular 
\begin{equation}\label{alphbetd=2}
\alpha(m,n)=\sqrt{m}, \quad \beta(m,n)=\frac{1}{\sqrt{m}}.
\end{equation}
\end{lemma}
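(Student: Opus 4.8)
The plan is to reduce everything to the singular value decomposition of $T$ and then to an elementary inequality. Write $\sigma_1\ge\sigma_2\ge\cdots\ge\sigma_m\ge 0$ for the singular values of $T$ (there are exactly $\min(m,n)=m$ of them, counted with multiplicity, some possibly zero), and recall — as noted in the paragraph preceding the lemma — that for a matrix $\|T\|^2=\sum_{i=1}^m\sigma_i^2$, $\|T\|_{\infty,\F}=\sigma_1$ and $\|T\|_{1,\F}=\sum_{i=1}^m\sigma_i$, with these quantities independent of whether $\F=\R$ or $\F=\C$; this is why the lemma may write $\alpha(m,n)$, $\beta(m,n)$, $\|T\|_1$, $\|T\|_\infty$ without a field subscript. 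Under the normalization $\|T\|=1$ the data of $T$ relevant to us is thus just a point of the compact set $S=\{\sigma\in\R^m:\ \sigma_i\ge 0,\ \sum_{i=1}^m\sigma_i^2=1\}$, and conversely every point of $S$ arises from some such $T$.

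For part 1, the quantity to maximize over $S$ is $\|T\|_1=\sum_{i=1}^m\sigma_i$. By the Cauchy--Schwarz inequality applied to the all-ones vector and $(\sigma_1,\dots,\sigma_m)$,
\[
\sum_{i=1}^m\sigma_i\ \le\ \sqrt{m}\,\Bigl(\sum_{i=1}^m\sigma_i^2\Bigr)^{1/2}=\sqrt{m},
\]
with equality if and only if $(\sigma_1,\dots,\sigma_m)$ is proportional to the all-ones vector, i.e.\ $\sigma_1=\cdots=\sigma_m=1/\sqrt{m}$. This simultaneously yields $\alpha(m,n)=\sqrt{m}$ and the asserted characterization of the extremal matrices. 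For part 2, the quantity to minimize over $S$ is $\|T\|_\infty=\sigma_1=\max_i\sigma_i$; from $1=\sum_{i=1}^m\sigma_i^2\le m\,\sigma_1^2$ we get $\sigma_1\ge 1/\sqrt{m}$, with equality exactly when $\sigma_i^2=\sigma_1^2$ for every $i$, i.e.\ again $\sigma_1=\cdots=\sigma_m=1/\sqrt{m}$. Hence $\beta(m,n)=1/\sqrt{m}$ with the stated characterization. Compactness of $S$ and continuity of these functions also confirm that the maxima and minima in \eqref{charalpha}--\eqref{charbeta} are attained, so the two conditions ``$\|T\|_1=\alpha(m,n)$'' and ``$\|T\|_\infty=\beta(m,n)$'' are nonvacuous.

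There is essentially no obstacle here: the entire content is the equality case of Cauchy--Schwarz (equivalently the QM--AM inequality) together with the translation of the three matrix norms into symmetric functions of the singular values. The only minor points to watch are that when $m<n$ a matrix has exactly $m$ singular values, so ``$m$ singular values of $T$ equal $1/\sqrt{m}$'' forces all of them to equal $1/\sqrt{m}$ and is automatically compatible with $\sum_i\sigma_i^2=m\cdot(1/\sqrt m)^2=1$; and that one should explicitly invoke the coincidence of the real and complex spectral/nuclear norms of a matrix so that the field plays no role. I would place this lemma here mainly as the $d=2$ baseline against which the later, genuinely tensorial estimates on $\alpha(\n,\F)$ and $\beta(\n,\F)$ are compared.
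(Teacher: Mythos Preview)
Your proof is correct and follows essentially the same route as the paper's own argument: reduce to the singular values, apply Cauchy--Schwarz to bound $\sum_i\sigma_i$ by $\sqrt{m}$, and use $1=\sum_i\sigma_i^2\le m\sigma_1^2$ to bound $\sigma_1$ below, with the equality cases giving $\sigma_i=1/\sqrt{m}$ for all $i$. Your additional remarks on field-independence and attainment are fine but not needed for the lemma as stated.
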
 
\begin{proof} Let $\sigma_1\ge \cdots\ge\sigma_m\ge 0$.  Then $1=\|T\|^2=\sum_{i=1}^m \sigma_i^2$.  Use Cauchy-Schwarz inequality to deduce that
$\|T\|_1^2=\left(\sum_{i=1}^m \sigma_i\right)^2\le m \left(\sum_{i=1}^m \sigma_i^2\right)=m$.  Equality holds if and only if all singular values of $T$ are $\frac{1}{\sqrt{m}}$.

Observe next that from the equality $1=\sum_{i=1}^m \sigma_i^2$ we deduce that $1\le m\sigma_1^2$.   Hence $\|T\|_{\infty}=\sigma_1\ge \frac{1}{\sqrt{m}}$.  Equality holds if and  only if all singular values of $T$ are $\frac{1}{\sqrt{m}}$.\qed 
\end{proof}

The following theorem generalizes the above lemma to tensors, $d\ge 3$.  Its first part is a simple consequence of the fact that the spectral and the nuclear norms are dual. 
\begin{theo}\label{relalphbet}  Let $d\ge 3$.  Then
\begin{equation}\label{relalphbet1}
\alpha(\n,\F)\beta(\n,\F)=1.
\end{equation}
Assume furthermore that $\cT\in\F^{\n}$ and $\|\cT\|=1$.  If either $\|\cT\|_{1,\F}=\alpha(\n,\F)$ or $\|\cT\|_{\infty,\F}=\beta(\n,\F)$ then
\begin{equation}\label{optcondT}
\|\cT\|_{1,\F}\|\cT\|_{\infty,\F}=\|\cT\|^2=1.  
\end{equation}
That is, $\|\cT\|_{1,\F}=\alpha(\n,\F)$ if and only if $\|\cT\|_{\infty,\F}=\beta(\n,\F)$.
\end{theo}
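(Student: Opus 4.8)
The plan is to deduce everything from the duality \eqref{dualcharspecnuc} between the spectral and nuclear norms, combined with Cauchy--Schwarz and the product inequality \eqref{prodineq}.

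\emph{First assertion $\alpha(\n,\F)\beta(\n,\F)=1$.} I would unfold the definition \eqref{charalpha}: since $\alpha(\n,\F)=\max\{\|\cT\|_{1,\F}:\|\cT\|=1\}$ and, by \eqref{dualcharspecnuc}, $\|\cT\|_{1,\F}=\max\{|\an{\cT,\cY}|:\|\cY\|_{\infty,\F}=1\}$, the quantity $\alpha(\n,\F)$ appears as a double maximum over the Hilbert--Schmidt unit sphere and over the spectral-norm unit sphere. Both sets are compact and $\an{\cdot,\cdot}$ is bilinear (respectively sesquilinear), so the order of maximization may be reversed; fixing $\cY$ first, $\max\{|\an{\cT,\cY}|:\|\cT\|=1\}=\|\cY\|$ by Cauchy--Schwarz, attained at $\cT=\cY/\|\cY\|$. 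Hence $\alpha(\n,\F)=\max\{\|\cY\|:\|\cY\|_{\infty,\F}=1\}$, and by homogeneity this equals $\big(\min\{\|\cY\|_{\infty,\F}:\|\cY\|=1\}\big)^{-1}=\beta(\n,\F)^{-1}$ using \eqref{charbeta}, which is \eqref{relalphbet1}. Conceptually, $\alpha$ and $1/\beta$ are the operator norms of the identity map between $(\F^{\n},\|\cdot\|)$ and $(\F^{\n},\|\cdot\|_{1,\F})$, respectively between $(\F^{\n},\|\cdot\|_{\infty,\F})$ and $(\F^{\n},\|\cdot\|)$; these agree because the two pairs of norms are dual and the Hilbert--Schmidt norm is self-dual.

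\emph{Second assertion.} Let $\|\cT\|=1$. If $\|\cT\|_{1,\F}=\alpha(\n,\F)$, choose by \eqref{dualcharspecnuc} a tensor $\cY^{*}$ with $\|\cY^{*}\|_{\infty,\F}=1$ and $|\an{\cT,\cY^{*}}|=\alpha(\n,\F)$. Cauchy--Schwarz gives $\|\cY^{*}\|\ge\alpha(\n,\F)$, while the identity from the first part gives $\|\cY^{*}\|\le\alpha(\n,\F)$; so equality holds in Cauchy--Schwarz, hence $\cY^{*}=c\,\cT$ with $|c|=\alpha(\n,\F)$, and therefore $1=\|\cY^{*}\|_{\infty,\F}=\alpha(\n,\F)\|\cT\|_{\infty,\F}$. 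This yields $\|\cT\|_{\infty,\F}=\beta(\n,\F)$ and $\|\cT\|_{1,\F}\|\cT\|_{\infty,\F}=\alpha(\n,\F)\beta(\n,\F)=1=\|\cT\|^{2}$, i.e.\ \eqref{optcondT}. Conversely, if $\|\cT\|_{\infty,\F}=\beta(\n,\F)$, then \eqref{prodineq} gives $1=\|\cT\|^{2}\le\|\cT\|_{1,\F}\beta(\n,\F)$, so $\|\cT\|_{1,\F}\ge\beta(\n,\F)^{-1}=\alpha(\n,\F)$; together with $\|\cT\|_{1,\F}\le\alpha(\n,\F)$ from \eqref{charalpha} we obtain $\|\cT\|_{1,\F}=\alpha(\n,\F)$ and, as above, \eqref{optcondT}. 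The stated equivalence in the last line of the theorem is immediate from these two implications.

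\emph{Where the work is.} The argument is essentially routine; the only points requiring care are the interchange of the two maxima and the attainment of all extrema (handled by compactness of the Hilbert--Schmidt sphere and of the unit spheres of the genuine norms $\|\cdot\|_{\infty,\F}$ and $\|\cdot\|_{1,\F}$) and, for $\F=\C$, the harmless phase normalizations that let the relevant inner products be treated as real and nonnegative. I do not anticipate a genuine obstacle: once $\alpha$ and $\beta$ are recognized as reciprocal operator norms of dual identity maps, both the value $\alpha\beta=1$ and the rigidity statement (the extremizers of $\|\cdot\|_{1,\F}$ and of $\|\cdot\|_{\infty,\F}$ on the Hilbert--Schmidt sphere coincide) fall out of the equality case of Cauchy--Schwarz.
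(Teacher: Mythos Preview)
Your argument is correct. For the identity $\alpha(\n,\F)\beta(\n,\F)=1$, your swap-of-maxima presentation is essentially the same as the paper's two one-sided duality bounds; both amount to recognizing $\alpha$ and $1/\beta$ as operator norms of dual identity maps.

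For the second assertion, however, your route differs genuinely from the paper's. The paper argues via a first-order variational/perturbation argument: starting from a maximizer $\cT^\star$ of $\|\cdot\|_{1,\F}$ on the Hilbert--Schmidt sphere, it perturbs $\cT^\star$ tangentially, shows that any supporting functional $\cS$ of the nuclear-norm ball at $\cT^\star$ must be orthogonal to every such perturbation, and concludes that $\|\cT^\star\|_1\,\cT^\star$ itself is a supporting functional, whence $\|\cT^\star\|_1\|\cT^\star\|_\infty=1$. Your argument is more direct and more elementary: you pick a dual witness $\cY^*$ for $\|\cT\|_{1,\F}$, pin its Hilbert--Schmidt norm between $\alpha$ (from Cauchy--Schwarz) and $\alpha$ (from the first part), and read off $\cY^*\parallel\cT$ from the equality case. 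The converse direction you handle via the product inequality \eqref{prodineq}, which the paper does not invoke (it simply says the remaining cases are ``proved similarly''). Your approach is shorter and avoids the supporting-hyperplane geometry entirely; the paper's approach, on the other hand, makes the convex-geometric content of the statement more visible. Both are valid.
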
  
\begin{proof}  The dual characterization of $\|\cT\|_{\infty,\F}$ \eqref{dualcharspecnuc} yields
\[\|\cT\|_{\infty,\F}=\max_{\cY\ne 0} \frac{|\an{\cT,\cY}|}{\|\cY\|_{1,\F}}\ge \max_{\cY\ne 0} \frac{|\an{\cT,\cY}|}{\alpha(\n,\F)\|\cY\|}=\frac{1}{\alpha(\n,\F)}\|\cT\|.\]
Hence $\beta(\n,\F)\ge \frac{1}{\alpha(\n,\F)}$.  The maximal characterization of $\|\cT\|_{1,\F}$ yields
\[\|\cT\|_{1,\F}=\max_{\cY\ne 0} \frac{|\an{\cT,\cY}|}{\|\cY\|_{\infty,\F}}\le \max_{\cY\ne 0} \frac{|\an{\cT,\cY}|}{\beta(\n,\F)\|\cY\|}=\frac{1}{\beta(\n,\F)}\|\cT\|.\]
Hence $\alpha(\n,\F)\le \frac{1}{\beta(\n,\F)}$.  This proves \eqref{relalphbet1}.  

We now prove the second part of the theorem.
Assume first the case $\F=\C$.  Let $\cT^{\star}\in\C^{\n}$ satisfy $\|\cT^{\star}\|=1$ and $\|\cT^{\star}\|_1=\alpha(\n,\C)$.  Assume that $\cB\in\C^{\n}$ and $\Re\left(\an{\cB,\cT^\star}\right)=0$.  Set $\cT(\varepsilon)=\cT^\star+\varepsilon \cB$. Here $\varepsilon$ 
is a small real number.  So $\|\cT(\varepsilon)\|=1+O(\varepsilon^2)$.  Assume that $\|\cS\|_{\infty}=1$ and $\an{\cT^\star,\cS}=\|\cT^\star\|_{1}$.  Hence
 \[\Re\left(\an{\frac{1}{\|\cT(\varepsilon)\|}\cT(\varepsilon),\cS}\right)\le \|\frac{1}{\|\cT(\varepsilon)\|}\cT(\varepsilon)\|_{1,\F}\le \alpha(n,\F).\]  
 From the maximality of $\|\cT^\star\|_{1,\F}$ it follows that $\Re\left(\an{\cS,\cB}\right)\le 0$.  By replacing $\cB$ by $-\cB$ we deduce that $\Re\left(\an{\cS,\cB}\right)=0$.  

Consider the hyperplane 
$\Re\left(\an{\cX,\|\cT^{\star}\|_1\cT^{\star}}\right)= \|\cT^{\star}\|_1$.  This hyperplane passes through $\cT^\star$.  Consider the balanced convex set $C:=\{\cX\in\C^{\n}, \|\cX\|_{1}\le \|\cT^\star\|_{1}\}$.  We claim that the hyperplane $\Re\left(\an{\cX,\|\cT^{\star}\|_1\cT^{\star}}\right)= \|\cT^{\star}\|_1$ supports this convex set at $\cT^{\star}$.  If not, there exists $\cB, \Re\left(\an{\cB,\cT^\star}\right)=0$ and $T^\star(\varepsilon)$ is in the interior of $C$ for each small positive $\varepsilon$.  Recall that a supporting hyperplane of $C$ at $\cT^{\star}$ is $\Re\left(\an{\cX,\cS}\right)\le \|\cT^{\star}\|_1$, for some $\cS\in\C^{\n}$, where $\|\cS\|_{\infty}=1$ and $\an{\cT^\star,\cS}=\|\cT^\star\|_{1}$.  As $\cT(\varepsilon)$ is in the interior of $C$ for small enough 
$\varepsilon$ it follows that $\Re\left(\an{\cB,\cS}\right)<0$.  This will contradict the previous observation.  Hence 
$\Re\left(\an{\cX,\|\cT^{\star}\|_1\cT^{\star}}\right)= \|\cT^{\star}\|_1$ is a supporting hyperplane to $C$ at $\cT^{\star}$. Therefore $1=\|\|\cT^{\star}\|_1\cT^{\star}\|_{\infty}=
\|\cT^{\star}\|_1\|\cT^{\star}\|_{\infty}$.

Other cases of the second part of the theorem are proved similarly.
\qed\end{proof}

Let $\n\in\N^d, \n'\in \N^{d'}$.  Then $\F^{\n}\otimes \F^{\n'}=\F^{\m}$, where $\m=(\n,\n')\in \N^{d+d'}$.  In what follows we will need the following lemma.
\begin{lemma}\label{multspecnucnorm}  Let $d,d'\in\N$ and assume that $\n\in\N^d, \n'\in \N^{d'}$.  Suppose that $\cT\in\F^{\n}, \cT'\in\F^{\n'}$.  Then
\begin{equation}\label{multspecnucnorm1}
\|\cT\otimes \cT'\|_{\infty,\F}=\|\cT\|_{\infty,\F}\|\cT'\|_{\infty,\F}, \quad \|\cT\otimes \cT'\|_{1,\F}=\|\cT\|_{1,\F}\|\cT'\|_{1,\F}.
\end{equation}
\end{lemma}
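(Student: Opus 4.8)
The plan is to derive both identities from one structural fact about the Segre variety together with the spectral–nuclear duality \eqref{dualcharspecnuc}. Write $\m=(\n,\n')\in\N^{d+d'}$, so that $\F^{\m}=\F^{\n}\otimes\F^{\n'}$. Every element of $\Pi^{\m}(\F)$ is a tensor product $\z_1\otimes\cdots\otimes\z_{d+d'}$ of unit vectors of the appropriate sizes; grouping the first $d$ and the last $d'$ factors shows that $\Pi^{\m}(\F)=\{\cX\otimes\cX':\cX\in\Pi^{\n}(\F),\ \cX'\in\Pi^{\n'}(\F)\}$, i.e.\ the tensor product of the two Segre varieties. Moreover $\an{\cT\otimes\cT',\cX\otimes\cX'}=\an{\cT,\cX}\,\an{\cT',\cX'}$, because $\overline{\cX\otimes\cX'}=\bar\cX\otimes\bar\cX'$ and the full contraction defining $\an{\cdot,\cdot}$ splits across the first $d$ and the last $d'$ modes. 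Hence the maximization in \eqref{defspecnrom} over $\Pi^{\m}(\F)$ separates into the product of the two individual maxima, giving $\|\cT\otimes\cT'\|_{\infty,\F}=\|\cT\|_{\infty,\F}\|\cT'\|_{\infty,\F}$.

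For the nuclear norm the inequality ``$\le$'' is immediate from the definition \eqref{nucnormdef}: given rank-one decompositions $\cT=\sum_{i=1}^r\otimes_{j=1}^d\x_{i,j}$ and $\cT'=\sum_{k=1}^{r'}\otimes_{j=1}^{d'}\y_{k,j}$, the tensor $\cT\otimes\cT'=\sum_{i,k}(\otimes_{j=1}^d\x_{i,j})\otimes(\otimes_{j=1}^{d'}\y_{k,j})$ is itself a rank-one decomposition whose cost equals $\big(\sum_i\prod_j\|\x_{i,j}\|\big)\big(\sum_k\prod_j\|\y_{k,j}\|\big)$; taking the infimum over both decompositions yields $\|\cT\otimes\cT'\|_{1,\F}\le\|\cT\|_{1,\F}\|\cT'\|_{1,\F}$.

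The reverse inequality ``$\ge$'' is where the spectral-norm identity does its work, via duality. Using \eqref{dualcharspecnuc}, choose $\cY\in\F^{\n}$ and $\cY'\in\F^{\n'}$ with $\|\cY\|_{\infty,\F}=\|\cY'\|_{\infty,\F}=1$, $|\an{\cT,\cY}|=\|\cT\|_{1,\F}$ and $|\an{\cT',\cY'}|=\|\cT'\|_{1,\F}$. By the identity just proved, $\|\cY\otimes\cY'\|_{\infty,\F}=1$, so \eqref{dualcharspecnuc} gives
\[\|\cT\otimes\cT'\|_{1,\F}\ge|\an{\cT\otimes\cT',\cY\otimes\cY'}|=|\an{\cT,\cY}|\,|\an{\cT',\cY'}|=\|\cT\|_{1,\F}\|\cT'\|_{1,\F},\]
which together with the previous paragraph completes the nuclear-norm identity.

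I do not expect a genuine obstacle. The one structural point is that the argument must be carried out in this order, since the nuclear-norm lower bound uses spectral-norm multiplicativity to certify that $\cY\otimes\cY'$ is an admissible dual vector; beyond that, one only has to check the bookkeeping that the relevant maxima and infima are over product sets and hence factor, and that the conjugations in $\an{\cdot,\cdot}$ behave correctly on tensor products. The same proof works verbatim for both $\F=\R$ and $\F=\C$.
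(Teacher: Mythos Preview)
Your proof is correct and follows essentially the same approach as the paper: the spectral-norm identity via the factorization $\Pi^{\m}(\F)=\Pi^{\n}(\F)\times\Pi^{\n'}(\F)$, the upper bound for the nuclear norm by combining rank-one decompositions, and the lower bound via the dual characterization \eqref{dualcharspecnuc} applied to a product test tensor of spectral norm one. Your explicit remark that the spectral-norm identity must come first (to certify $\|\cY\otimes\cY'\|_{\infty,\F}=1$) is a nice clarification of the logical order, but otherwise the arguments coincide.
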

\begin{proof}Let $\m=(\n,\n')$.  
Observe first that $\an{\cT\otimes\cT',\cX\otimes \cX'}=\an{\cT,\cX}\an{\cT',\cX'}$ for $\cX\in\F^{n},\cX'\in\F^{\n'}$.
Clearly $\Pi^{\m}(\F)=\Pi^{\n}(\F)\times \Pi^{\n'}(\F)$.  
Hence the first equality in \eqref{multspecnucnorm1} follows from the definition of $\|\cT\otimes \cT'\|_{\infty,\F}$.
We now prove the second equality. Note that a decomposition of $\cT$ and $\cT'$ to a sum of rank one tensors induces a decomposition of $\cT\otimes\cT'$
to a sum of rank one tensors:
\[\cT=\sum_{i=1}^ r \otimes_{j=1}^d \x_{j,i}, \; \cT'=\sum_{i'=1}^ {r'} \otimes_{j'=1}^d \x_{j',i'}', \;
\cT\otimes \cT'=\sum_{i,i'=1}^{r,r'}\left(\otimes_{j=1}^d \x_{j,i}\right)\otimes\left(\otimes_{j'=1}^d \x_{j',i'}'\right).\]
Clearly, 
\[\sum_{i,i'=1}^{r,r'}\|\left(\otimes_{j=1}^d \x_{j,i}\right)\otimes\left(\otimes_{j'=1}^d \x_{j',i'}'\right)\|=\left(\sum_{i=1}^ r \|\otimes_{j=1}^d \x_{j,i}\|\right)
\left(\sum_{i'=1}^ {r'} \|\otimes_{j'=1}^d \x_{j',i'}'\|\right).\]
Hence the minimal characterization of the nuclear norm \eqref{nucnormdef} yields the inequality $\|\cT\otimes\cT'\|_{1,\F}\le \|\cT\|_{1,\F} \|\cT'\|_{1,\F}\|$.
We now prove the opposite inequality.  Recall that
\[\|\cT\otimes\cT'\|_{1,\F}=\max\{|\an{\cT\otimes\cT',\cZ}|, \; \cZ\in \F^{\m}, \|\cZ\|_{\infty,\F}=1\}.\]
Consider the subset of all $\cZ\in\F^{\m}$ of spectral norm one, of the form $\cX\otimes \cX'$, where $\cX\in \F^{\n}, \cX'\in\F^{\n'}$ and $\|\cX\|_{\infty,\F}=\|\cX'\|_{\infty,\F}=1$.
Hence
\begin{eqnarray*}
&&\|\cT\otimes\cT'\|_{1,\F}\ge \\
&&\left(\max\{|\an{\cT,\cX}|, \|\cX\|_{\infty,\F}=1\}\right) \left(\max\{|\an{\cT',\cX'}|\}, \|\cX'\|_{\infty,\F}=1\right)=
\|\cT\|_{1,\F}\|\cT'\|_{1,\F}.
\end{eqnarray*}

\end{proof}

The value of $\beta(\n,\C)$ , and hence of $\alpha(\n,\C)$ is known for $\n=(2,2,2)$, see \S\ref{sec:maxentan34} .

\section{Symmetric tensors}\label{sec:symtens}
A tensor $\cS=[s_{i_1,\ldots,i_d}]\in\otimes^d\F^n$ is called symmetric if $s_{i_1,\ldots,i_d}=
s_{i_{\omega(1)},\ldots,i_{\omega(d)}}$ for every permutation $\omega:[d]\to[d]$.  
Denote by $\rS^d\F^n\subset \otimes^d\F^n$ the vector space of $d$-mode symmetric tensors on $\F^n$.
It is well known that $\dim \rS^d\F^n={n+d-1\choose d}$ \cite{FK16}.
In what follows we assume that $\cS$ is a symmetric tensor and $d\ge 2$, unless stated otherwise.  A tensor
$\cS\in\rS^d\F^n$ defines a unique homogeneous polynomial of degree $d$ in $n$ variables
\begin{equation}\label{defpolfx}
f(\x)=\cS\times\otimes^d\x=\sum_{0\le j_k\le d,k\in[n], j_1+\cdots +j_n=d} \frac{d!}{j_1!\cdots j_n!} f_{j_1,\ldots,j_n} x_1^{j_1}\cdots x_n^{j_n}.
\end{equation}
Conversely, a homogeneous polynomial $f(\x)$ of degree $d$ in $n$ variables defines a unique symmetric $\cS\in\rS^d\F^n$ by the following relation.
Consider the multiset $\{i_1,\ldots,i_d\}$, where each $i_l\in [n]$.  Let $j_k$ be the number of times the integer $k\in [n]$ appears in the multiset  $\{i_1,\ldots,i_d\}$.
Then $\cS_{i_1,\ldots,i_d}=f_{j_1,\ldots,j_n}$.  Furthermore
\begin{equation}\label{symtenhsnorm}
\|\cS\|^2=\sum_{0\le j_k\le d,k\in[n],j_1+\cdots + j_n=d}\frac{d!}{j_1!\cdots j_n!} |f_{j_1,\ldots,j_n}|^2,
\end{equation}
where $\cS_{i_1,\ldots,i_d}=f_{j_1,\ldots,j_n}$.  

The remarkable result of Banach \cite{Ban38} claims that the spectral norm of a symmetric
tensor can be computed as a maximum on the set of rank one symmetric tensors:
\begin{equation}\label{Banthm}
\|\cS\|_{\sigma,\F}=\max\{|\cS\times \otimes^d \x|, \; \x\in\Pi^{n}(\F), \textrm{ for } \cS\in\rS^d\F^n \}.
\end{equation}
This result was rediscovered several times since 1938.  In quantum information theory (QIT), for the case $\F=\C$, it appeared in \cite{Hubetall09}.  In mathematical literature, for the case $\F=\R$, it appeared in \cite{CHLZ12,Fri13}.  (Observe that  a natural generalization of Banach's theorem to partially symmetric tensors is given in \cite{Fri13}.)

The analog of Banach's theorem for the nuclear norm of symmetric tensors was stated in \cite{FLn16}.   Namely, for $\cS\in \rS^d\F^n$ we have the following minimal characterization
\begin{equation}\label{FLBthm}
\|\cS\|_{1,\F}=\min\{\sum_{i=1}^M \|\x_i\|^d, \;\cS=\sum_{i=1}^M \varepsilon_i\otimes^d \x_i, \;\x_i\in\F^n,\varepsilon_i\in\{1,-1\} \textrm{ for }i\in[r].\}
\end{equation}
We can assume that $\varepsilon_i=1$ unless $\F=\R$ and $d$ is even. 
Furthermore, we can assume that $r={n+d-1\choose d}$ for $\F=\R$   and $r=2{n+d-1\choose d}$ for $\F=\C$.

For symmetric tensors we can improve the inequalities \eqref{distortineq} to 
\[\frac{1}{\alpha'(n,d,\F)}\|\cS\|_{1,\F}\le \|\cS\|\le \frac{1}{\beta'(n,d,\F)}\|\cS\|_{\infty,\F}, \textrm{ for all }\cS\in \rS^d\F^n.\]
Here
\begin{eqnarray}\label{defalphapnd}
&&\alpha'(n,d,\F)=\max\{\|\cS\|_{1,\F}, \cS\in\rS^d\F^n,\|\cS\|=1\}, \\
&&\beta'(n,d,\F)=\min\{\|\cS\|_{\infty,\F}, \cS\in\rS^d\F^n,\|\cS\|=1\}.\label{defbetpnd}.
\end{eqnarray}

We state an analog of Theorem \ref{relalphbet} and Lemma \ref{alphbetdim2}.  The proof of this theorem is similar to Theorem \ref{relalphbet} and Lemma \ref{alphbetdim2}
and we leave it to the reader.
\begin{theo}\label{relalphbetsym}  Let $n,d\ge 2$ be integers.  Then 
\begin{enumerate}
\item $\alpha'(n,d,\F) \beta'(n,d,\F)=1$.
\item Assume that $\cS\in \rS^d\F^n$ and $\|\cS\|=1$.  Then $\|\cS\|_{1,\F}=\alpha'(n,d,\F)$ if and only if $\|\cS\|_{\infty,\F}=\beta'(n,d,\F)$.
\item $\alpha'(n,2,\F)=\sqrt{n},\; \beta(n,2,\F)=\frac{1}{\sqrt{n}}$.  
\item Assume that $S$ is an $n\times n$ complex valued symmetric matrix having Frobenius norm one, $\|S\|=1$.  Then $\|S\|_1=\alpha'(n,2,\F)$ if and only if 
$\sqrt{n}S$ is a unitary matrix.
\end{enumerate}
\end{theo}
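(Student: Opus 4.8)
The plan is to transcribe the proofs of Theorem~\ref{relalphbet} and Lemma~\ref{alphbetdim2} to the symmetric setting; the one genuinely new ingredient is that each mode permutation $\omega$ of $[d]$ acts by a linear map $U_\omega$ on $\otimes^d\F^n$ that fixes $\rS^d\F^n$ pointwise and is simultaneously an isometry for $\an{\cdot,\cdot}$, for $\|\cdot\|_{\infty,\F}$ and for $\|\cdot\|_{1,\F}$ (permuting modes neither changes the Hilbert--Schmidt norm, nor the Segre variety $\Pi^n(\F)$, nor a rank-one decomposition). Let $P\cY=\frac{1}{d!}\sum_\omega U_\omega\cY$ be the symmetrization (all modes have dimension $n$, so this is defined). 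Then $\|P\cY\|_{\infty,\F}\le\|\cY\|_{\infty,\F}$ and $\|P\cY\|_{1,\F}\le\|\cY\|_{1,\F}$ by the triangle inequality, while for symmetric $\cS$ one has $\an{\cS,\cY}=\an{U_\omega\cS,U_\omega\cY}=\an{\cS,U_\omega\cY}$, hence $\an{\cS,\cY}=\an{\cS,P\cY}$. Consequently, for $\cS\in\rS^d\F^n$ the maximizing $\cY$ in the dual characterization \eqref{dualcharspecnuc} of $\|\cS\|_{\infty,\F}$ and of $\|\cS\|_{1,\F}$ may be taken in $\rS^d\F^n$.

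For Part 1, let $\cS\in\rS^d\F^n$ with $\|\cS\|=1$. Using the symmetric restriction together with $\|\cY\|_{1,\F}\le\alpha'(n,d,\F)\|\cY\|$ on $\rS^d\F^n$ and Cauchy--Schwarz (attained at $\cY=\cS$),
\[\|\cS\|_{\infty,\F}=\max_{0\ne\cY\in\rS^d\F^n}\frac{|\an{\cS,\cY}|}{\|\cY\|_{1,\F}}\ \ge\ \frac{1}{\alpha'(n,d,\F)}\max_{0\ne\cY\in\rS^d\F^n}\frac{|\an{\cS,\cY}|}{\|\cY\|}=\frac{1}{\alpha'(n,d,\F)},\]
so $\beta'(n,d,\F)\ge1/\alpha'(n,d,\F)$; the symmetric form of the dual characterization of $\|\cS\|_{1,\F}$ together with $\|\cY\|\le\|\cY\|_{\infty,\F}/\beta'(n,d,\F)$ gives $\alpha'(n,d,\F)\le1/\beta'(n,d,\F)$ in the same way, which proves Part 1. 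For Part 2, assume in addition $\|\cS\|_{1,\F}=\alpha'(n,d,\F)$. By compactness of the unit ball of $\|\cdot\|_{\infty,\F}$ and the symmetric restriction, choose a symmetric $\cY_0$ with $\|\cY_0\|_{\infty,\F}=1$ and, after multiplying by a unit scalar, $\an{\cS,\cY_0}=\alpha'(n,d,\F)$. Then $\alpha'(n,d,\F)=\an{\cS,\cY_0}\le\|\cS\|\,\|\cY_0\|=\|\cY_0\|\le\|\cY_0\|_{\infty,\F}/\beta'(n,d,\F)=\alpha'(n,d,\F)$ by Part 1, so Cauchy--Schwarz is an equality, forcing $\cY_0=\alpha'(n,d,\F)\,\cS$; taking spectral norms gives $1=\alpha'(n,d,\F)\|\cS\|_{\infty,\F}$, i.e. $\|\cS\|_{\infty,\F}=\beta'(n,d,\F)$. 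Conversely, if $\|\cS\|_{\infty,\F}=\beta'(n,d,\F)$ then \eqref{prodineq} gives $1=\|\cS\|^2\le\|\cS\|_{1,\F}\beta'(n,d,\F)$, hence $\|\cS\|_{1,\F}\ge\alpha'(n,d,\F)$ and therefore $\|\cS\|_{1,\F}=\alpha'(n,d,\F)$.

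For Parts 3 and 4 we have $d=2$, so $\|\cdot\|_{1,\F}$ and $\|\cdot\|_{\infty,\F}$ are the usual matrix nuclear and spectral norms (sum of singular values, largest singular value), both over $\R$ and over $\C$. If $\sigma_1\ge\cdots\ge\sigma_n\ge0$ are the singular values of a symmetric $S$ with $\|S\|=1$, then $\sum_i\sigma_i^2=1$, so Cauchy--Schwarz yields $\|S\|_1=\sum_i\sigma_i\le\sqrt n$ with equality if and only if $\sigma_1=\cdots=\sigma_n=1/\sqrt n$; since $S=\frac{1}{\sqrt n}I_n$ is symmetric and attains this, $\alpha'(n,2,\F)=\sqrt n$, and $\beta'(n,2,\F)=1/\sqrt n$ by Part 1. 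Finally, for $\F=\C$, $\|S\|_1=\alpha'(n,2,\C)$ holds if and only if every singular value of $\sqrt n\,S$ equals $1$, which for a square matrix is equivalent to $\sqrt n\,S$ being a unitary matrix (the analogous statement for $\F=\R$ has $\sqrt n\,S$ orthogonal). This proves Parts 3 and 4.

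The theorem is essentially a transcription of Theorem~\ref{relalphbet} and Lemma~\ref{alphbetdim2}, so no serious obstacle is expected. The single point that is not completely automatic is the symmetrization step of the first paragraph: that $P$ is a simultaneous contraction for $\|\cdot\|_{\infty,\F}$ and $\|\cdot\|_{1,\F}$, equivalently that these two norms remain a dual pair when restricted to $\rS^d\F^n$. This is exactly what permits the duality estimates (Part 1) and the tight-Cauchy--Schwarz argument (Part 2) to be carried out inside the symmetric subspace with $\alpha',\beta'$ in place of $\alpha,\beta$. A minor caution in Parts 3--4 is that complex symmetric means $S=S\trans$ rather than Hermitian, but this is harmless once one uses only the elementary fact that a square matrix is unitary exactly when all of its singular values equal $1$.
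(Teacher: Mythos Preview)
Your proof is correct. The paper does not actually prove this theorem; it only remarks that the proof is similar to that of Theorem~\ref{relalphbet} and Lemma~\ref{alphbetdim2} and leaves it to the reader, so your write-up is exactly the kind of transcription intended.

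One point worth noting: for the forward implication in Part~2, the paper's non-symmetric argument (Theorem~\ref{relalphbet}) goes through a supporting-hyperplane/perturbation analysis of the nuclear-norm ball, whereas you instead pick a dual maximizer $\cY_0$ and force equality in Cauchy--Schwarz via the chain $\alpha'=\an{\cS,\cY_0}\le\|\cY_0\|\le\|\cY_0\|_{\infty,\F}/\beta'=\alpha'$. Your route is shorter and more transparent, and it works precisely because you first established that the dual maximizer may be taken in $\rS^d\F^n$ via the symmetrization contraction $P$. That symmetrization step (equivalently, that $\|\cdot\|_{1,\F}$ and $\|\cdot\|_{\infty,\F}$ remain a dual pair on the symmetric subspace) is the only substantive addition over the non-symmetric proof, and you handle it correctly; it is also consistent with Banach's theorem \eqref{Banthm} and its nuclear analogue \eqref{FLBthm} cited in the paper.
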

\section{Separability and nuclear norm}\label{sec:separab}

In quantum physics, a state is a normailized vector $\u\in\C^n$ of length one: $\|\u\|=1$.  Furthermore, the state $\zeta\u$ is identified with $\u$ for each $\zeta\in\C, |\zeta|=1$. 
Suppose that we have a number of sources that emit states $\u_i\in \C^{n}$, independently,  each with probability $p_i>0$ for $i\in [r]$.  The resulting physical system is called a mixed state.
A standard model of von J. Neumann and L. Landau associates the above mixed state with a density matrix $A=\sum_{i=1}^r p_i\u_i\u_i^*$ \cite{Fan57}.  Denote by $\rH_{n,+,1}\subset \C^{n\times n}$ the convex set of all nonnegative definite hermitian matrices with trace 1.  Thus, a state $\u\in \C^n$ iduces the rank one density matrix $\u\u^*$,
which is also called a pure state. Hence a density matrix is a convex combination of pure states.

A state $\cT\in \C^{\n}$ is called a $d$-partite state.  
A $d$-partite state is called a product state, or unentangled, if 
\[\cT=\otimes_{i=1}^d \x_i, \quad \x_i\in\C^{n_i}, \; \|\x_i\|=1,\; i\in[d].\]
Equivalently, $\cT$ is unentangled if the rank of $\cT$ is one.  It is easy, i.e. polynomially computable, to decide if $\cT$ is unentangled.

We now associate with a mixed state in $\C^{\n}$ the following density matrix.  Let $\m=(m_1,\ldots,m_{2d})=(\n,\n)\in \N^{2d}$.  That is 
$m_{j+d}=m_j=n_j$ for all $j\in[d]$.  We view $\C^{\m}$ as $\C^{\n}\otimes\C^{\n}=\C^{\n\times \n}$.  A mixed state in $\C^{\n}$ is represented by a density matrix in $\rH_{N(\n),+,1}$.  We identify $\rH_{N(\n),+,1}$ with $\rH_{\n,+,1}$.  A density matrix $A\in\rH_{\n,+,1}$ has entries $a_{(i_1,\ldots,i_d),(j_1,\ldots,j_d)}$ where $i_k,j_k\in[n_k]$ for $k\in[d]$.
The hermitian condition is $a_{(j_1,\ldots,j_d),(i_1,\ldots,i_d)}=\overline{a_{(i_1,\ldots,i_d),(j_1,\ldots,j_d)}}$.  Furthermore  $A$ is a nonnegative definite matrix with trace $1$: 
\[\sum_{i_1=\cdots=i_d=1}^{n_1,\ldots,n_d} a_{(i_1,\ldots,i_d),(i_1,\ldots,i_d)}=1.\]

Consider $2d$-mode tensors $\cB=[b_{i_,\ldots,i_{2d}}]\in \C^{\m}$.  Viewing $\cB$ as a matrix over $\C^{\n}$  we define the \textit{trace} as
\begin{equation}\label{deftrac4}
\operatorname{tr}(\cB) :=\sum\nolimits_{i_1=\dots=i_d=1}^{n_1,\dots,n_d}   b_{i_1, \ldots, i_d, i_1, \ldots ,i_d}.
\end{equation}
It is straightforward to see that
\begin{equation}\label{tracedectens}
\operatorname{tr}(\x_1 \otimes \dots \otimes \x_{2d}) =\prod\nolimits_{j=1}^d \x_{j+d}^\mathsf{T} \x_j 
\end{equation}
for every $\x_j,\x_{j+d}\in\mathbb{C}^{n_j}$, $j =1,\dots,d$.

We call $\cB$ a hermitian tensor if $b_{(j_1,\ldots,j_d),(i_1,\ldots,i_d)}=\overline{b_{(i_1,\ldots,i_d),(j_1,\ldots,j_d)}}$ for all indices.
Let $\H^{\n\times \n}\subset \mathbb{C}^{\n\times \n}$ be the real vector subspace of $2d$-hermitian tensors . A hermitian tensor $\cB\in\H^{\n\times \n}$ is nonnegative definite if the corresponding $N(\n)\times N(\n)$ hermitian matrix $B$ with entries $b_{(i_1,\ldots,i_d),(j_1,\ldots,j_d)}$ is nonnegative definite.  The convex set of nonnegative definite
hermitian tensors with trace $1$ are identified with density tensors on $\C^{\n}$, and denoted by $\H^{\n\times \n}_{+,1}$.  
With the density matrix $A$ as above we associate the density tensor $\cA=[a_{i_1,\ldots,i_{2d}}]\in\C^{\m}$.  

A state $\cT\in\C^{\n}$ induces the density tensor $\cA=\cT\otimes \overline{\cT}$, which we also call pure state.   
A density tensor $\cA$ is a convex combination of pure states.
A product  state $\cT=\otimes_{i=1}^d \x_i$ induces the pure product state $(\otimes_{i=1}^d \x_i)\otimes(\otimes_{i=1}^d \bar\x_i)$, which we also identify with $\otimes_{i=1}^d(\x_i\x_i^*)$, i.e., the tensor product of pure states.
A density tensor corresponding to a mixed state of product states is called {\it separable}.  
That is, $\cA\in \H^{\n\times\n}_{+,1}$ is separable if it is of the form
\begin{eqnarray}\label{defsepdenten}
&&\cA=\sum_{i=1}^r p_i(\otimes_{j=1}^d \x_{j,i})\otimes(\otimes_{j=1}^d\bar\x_{j,i}),\\ 
&&\x_{j,i}\in\C^{n_j}, \x^*_{j,i}\x_{j,i}=1, j\in[d], p_i\ge 0, i\in[r],\sum_{i=1}^r p_i=0.\notag
\end{eqnarray}
We denote by $\H^{\n\times \n}_{sep}\subset \H^{\n\times \n}_{+,1}$
the convex set of separable density tensors.   The following separability criterion was stated an proved in \cite{FL14} (unpublished):
\begin{lemma}\label{traceineqlem}
Let $\cA\in \mathbb{C}^{\n\times \n}$. Then
\begin{equation}\label{traceineq}
\lvert \operatorname{tr}( \cA) \rvert \le \|\cA\|_1,
\end{equation}
and equality holds if and only if $\cA=z\cB$ for some $z\in\mathbb{C}$ and $\cB\in \mathbb{H}^{\n\times\n}_{sep}$.
Assume furthermore that $\cA$ is a density tensor.  Then $\|\cA\|_1\ge 1$ and equality holds if and only if $\cA$ is separable.
\end{lemma}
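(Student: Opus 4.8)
The plan is to reduce everything to the behaviour of $\tr$ on rank-one $2d$-mode tensors together with the Cauchy--Schwarz inequality. For \eqref{traceineq}: by \eqref{tracedectens} one has $\tr(\otimes_{j=1}^{2d}\x_j)=\prod_{j=1}^d\x_{j+d}^{\mathsf T}\x_j$, hence $|\tr(\otimes_{j=1}^{2d}\x_j)|\le\prod_{j=1}^d\|\x_{j+d}\|\,\|\x_j\|=\prod_{j=1}^{2d}\|\x_j\|$ by Cauchy--Schwarz; applying this to each summand of an arbitrary decomposition $\cA=\sum_{i=1}^r\otimes_{j=1}^{2d}\x_{j,i}$ and using linearity of $\tr$ gives $|\tr(\cA)|\le\sum_{i=1}^r\prod_{j=1}^{2d}\|\x_{j,i}\|$, and minimizing over decompositions yields $|\tr(\cA)|\le\|\cA\|_1$. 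For the ``if'' half of the equality statement: if $\cA=z\cB$ with $\cB=\sum_i p_i(\otimes_j\x_{j,i})\otimes(\otimes_j\bar\x_{j,i})\in\H^{\n\times\n}_{sep}$, then \eqref{tracedectens} gives $\tr(\cB)=\sum_i p_i\prod_j\|\x_{j,i}\|^2=\sum_i p_i=1$, while absorbing each $p_i$ into one factor displays $\cB$ as a sum of rank-one tensors of total weight $\sum_i p_i=1$, so $\|\cB\|_1\le1$; together with \eqref{traceineq} this forces $\|\cB\|_1=|\tr(\cB)|=1$, hence $\|\cA\|_1=|z|\,\|\cB\|_1=|z|=|\tr(\cA)|$. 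In particular a separable density tensor has nuclear norm $1$.

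The substantive direction is the ``only if'' half. Assume $|\tr(\cA)|=\|\cA\|_1$; multiplying $\cA$ by a unimodular scalar I may assume $\tr(\cA)=\|\cA\|_1=:t\ge 0$ (this only alters the final $z$ by a unimodular factor). If $t=0$ then $\cA=0$ and any $\cB\in\H^{\n\times\n}_{sep}$ works with $z=0$. If $t>0$, I would fix a decomposition $\cA=\sum_{i=1}^r\otimes_{j=1}^{2d}\x_{j,i}$ attaining the minimum in \eqref{nucnormdef} (this minimum is attained: after rescaling the factors of each summand to have a common norm, the admissible decompositions form a compact set) and discard the zero summands. Then
\begin{align*}
t=\tr(\cA)=\sum_{i=1}^r\prod_{j=1}^d\x_{j+d,i}^{\mathsf T}\x_{j,i}&\le\sum_{i=1}^r\Bigl|\prod_{j=1}^d\x_{j+d,i}^{\mathsf T}\x_{j,i}\Bigr|\\
&\le\sum_{i=1}^r\prod_{j=1}^{2d}\|\x_{j,i}\|=t,
\end{align*}
so every intermediate inequality is an equality for every $i$. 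Equality forces, for each $i$ and $j$, $|\x_{j+d,i}^{\mathsf T}\x_{j,i}|=\|\x_{j+d,i}\|\,\|\x_{j,i}\|$ (a product of nonnegative reals, each factor bounded by the corresponding one on the right, can be equal only termwise), i.e. $\x_{j+d,i}=\mu_{j,i}\overline{\x_{j,i}}$ for some $\mu_{j,i}\in\C$; substituting, $\prod_{j=1}^d\x_{j+d,i}^{\mathsf T}\x_{j,i}=\bigl(\prod_{j=1}^d\mu_{j,i}\bigr)\prod_{j=1}^d\|\x_{j,i}\|^2$, which the first equality forces to be a positive real, so $q_i:=\prod_j\mu_{j,i}>0$. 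Hence each summand equals $q_i(\otimes_{j=1}^d\x_{j,i})\otimes(\otimes_{j=1}^d\overline{\x_{j,i}})=p_i(\otimes_{j=1}^d\y_{j,i})\otimes(\otimes_{j=1}^d\overline{\y_{j,i}})$ with $\y_{j,i}:=\x_{j,i}/\|\x_{j,i}\|$ and $p_i:=q_i\prod_j\|\x_{j,i}\|^2>0$; summing, $\cA=\sum_i p_i(\otimes_j\y_{j,i})\otimes(\otimes_j\overline{\y_{j,i}})$ with $\sum_i p_i=\tr(\cA)=t$, so $\cB:=t^{-1}\cA$ has the form \eqref{defsepdenten}, i.e. $\cB\in\H^{\n\times\n}_{sep}$ and $\cA=t\cB$ (hence $\cA=z\cB$ after undoing the initial rescaling).

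For the density-tensor statement: a density tensor has $\tr(\cA)=1$, so \eqref{traceineq} immediately gives $\|\cA\|_1\ge1$; and if $\|\cA\|_1=1=|\tr(\cA)|$ then the equality case gives $\cA=z\cB$ with $\cB\in\H^{\n\times\n}_{sep}$, whence $1=\tr(\cA)=z\tr(\cB)=z$, so $z=1$ and $\cA=\cB$ is separable, the converse being the remark already made. I expect the only genuine obstacle to be the ``only if'' direction: squeezing, out of the single scalar identity $\tr(\cA)=\|\cA\|_1$, both the mode-by-mode Cauchy--Schwarz rigidity $\x_{j+d,i}\parallel\overline{\x_{j,i}}$ and the positivity $q_i>0$, and then recognizing the outcome as exactly the prescribed separable form --- together with the (standard but not entirely trivial) fact that the infimum defining $\|\cA\|_1$ is attained.
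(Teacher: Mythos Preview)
Your proof is correct and follows essentially the same approach as the paper: take a nuclear-norm-attaining decomposition, apply \eqref{tracedectens} and Cauchy--Schwarz to each summand, use the triangle inequality, and then analyse the equality cases to force $\x_{j+d,i}\parallel\overline{\x_{j,i}}$ and the resulting separable structure. Your write-up is in fact slightly more careful than the paper's (you explicitly address attainment of the minimum, the $t=0$ case, and the normalization via a unimodular scalar), but the argument is the same.
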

\begin{proof}
Let $\cA=\sum\nolimits_{i=1}^{r} \x_{1,i} \otimes \dots \otimes \x_{2d,i}$, where $\|\cA\|_1=\sum\nolimits_{i=1}^{r}\prod\nolimits_{j=1}^{2d}\|\x_{j,i}\|>0$.
In view of \eqref{tracedectens},
$\operatorname{tr}(\cA)= \sum\nolimits_{i=1}^r\prod\nolimits_{j=1}^d(\x_{j+d,i}^\mathsf{T} \x_{j,i} )$.
The Cauchy--Schwarz inequality yields that $|\x_{j+d,i}^\mathsf{T} \|\x_{j,i}|\le \x_{j,i}\| \|\x_{j+d,i}\|$.
Equality holds if and only if $\x_{j+d,i}=z_{j,i}\bar \x_{j,i}$ for some $z_{j,i}\in\mathbb{C}$.  Thus
\[
|\operatorname{tr}\cA|\le  \sum\nolimits_{i=1}^r \left| \prod\nolimits_{j=1}^d(\x_{j+d,i}^\mathsf{T} \x_{j,i} ) \right| \le \sum\nolimits_{i=1}^r \prod\nolimits_{j=1}^{2d} \|\x_{j,i}\|=\|\cA\|_1.
\]
This establishes \eqref{traceineq}.  Suppose that equality holds in \eqref{traceineq}.  Then $\x_{1,i} \otimes \dots \otimes \x_{2d,i}$ is of the form $z_{j,i}(\x_{1,i}\x_{1,i}^*) \otimes \dots \otimes (\x_{d,i}\x_{d,i}^*)$.
Observe that
\[
\operatorname{tr}\bigl( z_{j,i}(\x_{1,i}\x_{1,i}^*) \otimes \dots \otimes (\x_{d,i}\x_{d,i}^*)\bigr)=z_{j,i} \prod\nolimits_{j=1}^d \|\x_{j,i}\|^2.
\]
Without loss of generality we may assume  that $\|\x_{j,i}\|=1$
for $j=1,\dots,d$.  Since equality holds in the triangle inequality it follows that all $z_{j,i}$ must have the same arguments.  Hence $A=z B$ where 
\begin{equation}\label{decBsep}
\cB=\sum\nolimits_{i=1}^r t_i (\x_{1,i}\x_{1,i}^*) \otimes \dots \otimes (\x_{d,i}\x_{d,i}^*),
\end{equation}
where
$\x_{j,i}^*\x_{j,i}=1$ for $ j=1,\dots,d$,  and $ \sum\nolimits_{i=1}^r t_i=1$,  $t_i\ge 0$, for $i =1,\dots,r$.

Conversely, suppose $\cB$ is separable.  Hence $\cB$ is of the above form.  Therefore
\[
\|\cB\|_1\le  \sum\nolimits_{i=1}^r t_i \prod\nolimits_{j=1}^d \|\x_{j,i}\|^2=1.
\]
Clearly, $\operatorname{tr}(\cB)=1$.  In view of \eqref{traceineq}, it follows that $\|\cB\|_1=1$.  Hence a decomposition \eqref{decBsep} of $B$ is minimal
with respect to the nuclear norm.

Assume now that $\cA$ is a density tensor.  Then $\tr(\cA)=1$ and \eqref{traceineq} yields that $\|\cA\|_1\ge 1$.  The above arguments show that $\|\cA\|_1=1$ if and only if
$\cA$ is separable.\qed 
 \end{proof}
 
For $d=2$ this result is due to \cite{Ru02}.
We will use the following hardness result from \cite{Gu02} (see also \cite{Ga08}).
\begin{theo}[Gurvits]
Deciding whether a given density tensor is bipartite separable ($d=2$), is an NP-hard problem.
\end{theo}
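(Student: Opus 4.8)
This is a theorem of Gurvits \cite{Gu02}, which we do not reprove; we only indicate how one would argue. The plan is a polynomial‑time reduction from an NP‑complete graph problem, routed through the entanglement‑witness description of $\H^{\n\times\n}_{sep}$. Write $\n=(n,n)$ and, for a Hermitian tensor $M\in\C^{\n\times\n}$, put
\[
h(M)=\max\{\an{M,\x\x^*\otimes\y\y^*}\;:\;\x,\y\in\C^n,\ \|\x\|=\|\y\|=1\}.
\]
Since $\H^{\n\times\n}_{sep}$ is a closed convex set and, by \eqref{defsepdenten}, is the convex hull of the pure product states $\x\x^*\otimes\y\y^*$, the Hahn--Banach separation theorem gives the Horodecki witness criterion: a density tensor $\cA$ lies in $\H^{\n\times\n}_{sep}$ if and only if $\an{M,\cA}\le 1$ for every Hermitian $M$ with $h(M)\le 1$ (equivalently, $\tr(W\cA)\ge0$ for every Hermitian $W$ nonnegative on all product states). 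By the Grötschel--Lovász--Schrijver equivalence of weak membership, weak validity and weak optimization for well‑bounded convex bodies ($\H^{\n\times\n}_{sep}$ is well‑bounded: it contains a ball around the maximally mixed density tensor inside its affine hull), it suffices to prove that weakly deciding $h(M)\ge t$ is NP‑hard.

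To that end I would encode a graph $G$ on $n$ vertices into a Hermitian $M_G\in\C^{\n\times\n}$ whose entries in the computational basis are nonnegative, assembled from rank‑one terms indexed by the vertices and the (non)edges of $G$, so that $h(M_G)$ is a strictly increasing function of the clique number $\omega(G)$. The nonnegativity of the entries of $M_G$ is what makes the analysis tractable: by a Perron--Frobenius argument the optimum $h(M_G)$ over all complex unit product vectors is attained at vectors with nonnegative real amplitudes, and then $\an{M_G,\x\x^*\otimes\y\y^*}$ becomes a biquadratic form in two vectors of squared amplitudes which, at the optimum, collapses to a quadratic optimization over the simplex of Motzkin--Straus type (or, as in Gurvits's original argument, to a closely related NP‑hard quadratic feasibility question). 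Since deciding $\omega(G)\ge k$ is NP‑complete, deciding $h(M_G)\ge t_k$ is NP‑hard, with a fixed multiplicative gap between the YES and NO cases.

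Finally I would convert this into a separability decision: from $(G,k)$ one manufactures a density tensor $\cA_{G,k}$ — a suitably normalized, identity‑shifted nonnegative combination of the same rank‑one building blocks $\x\x^*\otimes\y\y^*$ — such that $\cA_{G,k}\in\H^{\n\times\n}_{sep}$ if and only if $\omega(G)<k$. The forward direction is the substance: a $k$‑clique yields, through the Motzkin--Straus identity, a product‑state value that beats the natural entanglement witness attached to $\cA_{G,k}$, certifying $\cA_{G,k}\notin\H^{\n\times\n}_{sep}$; the converse exhibits an explicit separable decomposition of the form \eqref{defsepdenten} when no $k$‑clique exists. Propagating the clique gap $\omega(G)\ge k$ versus $\omega(G)\le k-1$ through the construction yields two density tensors that are respectively separable and a polynomially bounded trace distance from $\H^{\n\times\n}_{sep}$, so the weak separability decision — hence, a fortiori, the exact one — is NP‑hard.

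The main obstacle is precisely the middle step: designing $M_G$ (and then $\cA_{G,k}$) so that the product‑state optimization provably reproduces a graph invariant, and verifying both implications, the hard one requiring the Motzkin--Straus lower bound together with the reduction to nonnegative amplitudes, the easy one requiring an explicit separable decomposition. A secondary, purely technical obstacle is the quantitative bookkeeping: one must turn the combinatorial gap into a polynomially large lower bound on $\dist(\cA_{G,k},\H^{\n\times\n}_{sep})$ while keeping the normalizing and shifting constants of polynomial bit‑size, and must replace the qualitative Hahn--Banach step by the effective Grötschel--Lovász--Schrijver separation estimates. (A tempting shortcut — invoking NP‑hardness of the nuclear norm of $4$‑mode tensors, since $\cA$ is separable iff $\|\cA\|_1=1$ by Lemma \ref{traceineqlem} — does not work directly, because the known hard instances for the nuclear norm are not Hermitian, positive, trace‑one tensors of the special shape $(\n,\n)$.)
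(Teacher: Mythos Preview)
The paper does not prove this theorem at all: it is stated as a result of Gurvits and simply cited from \cite{Gu02} (see also \cite{Ga08}), with no argument given. So there is no ``paper's own proof'' to compare against; the statement functions purely as a black box that feeds into the corollary on NP-hardness of the nuclear norm via Lemma~\ref{traceineqlem}.

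Your proposal goes well beyond what the paper does, offering a sketch of Gurvits's argument. The outline you give --- witness duality, the Gr\"otschel--Lov\'asz--Schrijver equivalence, and a reduction from an NP-complete graph problem via a Motzkin--Straus-type optimization over product states --- is broadly faithful to the spirit of the original proof and to the subsequent strengthening in \cite{Ga08}. You are also right to flag the two genuine technical hurdles (the design of $M_G$/$\cA_{G,k}$ with a provable gap, and the quantitative bookkeeping needed to pass from weak to exact membership), and your parenthetical remark that the nuclear-norm shortcut is circular is correct. Since the paper treats the theorem as a citation, your sketch is more than sufficient for the purposes of this paper; if anything, a single sentence citing \cite{Gu02,Ga08} would match the paper exactly.
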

From this and Lemma~\ref{traceineqlem}, we immediately deduce the hardness result for tensor nuclear norm \cite{FLn16}:
\begin{corol}
Deciding whether a given $4$-tensor is in the nuclear norm unit ball is an NP-hard problem.
\end{corol}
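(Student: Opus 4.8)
The plan is to obtain the corollary as a direct reduction from Gurvits's theorem via Lemma~\ref{traceineqlem}. Recall that Gurvits's theorem asserts NP-hardness of deciding whether a given density tensor with $d=2$ (equivalently, a density matrix on $\C^{n_1}\otimes\C^{n_2}$, viewed as a $4$-tensor $\cA\in\C^{\n\times\n}$ with $\n=(n_1,n_2)$) is bipartite separable. The second half of Lemma~\ref{traceineqlem} says that for a density tensor $\cA$ one has $\|\cA\|_1\ge 1$, with equality precisely when $\cA$ is separable. Since a density tensor always satisfies $\operatorname{tr}(\cA)=1$, membership of $\cA$ in the unit ball of the nuclear norm, i.e.\ $\|\cA\|_1\le 1$, is equivalent to $\|\cA\|_1=1$, which by the lemma is equivalent to separability of $\cA$.

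Concretely, I would argue as follows. Given an instance $\cA$ of the bipartite-separability problem (a $4$-tensor that is promised to be a density tensor, so hermitian, nonnegative definite, trace one), feed the \emph{same} tensor $\cA$ to the ``is $\|\cA\|_1\le 1$?'' oracle. If the oracle says yes, then $\|\cA\|_1\le 1$; combined with $\|\cA\|_1\ge 1$ from Lemma~\ref{traceineqlem} we get $\|\cA\|_1=1$, hence $\cA$ is separable. If the oracle says no, then $\|\cA\|_1>1$, so by the lemma $\cA$ is not separable. Thus the oracle decides bipartite separability, and the reduction is trivially polynomial time (it is the identity map on the input). Since deciding bipartite separability is NP-hard by Gurvits's theorem, deciding whether a $4$-tensor lies in the nuclear norm unit ball is NP-hard as well. \qed

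There is essentially no obstacle here; the only point requiring a word of care is that the hardness instances produced by Gurvits are genuine density tensors (hermitian, PSD, trace one), so the equivalence ``$\|\cA\|_1\le 1 \iff \operatorname{tr}(\cA)=1$ and separable'' from Lemma~\ref{traceineqlem} applies verbatim to them, and that the general-position quantifier ``a given $4$-tensor'' in the corollary only makes the problem harder, not easier, than its restriction to density tensors. Hence the proof is a one-line reduction and no further calculation is needed.
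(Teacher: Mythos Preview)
Your proposal is correct and is exactly the argument the paper intends: the corollary is stated as an immediate consequence of Gurvits's theorem together with Lemma~\ref{traceineqlem}, via the reduction you describe (a density tensor is separable iff $\|\cA\|_1=1$ iff $\|\cA\|_1\le 1$). There is no additional content in the paper's proof beyond this one-line reduction.
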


A simple, (polynomially computable), necessary condition for separability of density tensors, is the positivity of the partial transpose \cite{Per96}.
For bipartite density tensors it is also sufficient if and only if 
$\n=(2,2), \n=(2,3), \n=(3,2)$ \cite{Hor96}.
\section{Bisymmetric density tensors}\label{sec:bisymdt}
Assume that $n_1=\cdots=n_d=n$  and denote $n^{\times d}=(n,\ldots,n)\in\N^d$.  A hermitian tensor $\cA=[a_{i_1,\ldots,i_d,i_{d+1},\ldots,i_{2d}}]\in 
\H^{n^{\times d}\times n^{\times d}}$ is called bisymmetric if it is symmetric with respect to the $d$ indices $(i_1,\ldots,i_d)$ and  $(i_{d+1},\ldots,i_{2d})$ (separately).  Denote by $\H^{n^{\times d}\times n^{\times d}}_{bsym}$ the space of hermitian symmetric tensors.   
We first observe the following result. 
\begin{lemma}\label{specdecsymdenten}  Assume that $n,d\ge 2$ are integers.  Then 
\begin{equation}\label{dimhersymten}
\dim \H^{n^{\times d}\times n^{\times d}}_{bsym}={n+d-1\choose d}^2.
\end{equation}
Furthermore, a hermitian tensor $\cA\in \H^{n^{\times d}\times n^{\times d}}$ is bisymmetric if and only it has the spectral decomposition.
\begin{equation}\label{specdecsymherten}
\cA=\sum_{i=1}^{{n+d-1\choose d}} \lambda_i \cT_i\otimes\overline{\cT}_i, \quad
\cT_i\in \rS^d\C^n, \an{\cT_i,\cT_j}=\delta_{ij}, \lambda_i\in\R, i,j\in[{n+d-1\choose d}].
\end{equation}
\end{lemma}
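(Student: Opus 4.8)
The plan is to reduce everything to the classical spectral theorem for Hermitian matrices via the identification of $\rS^d\C^n$ with a subspace of $\C^{n^{\times d}}$, and then to track the symmetry condition through that identification. First I would compute the dimension. The space $\H^{n^{\times d}\times n^{\times d}}$ is identified with the real vector space of $N\times N$ Hermitian matrices, where $N=N(n^{\times d})=n^d$; imposing bisymmetry means restricting the row index and the column index (independently) to lie in the subspace $\rS^d\C^n\subseteq\C^{n^{\times d}}$, which has complex dimension $K:={n+d-1\choose d}$. Concretely, let $P\colon\C^{n^{\times d}}\to\rS^d\C^n$ be the orthogonal projection (symmetrization) onto the symmetric subspace; then $\cA$ is bisymmetric if and only if, writing $A$ for the associated $N\times N$ Hermitian matrix, we have $A=PAP$, i.e. $A$ is a Hermitian operator supported on the $K$-dimensional complex subspace $\rS^d\C^n$. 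The space of Hermitian operators on a $K$-dimensional complex Hilbert space is a real vector space of dimension $K^2$, which gives \eqref{dimhersymten}.

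For the second part, pick an orthonormal basis $\cT_1,\dots,\cT_K$ of $\rS^d\C^n$ (orthonormal with respect to the inner product $\an{\cdot,\cdot}$ on $\C^{n^{\times d}}$, which restricts to the Hilbert–Schmidt norm \eqref{symtenhsnorm} on symmetric tensors). If $\cA$ is bisymmetric, then the associated Hermitian matrix $A$ acts on $\C^{n^{\times d}}$ with range and support inside $\span\{\cT_1,\dots,\cT_K\}$; restricting $A$ to this subspace gives a Hermitian operator on a $K$-dimensional complex space, so by the spectral theorem it has an orthonormal eigenbasis. That is, there is an orthonormal basis $\cT_1',\dots,\cT_K'$ of $\rS^d\C^n$ and real numbers $\lambda_1,\dots,\lambda_K$ with $A=\sum_{i=1}^K\lambda_i \cT_i'(\cT_i')^*$ as operators; translating the rank-one operator $\cT_i'(\cT_i')^*$ back into tensor language using the identification $\x\x^*\leftrightarrow \x\otimes\bar\x$ (the same identification used in \S\ref{sec:separab}, cf. \eqref{tracedectens}), this is exactly \eqref{specdecsymherten}. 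Conversely, each summand $\cT_i\otimes\overline{\cT_i}$ with $\cT_i\in\rS^d\C^n$ is manifestly bisymmetric (symmetric separately in the first $d$ and the last $d$ indices) and Hermitian, hence any real linear combination is a bisymmetric Hermitian tensor.

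The main obstacle — really the only point needing care rather than bookkeeping — is to check cleanly that the bisymmetry condition on $\cA$ is equivalent to the operator identity $A=PAP$ with $P$ the symmetrizer, and that the $\an{\cT_i,\cT_j}=\delta_{ij}$ normalization used inside $\C^{n^{\times d}}$ is the correct one (rather than, say, an orthogonality with respect to the coefficient inner product on polynomials, which differs by the multinomial weights in \eqref{symtenhsnorm}); once that is pinned down, the spectral theorem does all the work. One should also note that $\cA$ need not be nonnegative definite or trace one here — the lemma is about the real subspace $\H^{n^{\times d}\times n^{\times d}}_{bsym}$, not about density tensors — so the $\lambda_i$ are arbitrary reals, exactly as stated.
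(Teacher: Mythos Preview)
Your proposal is correct and follows essentially the same line as the paper. The paper's proof takes the spectral decomposition of $\cA$ as an $n^d\times n^d$ Hermitian matrix and then observes that each eigenvector $\cX_i$ with $\mu_i\ne 0$ satisfies $\cX_i=\mu_i^{-1}\cA\times\cX_i\in\rS^d\C^n$ by bisymmetry in the first $d$ indices; you instead phrase the same reduction via the projector identity $A=PAP$ before invoking the spectral theorem on the $K$-dimensional subspace, which is a cleaner way to say the same thing and also makes the dimension count $K^2$ more explicit than the paper's one-line index count.
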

\begin{proof}  Equality \eqref{dimhersymten} follows from counting the set of indices $(i_1,\ldots,i_d)$, invariant under the action the symmetric group on $[d]$.
(Just consider the indices $(i_1,\ldots,i_d)$ where $1\le i_1\le \cdots\le i_d\le n$.)
  
Clearly, a tensor $\cA$ of the form \eqref{specdecsymherten} is a hermitian bisymmetric tensor.  
Consider a spectral decomposition of a hermitian bisymmetric tensor $\cB$, as a Hermitian matrix.  
\[\cB=\sum_{i=1}^{n^d} \mu_i \cX_i\otimes \overline{\cX}_i, \quad \cX_i\in \C^{n^{\times d}}, \an{\cX_i,\cX_j}=\delta_{ij}, \mu_i\in\R, i,j\in [n^d].\]
Assume that $\mu_i\ne 0$.  Then $\cX_i=\mu_i^{-1}\cB\times \cX_i$. Hence $\cX_i\in \rS^d\C^{n}$. In view of \eqref{dimhersymten} we can have at most
$n+d-1\choose d$ orthonormal vectors in $\rS^d\C^n$.  Therefore $\cB$ has representation \eqref{specdecsymherten}.\qed
\end{proof}

Clearly, the real space of hermitian bisymmetric states is a real subspace of $\rS^d\C^n\otimes\rS^d\C^n$, the subspace of $\otimes^{2d}\C^n$ tensors which are bisymmetric, i.e. symmetric in the first $d$ and the last $d$ inidices.  We first consider the restricition of the spectral and the nuclear norms on $\rS^d\C^n\otimes\rS^d\C^n$.  
\begin{theo}\label{bisymten} Let $\cC\in \rS^d\C^n\otimes\rS^d\C^n$.  Then
\begin{eqnarray}\label{bisymspecnrm}
&&\|\cC\|_{\infty}=\max\{\Re\left(\an{\cC,(\otimes^d\x)\otimes(\otimes^d\y)}\right), \x,\y\in\C^n, \|\x\|=\|\y\|=1\}\\
&&\|\cC\|_1=\min\{\sum_{i=1}^M \|\x_i\|^d\|\y_i\|^d, \;\cC=\sum_{i=1}^M(\otimes^d\x_i)\otimes(\otimes^d\y_i)\}.\label{bisymnucnrm}
\end{eqnarray}
Suppose that $\cA\in \H^{n^{\times d}\times n^{\times d}}_{bsym}$.  Then
\begin{eqnarray}\label{hbisymspecnrm}
\|\cA\|_{\infty}=\max\{\Re\left(\an{\cA,\frac{1}{2}((\otimes^d\x)\otimes(\otimes^d\y)+(\otimes^d\bar\y)\otimes(\otimes^d\bar\x))}\right), \x,\y\in\C^n, \|\x\|=\|\y\|=1\}\\
\|\cA\|_1=\min\{\sum_{i=1}^M \|\x_i\|^d\|\y_i\|^d, \;\cA=\sum_{i=1}^M\frac{1}{2}((\otimes^d\x_i)\otimes(\otimes^d\y_i)+(\otimes^d\bar\y)\otimes(\otimes^d\bar\x))\}.\label{hbisymnucnrm}
\end{eqnarray}
In particular, the set of the extreme points of the restriction of the nuclear norm to $\H^{n^{\times d}\times n^{\times d}}_{bsym}$ is of the form
\begin{equation}\label{extpthbysimnuc}
\frac{1}{2}((\otimes^d\x)\otimes(\otimes^d\y)+(\otimes^d\bar\y)\otimes(\otimes^d\bar\x)), \quad \x,\y\in\C^n, \|\x\|=\|\y\|=1.
\end{equation}
\end{theo}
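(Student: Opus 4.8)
The plan is to reduce everything to the two Banach-type theorems already available: Banach's theorem \eqref{Banthm} for the spectral norm of symmetric tensors, and its nuclear-norm analog \eqref{FLBthm}. The key structural observation is that $\rS^d\C^n\otimes\rS^d\C^n$ is itself a space of partially symmetric tensors — symmetric in the first $d$ slots and (separately) in the last $d$ slots — so the appropriate generalization of Banach's theorem to partially symmetric tensors (attributed in the excerpt to Friedland \cite{Fri13}) applies directly. First I would invoke that generalization to get \eqref{bisymspecnrm}: the spectral norm of $\cC\in\rS^d\C^n\otimes\rS^d\C^n$ is attained on rank-one tensors of the special partially symmetric form $(\otimes^d\x)\otimes(\otimes^d\y)$, and the reduction from $|\an{\cC,\cdot}|$ to $\Re\an{\cC,\cdot}$ is the standard phase-absorption argument already used for $\|\cT\|_\infty$ in \S\ref{sec:specnucnorm} (replace $\x$ by $e^{i\theta}\x$). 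Then \eqref{bisymnucnrm} follows by duality: the nuclear norm restricted to the subspace $\rS^d\C^n\otimes\rS^d\C^n$ is the dual norm of the spectral norm restricted to that same subspace, and since the unit ball of the latter is the balanced convex hull of the partially symmetric rank-one tensors in \eqref{bisymspecnrm}, the dual-norm characterization \eqref{dualcharspecnuc} together with the general principle ``the extreme points of the nuclear unit ball are the unit spectral-norm rank-ones'' yields exactly the minimal representation \eqref{bisymnucnrm}. (Concretely: any $\cC$ in the subspace has a decomposition into partially symmetric rank-ones achieving the claimed minimum, because the subspace is spanned by such tensors, and the dual inequality shows this minimum cannot be beaten.)

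Next I would treat the hermitian (bisymmetric) case. A tensor $\cA\in\H^{n^{\times d}\times n^{\times d}}_{bsym}$ lies in $\rS^d\C^n\otimes\rS^d\C^n$ but is in addition hermitian, i.e.\ invariant under the conjugate-transpose involution $\cB\mapsto\cB^\dagger$ that swaps the first $d$ and last $d$ indices and conjugates entries. The formulas \eqref{hbisymspecnrm} and \eqref{hbisymnucnrm} are simply \eqref{bisymspecnrm} and \eqref{bisymnucnrm} symmetrized under this involution: a maximizer $(\otimes^d\x)\otimes(\otimes^d\y)$ of $\Re\an{\cA,\cdot}$ can be replaced by the hermitian average $\tfrac12\big((\otimes^d\x)\otimes(\otimes^d\y)+(\otimes^d\bar\y)\otimes(\otimes^d\bar\x)\big)$ without changing the value of $\Re\an{\cA,\cdot}$, because $\an{\cA,\cB^\dagger}=\overline{\an{\cA,\cB}}$ when $\cA$ is hermitian, so the real part is unchanged; and $\tfrac12(\cB+\cB^\dagger)$ still has Hilbert–Schmidt norm (hence spectral norm, by the rank-one structure) at most that of $\cB$. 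Symmetrically, given any decomposition $\cA=\sum_i(\otimes^d\x_i)\otimes(\otimes^d\y_i)$ realizing the nuclear norm, averaging it with its hermitian conjugate $\cA=\cA^\dagger=\sum_i(\otimes^d\bar\y_i)\otimes(\otimes^d\bar\x_i)$ produces a decomposition of the form in \eqref{hbisymnucnrm} with the same sum of term-norms (the norms $\|\x_i\|^d\|\y_i\|^d$ are conjugation-invariant, and the averaging does not increase the total by the triangle inequality), so the minimum over hermitian-form decompositions equals $\|\cA\|_1$. Finally, \eqref{extpthbysimnuc} is read off \eqref{hbisymnucnrm}: the unit ball of the restricted nuclear norm is the closed convex hull of the tensors $\tfrac12\big((\otimes^d\x)\otimes(\otimes^d\y)+(\otimes^d\bar\y)\otimes(\otimes^d\bar\x)\big)$ with $\|\x\|=\|\y\|=1$, and since these are exactly the unit-spectral-norm points that appear as optimizers, a Minkowski/Krein–Milman argument identifies them as the extreme points.

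I expect the main obstacle to be the careful bookkeeping in the hermitian averaging step for the nuclear norm: one must check that $\tfrac12(\cB+\cB^\dagger)$ really does decompose as in \eqref{hbisymnucnrm} with controlled term-norms, and in particular that averaging a minimal decomposition with its conjugate does not increase $\sum_i\|\x_i\|^d\|\y_i\|^d$ — this uses that $\cA$ is already hermitian so $\cA=\tfrac12(\cA+\cA^\dagger)$, pairing the $i$-th term of one decomposition with the $i$-th term of the conjugate decomposition, and invoking convexity. A secondary subtlety is making sure that ``restricted nuclear norm = dual of restricted spectral norm'' is legitimate on the real subspace $\H^{n^{\times d}\times n^{\times d}}_{bsym}$ (rather than the complex space) — but this is precisely the content of the duality setup in \S\ref{sec:specnucnorm} applied to a real inner-product subspace, and Lemma~\ref{specdecsymdenten} guarantees the subspace is spanned by the relevant rank-one-type tensors, so no element is left without an admissible decomposition. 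Everything else is routine once \eqref{Banthm}, its partially symmetric extension, and \eqref{FLBthm} are in hand.
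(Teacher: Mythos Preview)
Your proposal is correct and follows essentially the same route as the paper: invoke the partially symmetric Banach theorem for \eqref{bisymspecnrm}, the nuclear-norm analog from \cite{FLn16} for \eqref{bisymnucnrm}, and then pass to the hermitian case by the identity $\Re\an{\cA,(\otimes^d\x)\otimes(\otimes^d\y)}=\Re\an{\cA,(\otimes^d\bar\y)\otimes(\otimes^d\bar\x)}$, which is exactly what the paper records. The only cosmetic difference is ordering: the paper derives the extreme-point description \eqref{extpthbysimnuc} from the spectral characterization \eqref{hbisymspecnrm} (via duality) and then reads off \eqref{hbisymnucnrm}, whereas you obtain \eqref{hbisymnucnrm} first by hermitian-averaging a minimal decomposition and then extract the extreme points; both orderings are valid and rest on the same ingredients.
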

\begin{proof} The characterization \eqref{bisymspecnrm} follows from Banach's theorem.  The characterization \eqref{bisymnucnrm} follows from the arguments
of the proof of the generalization of the Banach theorem to nuclear norm \cite{FLn16}.  The characterization \eqref{hbisymspecnrm} follows from \eqref{bisymspecnrm} and the the equality
\[ \Re\left(\an{\cA,(\otimes^d\x)\otimes(\otimes^d\y)}\right)=\Re\left(\an{\cA,(\otimes^d\bar\y)\otimes(\otimes^d\bar\x)}\right).\]
The characterization \eqref{hbisymspecnrm} yields that the set of the extreme points of the restriction of the nuclear norm to $\H^{n^{\times d}\times n^{\times d}}_{bsym}$ is given
by \eqref{hbisymnucnrm}.  The arguments in \cite{FLn16} yield the characterization \eqref{hbisymnucnrm}.\qed
\end{proof}

Combine the proof of Lemma \ref{traceineqlem} with \eqref{hbisymnucnrm} to deduce
\begin{corol}\label{charsephbisym}  Let $\cA\in \H^{n^{\times d}\times n^{\times d}}_{bsym}$.  Then the following statements are equivalent:
\begin{enumerate}
\item $\cA$ is separable.
\item $\|\cA\|_1=1$.
\item
\begin{equation}\label{rephbissep}
\cA=\sum_{i=1}^r p_i(\otimes^d\x_i)\otimes(\otimes^d\bar\x_i), \quad  \x_i\in\C^n, \|\x_i\|=1, p_i>0, \sum_{i=1}^r p_i=1.
\end{equation}
\end{enumerate}
\end{corol}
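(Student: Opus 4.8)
The plan is to prove the cycle $(3)\Rightarrow(1)\Rightarrow(2)\Rightarrow(3)$; the first two implications are essentially immediate and the third carries all the content, via the argument of Lemma~\ref{traceineqlem} applied to the symmetric decomposition~\eqref{hbisymnucnrm} rather than to a general one.

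For $(3)\Rightarrow(1)$, I would simply observe that a tensor of the form~\eqref{rephbissep} is a convex combination of the pure product states $(\otimes^d\x_i)\otimes(\otimes^d\bar\x_i)$, i.e.\ the bisymmetric special case of~\eqref{defsepdenten} with $\x_{1,i}=\dots=\x_{d,i}=\x_i$, hence separable by definition. For $(1)\Rightarrow(2)$: since $\cA$ is a (bisymmetric) density tensor, the last assertion of Lemma~\ref{traceineqlem} gives $\|\cA\|_1\ge1$ with equality precisely when $\cA$ is separable, so separability forces $\|\cA\|_1=1$.

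The substance is $(2)\Rightarrow(3)$. Here I would fix a minimal decomposition provided by~\eqref{hbisymnucnrm},
\[
\cA=\sum_{i=1}^{M}\tfrac12\bigl((\otimes^d\x_i)\otimes(\otimes^d\y_i)+(\otimes^d\bar\y_i)\otimes(\otimes^d\bar\x_i)\bigr),\qquad \sum_{i=1}^{M}\|\x_i\|^d\|\y_i\|^d=\|\cA\|_1=1,
\]
discarding at the outset any summand for which $\|\x_i\|\,\|\y_i\|=0$ (it is the zero tensor). Applying the trace functional~\eqref{deftrac4} and using~\eqref{tracedectens} with $\x_1=\dots=\x_d=\x_i$ and $\x_{d+1}=\dots=\x_{2d}=\y_i$, each summand contributes $\Re\bigl((\y_i^{\top}\x_i)^d\bigr)$, so $1=\tr\cA=\sum_{i=1}^{M}\Re\bigl((\y_i^{\top}\x_i)^d\bigr)$. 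Chaining $\Re\bigl((\y_i^{\top}\x_i)^d\bigr)\le|\y_i^{\top}\x_i|^d\le(\|\x_i\|\,\|\y_i\|)^d$ and summing, both ends equal $1$, so every inequality is an equality for every $i$. Equality in the Cauchy--Schwarz bound $|\y_i^{\top}\x_i|=\|\x_i\|\,\|\y_i\|$ forces $\x_i=c_i\bar\y_i$ for some $c_i\in\C$, and then $\Re\bigl((\y_i^{\top}\x_i)^d\bigr)=|\y_i^{\top}\x_i|^d$ forces $(\y_i^{\top}\x_i)^d=c_i^d\|\y_i\|^{2d}$ to be a nonnegative real, i.e.\ $c_i^d\ge0$. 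Substituting $\x_i=c_i\bar\y_i$ into the $i$-th summand collapses the symmetrized pair into the single term $c_i^d\,(\otimes^d\bar\y_i)\otimes(\otimes^d\y_i)$ (using $\overline{c_i^d}=c_i^d$). Setting $\z_i=\bar\y_i/\|\y_i\|$ and $p_i=c_i^d\|\y_i\|^{2d}\ge0$, discarding terms with $p_i=0$, and taking the trace once more with $\tr\bigl((\otimes^d\z)\otimes(\otimes^d\bar\z)\bigr)=\|\z\|^{2d}=1$ for unit $\z$, I obtain $\cA=\sum_i p_i(\otimes^d\z_i)\otimes(\otimes^d\bar\z_i)$ with $\|\z_i\|=1$ and $\sum_i p_i=\tr\cA=1$, which is exactly~\eqref{rephbissep}.

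The one point needing care --- the main obstacle --- is the bookkeeping around degenerate summands and phases. First, if $\|\x_i\|\,\|\y_i\|>0$ but $\y_i^{\top}\x_i=0$, the corresponding estimate would be strict, contradicting the overall equality, so every surviving summand genuinely has $|\y_i^{\top}\x_i|=\|\x_i\|\,\|\y_i\|\ne0$ and the Cauchy--Schwarz equality case applies cleanly. Second, it is the \emph{real-part} condition, not merely Cauchy--Schwarz, that upgrades the modulus constraint on $c_i$ to the statement that $c_i^d$ is real and nonnegative; this is precisely what makes the symmetrized pair collapse to a nonnegative multiple of a genuine pure symmetric state, and hence what turns the $\tfrac12(\cdots+\cdots)$ decomposition~\eqref{hbisymnucnrm} into the one-sided form~\eqref{rephbissep}.
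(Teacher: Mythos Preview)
Your proof is correct and is precisely the argument the paper intends: the paper's own ``proof'' is merely the one-line instruction to combine the proof of Lemma~\ref{traceineqlem} with the bisymmetric nuclear-norm characterization~\eqref{hbisymnucnrm}, and you have carried this out in full, including the careful handling of the phase condition $c_i^d\ge 0$ that collapses the symmetrized pair to a single nonnegative term.
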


Another criterion to check if a bipartite ($d=2$) density tensor is separable is given in \cite{NZ16}.

\section{Maximally entangled states}\label{sec:maxentang}

One of the main notions in quantum physics is the notion of entanglement.  The entanglement of $\cT$
can be measured in many different ways.  One of them that we discuss here is the \emph{geometric measure of entanglement}.  
Let $\Pi^{\mathbf{n}}:=\Pi^{\mathbf{n}}(\C)$ be the space of the product states \eqref{defPiFn}.   Then the geometric measure of entanglement is the distance of a state $\cT$ to $\Pi^{\mathbf{n}}$:
\[\dist(\cT,\Pi^{\mathbf{n}})=\min\{\|\cT-\cY\|,\;\cY\in\Pi^{\mathbf{n}}\}.\]  

As $\|\cT\|=\|\cY\|=1$ it follows that $\dist(\cT,\Pi^{\mathbf{n}}(\F))^2=2(1-\|\cT\|_{\infty,\F}).$  Hence an equivalent notion of the geometric measure of entanglement is \cite{GFE09}
\begin{equation}\label{defetaT}
\eta(\cT):=-\log_2 \|\cT\|_{\infty}^2.
\end{equation}
Thus $\cT$ is entangled if and only if $\eta(\cT)>0$.  

Recall that for a $d$-partite state $\|\cT\|_1\ge \|\cT\|=1$.  Furthermore, $\cT$ is a product state if and only if $\|\cT\|_1=1$.
Hence another way to measure the entanglement of $\cT$ is:
\begin{equation}\label{defomegT}
\omega(\cT):=\log_2 \|\cT\|_{1}^2.
\end{equation}

Theorem \ref{relalphbet} yields:
\begin{corol}\label{maxentstate}
\begin{eqnarray}\label{defomegn}
\omega(\n):=\max\{\omega(\cT), \cT\in\C^{\mathbf{n}},\|\cT\|=1\}=\max\{\eta(\cT), \cT\in\C^{\mathbf{n}},\|\cT\|=1\}\\
=2\log_2\alpha(\n,\C)=-2\log_2\beta(\n,\C).\notag
\end{eqnarray}
Furthermore, the most entangled states have the minimal spectral norm and maximal nuclear norm.
\end{corol}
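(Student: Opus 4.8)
The plan is to deduce everything from Theorem~\ref{relalphbet} together with the strict monotonicity of $t\mapsto\log_2 t^2$ on $(0,\infty)$. First I would note that the unit sphere $\{\cT\in\C^{\n}:\|\cT\|=1\}$ is compact and that $\cT\mapsto\|\cT\|_{1}$ and $\cT\mapsto\|\cT\|_{\infty}$ are continuous, so each of the four extrema appearing in the statement is attained; this legitimizes interchanging the outer $\max$ with a monotone function applied to an inner norm in the steps below.

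Next, using the definition \eqref{defomegT}, $\omega(\cT)=\log_2\|\cT\|_1^2$, and the fact that $t\mapsto\log_2 t^2$ is strictly increasing on the positive reals,
\[
\max\{\omega(\cT):\cT\in\C^{\n},\ \|\cT\|=1\}=2\log_2\bigl(\max\{\|\cT\|_1:\cT\in\C^{\n},\ \|\cT\|=1\}\bigr)=2\log_2\alpha(\n,\C),
\]
the last equality being the characterization \eqref{charalpha} of $\alpha(\n,\C)$. Symmetrically, by \eqref{defetaT}, $\eta(\cT)=-\log_2\|\cT\|_\infty^2$, and $t\mapsto-\log_2 t^2$ is strictly decreasing, so maximizing $\eta$ over the unit sphere is the same as minimizing $\|\cT\|_\infty$; hence by \eqref{charbeta},
\[
\max\{\eta(\cT):\cT\in\C^{\n},\ \|\cT\|=1\}=-2\log_2\beta(\n,\C).
\]
Finally $2\log_2\alpha(\n,\C)=-2\log_2\beta(\n,\C)$ is immediate from $\alpha(\n,\C)\beta(\n,\C)=1$, i.e.\ \eqref{relalphbet1}. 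Combining these three facts gives all the claimed identities and the formula for $\omega(\n)$.

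For the ``furthermore'' assertion I would argue at the level of optimizers rather than optimal values. By the same strict monotonicity, a state $\cT$ with $\|\cT\|=1$ achieves $\omega(\cT)=\omega(\n)$ precisely when $\|\cT\|_1=\alpha(\n,\C)$, and achieves $\eta(\cT)=\omega(\n)$ precisely when $\|\cT\|_\infty=\beta(\n,\C)$. The second statement of Theorem~\ref{relalphbet} --- that for $\|\cT\|=1$ one has $\|\cT\|_1=\alpha(\n,\C)$ if and only if $\|\cT\|_\infty=\beta(\n,\C)$ --- then shows that the $\omega$-maximal states, the $\eta$-maximal states, the nuclear-norm maximizers and the spectral-norm minimizers on the unit sphere all coincide, which is exactly the claim that the most entangled states have minimal spectral norm and maximal nuclear norm.

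There is no real obstacle: once the extrema are known to be attained, the argument is a formal manipulation of logarithms together with the duality identity $\alpha(\n,\C)\beta(\n,\C)=1$ from Theorem~\ref{relalphbet}. The single point deserving an explicit word is this attainment, which follows from compactness of $\{\cT:\|\cT\|=1\}$ and continuity of the two norms.
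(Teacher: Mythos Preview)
Your proof is correct and follows exactly the route the paper intends: the corollary is stated immediately after ``Theorem~\ref{relalphbet} yields:'' with no further argument, and your write-up simply spells out the straightforward deduction from the two parts of that theorem (the identity $\alpha\beta=1$ for the equality of the optimal values, and the if-and-only-if on optimizers for the ``furthermore'' clause). The only thing you add beyond what the paper leaves implicit is the remark on attainment via compactness and continuity, which is a harmless clarification.
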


Similarly, for density tensor $\cA\in\H^{\n\times \n}_{+,1}$ we can define the inseparability measure as $\log_2\|\cA\|_1$.  Lemma \ref{traceineqlem} yields that
$\log_2\|\cA\|_1\ge 0$, and $\cA$ is separable if and only  if $\log_2\|\cA\|_1= 0$ .  Hence maximum inseparable density has the maximum value of $\log_2\|\cA\|_1$.
We now show that any maximum entangled state in $\C^n$ induces a maximum inseparable density tensor:
\begin{lemma}\label{maxinsep}
A density tensor $\cA\in\H^{\n\times \n}_{+,1}$ satisfies inequality 
\begin{equation}\label {maxinsepin}
\log_2\|\cA\|_1\le 2\log_2\alpha(\n,\C).
\end{equation}
Equality holds if $\cA=\cT\otimes\overline{\cT}$ and $\cT$ is maximum entangled.
\end{lemma}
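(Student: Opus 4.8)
The plan is to bound $\|\cA\|_1$ using the nuclear norm's minimal characterization together with the fact that any density tensor is a convex combination of pure states, and then to combine the multiplicativity of the nuclear norm (Lemma~\ref{multspecnucnorm}) with the definition of $\alpha(\n,\C)$ to get the factor $\alpha(\n,\C)^2$. First I would write $\cA=\sum_{i=1}^r p_i\,\cT_i\otimes\overline{\cT_i}$ with $\cT_i\in\C^{\n}$, $\|\cT_i\|=1$, $p_i>0$, $\sum_i p_i=1$, which is possible because $\cA\in\H^{\n\times\n}_{+,1}$ (indeed one may take the spectral decomposition of the associated Hermitian matrix $A\succeq 0$, $\tr A=1$). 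By the triangle inequality for the nuclear norm,
\[
\|\cA\|_1\le \sum_{i=1}^r p_i\,\|\cT_i\otimes\overline{\cT_i}\|_1 .
\]
Now apply Lemma~\ref{multspecnucnorm} with $d'=d$, $\n'=\n$: $\|\cT_i\otimes\overline{\cT_i}\|_1=\|\cT_i\|_1\,\|\overline{\cT_i}\|_1=\|\cT_i\|_1^2$, since conjugation preserves the nuclear norm (it permutes the rank-one decompositions). Because $\|\cT_i\|=1$, the definition \eqref{charalpha} of $\alpha(\n,\C)$ gives $\|\cT_i\|_1\le\alpha(\n,\C)$, hence $\|\cT_i\otimes\overline{\cT_i}\|_1\le\alpha(\n,\C)^2$. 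Summing against the convex weights $p_i$ yields $\|\cA\|_1\le\alpha(\n,\C)^2$, and taking $\log_2$ gives \eqref{maxinsepin}.

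For the equality clause, suppose $\cA=\cT\otimes\overline{\cT}$ with $\cT$ maximally entangled, i.e.\ $\|\cT\|=1$ and $\|\cT\|_1=\alpha(\n,\C)$ (by Corollary~\ref{maxentstate} this is the same as having minimal spectral norm). Then the upper bound above already gives $\|\cA\|_1\le\alpha(\n,\C)^2$. For the reverse inequality I would again use Lemma~\ref{multspecnucnorm}: $\|\cT\otimes\overline{\cT}\|_1=\|\cT\|_1\,\|\overline{\cT}\|_1=\alpha(\n,\C)^2$ exactly. So $\|\cA\|_1=\alpha(\n,\C)^2$ and $\log_2\|\cA\|_1=2\log_2\alpha(\n,\C)$, as claimed.

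The only genuinely delicate points are (i) justifying that every $\cA\in\H^{\n\times\n}_{+,1}$ decomposes as a convex combination of pure states $\cT_i\otimes\overline{\cT_i}$ — this is just the spectral theorem for the positive semidefinite trace-one matrix $A$ identified with $\cA$, followed by re-tensorizing the eigenvectors — and (ii) confirming that $\|\overline{\cT}\|_1=\|\cT\|_1$, which is immediate from \eqref{nucnormdef} since conjugating a rank-one decomposition of $\cT$ yields one of $\overline{\cT}$ of the same cost. Neither is a real obstacle; the substantive content is entirely carried by the multiplicativity Lemma~\ref{multspecnucnorm} and the convexity of the nuclear norm, both already available. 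Hence the proof is short, and the main thing to be careful about is simply invoking Lemma~\ref{multspecnucnorm} in the right direction for both the upper bound (via the triangle inequality on a decomposition of $\cA$) and the matching lower bound (directly on the single tensor-product pure state).
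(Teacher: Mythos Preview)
Your proof is correct and follows essentially the same approach as the paper: spectral decomposition of $\cA$ into pure states, the triangle inequality for $\|\cdot\|_1$, the multiplicativity $\|\cT\otimes\overline{\cT}\|_1=\|\cT\|_1^2$ (which the paper uses implicitly, you cite via Lemma~\ref{multspecnucnorm}), and the defining bound $\|\cT_i\|_1\le\alpha(\n,\C)$. Your treatment is in fact slightly more explicit than the paper's, which omits the justification of $\|\overline{\cT}\|_1=\|\cT\|_1$ and writes $\|\cT_i\|^2$ where $\|\cT_i\|_1^2$ is meant.
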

\begin{proof}  Suppose that $\cT\in\C^{\n}$  is a state.  Then $\|\cT\|_1=\|\overline{\cT}\|_1\le \alpha(\n,\C)$.
Hence $\cB=\cT\otimes \overline{\cT}$ is a pure density tensor, and $\|\cB\|_1=\|\cT\|_1^2\le \alpha(\n,\C)^2$.  Therefore $\log_2\|\cB\|\le 2\log_2\alpha(\n,\C)$.
Clearly, $\|\cB\|_1=\alpha(\n,\C)^2$ if $\cT$ is maximum entangled.  Assume that $\cA\in\H^{\n\times \n}_{+,1}$.  The spectral decomposition of $\cA$ gives a decomposition
of $\cA$ as a mixed tensor: 
\[\cA=\sum_{i=1}^r \lambda_1\cT_i\otimes \overline{\cT}_i, \quad \lambda_i>0, i\in [r], \sum_{i=1}^r =1.\]
As the nuclear norm is a convex function it follows that
\[\|\cA\|_1\le \sum_{i=1}^r \lambda_i \|\cT_i\otimes \overline{\cT}_i\|_1=  \sum_{i=1}^r \lambda_i \|\cT_i\|^2\le \alpha(\n,\C)^2\sum_{I=1}^r \lambda_i=\alpha(\n,\C)^2.\]
\qed
\end{proof}

We conjecture that equality in \eqref {maxinsepin} implies that $\cA$ is a pure density state corresponding to maximum entangled state.

\section{Maximum entangled $3$ and $4$ qubits}\label{sec:maxentan34} 
We start with the following simple lemma.
\begin{lemma}\label{lem:alphaineq}
We have
$$
\beta((n_1,n_2,\dots,n_{d+1}),\F)\geq \frac{\beta((n_1,n_2,\dots,n_d),\F)}{\sqrt{n_{d+1}}}
$$
and
$$
\alpha((n_1,n_2,\dots,n_{d+1}),\F)\leq \sqrt{n_{d+1}}\alpha((n_1,n_2,\dots,n_d),\F).
$$
\end{lemma}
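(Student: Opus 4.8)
The plan is to reduce everything to a single slicing estimate, using the duality identity $\alpha(\cdot,\F)\beta(\cdot,\F)=1$, which is available for every tuple length that occurs here: Theorem \ref{relalphbet} gives it for lengths $\ge 3$, Lemma \ref{alphbetdim2} for length $2$, and for length $1$ it is trivial since the nuclear and spectral norms of a vector both equal its Euclidean norm, so $\alpha=\beta=1$. Substituting $\beta=1/\alpha$ shows the two displayed inequalities are reciprocal to one another, so it suffices to prove the bound on $\alpha$.

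Write $\n=(n_1,\ldots,n_d)$. To bound $\alpha((\n,n_{d+1}),\F)$ via \eqref{charalpha}, I would fix an arbitrary $\cT\in\F^{(\n,n_{d+1})}$ with $\|\cT\|=1$ and slice it along the last mode: with $\e_1,\ldots,\e_{n_{d+1}}$ the standard basis of $\F^{n_{d+1}}$, write $\cT=\sum_{k=1}^{n_{d+1}}\cT_k\otimes\e_k$ with $\cT_k\in\F^{\n}$. Orthonormality of the $\e_k$ gives $\sum_{k=1}^{n_{d+1}}\|\cT_k\|^2=\|\cT\|^2=1$. By the triangle inequality for the nuclear norm together with $\|\cT_k\otimes\e_k\|_{1,\F}=\|\cT_k\|_{1,\F}$ (Lemma \ref{multspecnucnorm}, since $\|\e_k\|_{1,\F}=\|\e_k\|=1$), we get $\|\cT\|_{1,\F}\le\sum_k\|\cT_k\|_{1,\F}$. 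Now estimate each slice by the definition of $\alpha(\n,\F)$, namely $\|\cT_k\|_{1,\F}\le\alpha(\n,\F)\|\cT_k\|$, and apply Cauchy--Schwarz: $\sum_k\|\cT_k\|\le\sqrt{n_{d+1}}\,(\sum_k\|\cT_k\|^2)^{1/2}=\sqrt{n_{d+1}}$. Combining yields $\|\cT\|_{1,\F}\le\sqrt{n_{d+1}}\,\alpha(\n,\F)$; taking the supremum over such $\cT$ gives $\alpha((\n,n_{d+1}),\F)\le\sqrt{n_{d+1}}\,\alpha(\n,\F)$, and the $\beta$-inequality follows from $\alpha\beta=1$.

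If one prefers a direct proof of the $\beta$-inequality, the dual argument is equally short: given $\|\cT\|=1$, the pigeonhole principle produces a slice with $\|\cT_k\|\ge 1/\sqrt{n_{d+1}}$; testing the spectral norm of $\cT$ against $\cX\otimes\e_k$, where $\cX\in\Pi^{\n}(\F)$ is optimal for $\cT_k$, yields $\|\cT\|_{\infty,\F}\ge\|\cT_k\|_{\infty,\F}\ge\beta(\n,\F)\|\cT_k\|\ge\beta(\n,\F)/\sqrt{n_{d+1}}$.

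There is no genuine obstacle here — the lemma is a one-line slicing estimate followed by Cauchy--Schwarz. The only points worth a moment's care are the identity $\|\cT_k\otimes\e_k\|_{1,\F}=\|\cT_k\|_{1,\F}$ (covered by Lemma \ref{multspecnucnorm}, or by the triangle inequality alone) and making sure the duality identity $\alpha\beta=1$ is legitimately invoked for both tuple lengths $d$ and $d+1$, including the degenerate small cases.
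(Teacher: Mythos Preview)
Your proposal is correct. Your alternative ``direct proof of the $\beta$-inequality'' via pigeonhole and testing against $\cX\otimes\e_k$ is exactly the paper's argument; your primary route is simply the dual version on the nuclear-norm side (triangle inequality plus Cauchy--Schwarz instead of pigeonhole), which is equally short and yields the same result, with the two inequalities linked by $\alpha\beta=1$ just as you note.
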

\begin{proof}
If ${\mathcal T}$ is a unit tensor of type $(n_1,n_2,\dots,n_{d+1})$ then we can write
$$
{\mathcal T}=\lambda_1{\mathcal T}_1\otimes e_1+\lambda_2{\mathcal T}_2\otimes e_2+\cdots +\lambda_{n_{d+1}}{\mathcal T}_{n_{d+1}}\otimes e_{n_{d+1}}.
$$
where ${\mathcal T}_1,\dots,{\mathcal T}_{n_{d+1}}$ are unit tensors of type $(n_1,\dots,n_d)$ and $\lambda_1,\dots,\lambda_{n_{d+1}}\in \F$ such that
$|\lambda_1|^2+\cdots+|\lambda_{n_{d+1}}|^2=1$. For some $i$ we have $|\lambda_i|\geq \frac{1}{\sqrt{n_{d+1}}}$.
There exist a simple tensor ${\mathcal X}$ with $|\langle {\mathcal T}_i,{\mathcal X}\rangle|\geq\beta((n_1,\dots,n_{d}),\F)$. Now we get
$$
\|{\mathcal T}\|_{\infty,\F}\geq |\langle {\mathcal T},{\mathcal X}\otimes e_i\rangle|=|\lambda_i|\cdot |\langle {\mathcal T}_i,{\mathcal X}\rangle|\geq \frac{1}{\sqrt{n_{d+1}}} \beta((n_1,\dots,n_d),\F).
$$
\qed\end{proof}
In what follows we use Dirac's notation in this section.  Namely, let $\e_1=(1,0)\trans, \e_2=(0,1)\trans$ be the standard orthonormal basis in $\C^2$.
Recall that $\otimes^d\C^2$ has the standard basis $\otimes_{j=1}^d \e_{i_j}$, where $i_1,\ldots,i_d\in[2]$.  Then Dirac's notation is
\[|(i_1-1)\cdots(i_d-1)\rangle=\otimes_{j=1}^d \e_{i_j}, \quad i_1,\ldots,i_j\in[2].\]

Consider the tensor the $W$ 3-tensor \cite{DVC00}
$$
{\mathcal W}=|W\rangle=\frac{|100\rangle+|010\rangle+|001\rangle}{\sqrt{3}}.
$$
By Banach's theorem, the spectral norm is achieved on a symmetric state
$$
\langle S|=(x\langle 0|+y\langle 1|)^{\otimes 3}=(x\langle 0|+y\langle 1|)\otimes(x\langle 0|+y\langle 1|)\otimes(x\langle 0|+y\langle 1|)
$$
with value
$$
|\langle S|W\rangle|=|\sqrt{3}yx^2|=\sqrt{3}|y| |x|^2.
$$
where $x,y\in \C$ with $|x|^2+|y|^2=1$. An easy calculus exercise shows that the maximum is achieved when $|x|=\sqrt{2}/\sqrt{3}$ and $|y|=1/\sqrt{3}$. We obtain
$$
\|{\mathcal W}\|_{\infty,\C}=\sqrt{3}\left(\frac{\sqrt{2}}{\sqrt{3}}\right)^2\frac{1}{\sqrt{3}}=\frac{2}{3}\mbox{ and }
\|{\mathcal W}\|_{1,\C}=\frac{3}{2}.
$$
(See also \cite[\S6]{FLn16}.)  It is shown in  \cite{CXZ10} that $3$-qubit state $\cT\in\otimes^3\C^2$ is the most entangled if and only if it is locally untiary equivalent to $\cW$.  
That is, let $\U(2)\subset \C^{2\times 2}$ be the unitary group of $2 \times 2$ complex valued matrices.
Then orbit of $\cW$ is defined as
\[\textrm{orb}(\cW)=\{\cT, \;\cT=(A_1\otimes A_2\otimes A_3) \cW, A_1,A_2,A_3\in \U(2)\}.\]
Thus, $\cT$ is maximally entangled if and only if $\cT\in$orb$(\cW)$.  In particular,
\begin{equation}\label{valalphbet23}
\beta(2,2,2,\C)=\frac{2}{3}, \quad \alpha(2,2,2,\C)=\frac{3}{2}.
\end{equation}
A rank one decomposition that achieves the nuclear norm of $\cW$ is:
$$
{\mathcal W}=\frac{1}{6\sqrt{3}}\left[\Big(\textstyle  \sqrt{2}\,|0\rangle +|1\rangle\Big)^{\otimes 3}+\zeta^2 \Big(\textstyle \sqrt{2} \,|0\rangle +\zeta |1\rangle\Big)^{\otimes 3}+
\zeta \Big(\textstyle \sqrt{2} \,|0\rangle +\zeta^2|1\rangle\Big)^{\otimes 3}\right].
$$ 

Combine Lemme \ref{lem:alphaineq} with Lemma \ref{alphbetdim2} and \eqref{valalphbet23} to deduce
 \begin{corol}\label{specnrmminineq}
 \begin{equation}\label{specnrmminineq1}
 \beta(n^{\times (d+1)},\F)\ge \frac{1}{\sqrt{n}}\beta(n^{\times d},\F) \quad \textrm{ for } d\ge 2.
 \end{equation}
 In particular
 \begin{eqnarray}\label{specnrmminineq2}
 &&\beta(n^{\times d},\R)\ge n^{\frac{1-d}{2}} \quad d\ge 2,\\
 &&\beta(2^{\times d},\C)\ge \left(\frac{2}{3}\right) 2^{\frac{-(d-3)}{2}}  \quad d\ge 3.\label{specnrmminineq3}
 \end{eqnarray}
 \end{corol}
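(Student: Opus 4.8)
The plan is to derive Corollary \ref{specnrmminineq} as a direct consequence of the two inequalities in Lemma \ref{lem:alphaineq}, together with the base cases supplied by Lemma \ref{alphbetdim2} and the explicit value \eqref{valalphbet23}. The first inequality \eqref{specnrmminineq1} is immediate: applying Lemma \ref{lem:alphaineq} with $(n_1,\dots,n_d)=(n,\dots,n)=n^{\times d}$ and $n_{d+1}=n$ gives exactly $\beta(n^{\times(d+1)},\F)\ge \beta(n^{\times d},\F)/\sqrt{n}$ for $d\ge 2$, which is \eqref{specnrmminineq1}. Nothing further is needed here.

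For \eqref{specnrmminineq2}, I would iterate \eqref{specnrmminineq1} down to $d=2$ and use the base case from Lemma \ref{alphbetdim2}, noting that $n^{\times 2}=(n,n)$ so $\beta(n,n)=1/\sqrt n$ (here one should observe that Lemma \ref{alphbetdim2} is stated for $F=\R$, or alternatively that the argument of Lemma \ref{lem:alphaineq} applied with the trivial base case $d=1$, where $\beta(n,\F)=1/\sqrt n$ since every vector of unit Hilbert–Schmidt norm has spectral norm $1$, gives the bound directly). Then a telescoping product
\begin{equation*}
\beta(n^{\times d},\R)\ge \frac{1}{\sqrt n}\,\beta(n^{\times(d-1)},\R)\ge \cdots \ge \left(\frac{1}{\sqrt n}\right)^{d-1}\beta(n^{\times 1},\R)=n^{\frac{1-d}{2}}
\end{equation*}
yields \eqref{specnrmminineq2}. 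For \eqref{specnrmminineq3}, I would instead stop the iteration at $d=3$ rather than $d=2$, because the base value $\beta(2^{\times 3},\C)=2/3$ from \eqref{valalphbet23} is larger (and gives a better bound) than what the $d=2$ case would propagate. Iterating \eqref{specnrmminineq1} with $n=2$ from dimension $d$ down to dimension $3$ gives
\begin{equation*}
\beta(2^{\times d},\C)\ge \left(\frac{1}{\sqrt 2}\right)^{d-3}\beta(2^{\times 3},\C)=\frac{2}{3}\cdot 2^{\frac{-(d-3)}{2}},
\end{equation*}
which is exactly \eqref{specnrmminineq3}.

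There is essentially no obstacle: this corollary is a bookkeeping consequence of a lemma already proved. The only point requiring a moment's care is making sure the induction is anchored correctly — in particular, that the $F=\R$ statement of Lemma \ref{alphbetdim2} or the trivial one-mode computation is what supplies the base case for \eqref{specnrmminineq2}, and that for \eqref{specnrmminineq3} one uses the sharper three-qubit value \eqref{valalphbet23} rather than descending all the way to matrices. I would present the proof as: (i) quote \eqref{specnrmminineq1} from Lemma \ref{lem:alphaineq}; (ii) iterate and plug in the $d=1$ (or $d=2$) base value to get \eqref{specnrmminineq2}; (iii) iterate with $n=2$ and plug in \eqref{valalphbet23} at $d=3$ to get \eqref{specnrmminineq3}.
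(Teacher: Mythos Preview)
Your approach is correct and matches the paper's own one-line proof, which simply says to combine Lemma~\ref{lem:alphaineq} with Lemma~\ref{alphbetdim2} and \eqref{valalphbet23}. One small slip: for $d=1$ every vector is itself rank one, so $\beta(n,\F)=1$, not $1/\sqrt{n}$ as you wrote; your telescoping computation $(1/\sqrt{n})^{d-1}\cdot\beta(n^{\times 1},\R)=n^{(1-d)/2}$ is nonetheless correct precisely because $\beta(n^{\times 1},\R)=1$, or alternatively anchor at $d=2$ via Lemma~\ref{alphbetdim2} as you also suggest.
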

 We remark that Theorem \ref{maxrealentqub} shows that the inequality \eqref{specnrmminineq2} is sharp for $n=2$ and  each $d\ge 2$.

For $n\geq 1$ and $\lambda\in \C$ with $|\lambda|=1$, we define a tensor
\begin{equation}\label{defTnlmab}
{\mathcal T}_{n,\lambda}=\frac{1}{\sqrt{2}}\left(\lambda \Big(\frac{|0\rangle +i |1\rangle}{\sqrt{2}}\Big)^{\otimes n}+\overline{\lambda} \Big(\frac{|0\rangle -i |1\rangle}{\sqrt{2}}\Big)^{\otimes n}\right)
\end{equation}
and we will use the convention ${\mathcal T}_n={\mathcal T}_{n,1}$. For example, we have
$$
{\mathcal T}_3=\frac{|000\rangle-|110\rangle-|101\rangle-|011\rangle}{2},
$$
$$
{\mathcal T}_4=\frac{|0000\rangle-|1100\rangle-|1010\rangle-|1001\rangle-|0110\rangle-|0101\rangle-|0011\rangle+|1111\rangle}{2\sqrt{2}}
$$
and
$$
{\mathcal T}_{4,-i}=\frac{|1000\rangle+|0100\rangle+|0010\rangle+|0001\rangle-|1110\rangle-|1101\rangle-|1011\rangle-|0111\rangle}{2\sqrt{2}}.
$$
As a complex tensor, the state ${\mathcal T}_{n,\lambda}$ is not much entangled in the sense of the nuclear or spectral norm. 
\begin{lemma}\label{Tnineq}  Let ${\mathcal T}_{n,\lambda}$ be defined as above.  Then
\begin{equation}\label{Tnineq1}
\|{\mathcal T}_{n,\lambda}\|_{\infty,\C}=1/\sqrt{2}, \quad \|{\mathcal T}_{n,\lambda}\|_{1,\C}=\sqrt{2}.
\end{equation}
\end{lemma}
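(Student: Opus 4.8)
The plan is to first verify that ${\mathcal T}_{n,\lambda}$ is a unit tensor, then compute its spectral norm using Banach's theorem, and finally extract the nuclear norm from the duality inequality \eqref{prodineq}. Throughout, write $u_+=\tfrac{1}{\sqrt2}(|0\rangle+i|1\rangle)$ and $u_-=\tfrac{1}{\sqrt2}(|0\rangle-i|1\rangle)$, so that ${\mathcal T}_{n,\lambda}=\tfrac{1}{\sqrt2}(\lambda\, u_+^{\otimes n}+\overline{\lambda}\, u_-^{\otimes n})$, and note this tensor is symmetric.

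First I would observe $\|u_\pm\|=1$ and $\an{u_+,u_-}=\tfrac{1}{2}(1+i^2)=0$, so that $u_+^{\otimes n}$ and $u_-^{\otimes n}$ form an orthonormal pair in $\otimes^n\C^2$; since $|\lambda|=1$, this gives $\|{\mathcal T}_{n,\lambda}\|^2=\tfrac12(1+1)=1$. The same decomposition into two rank-one terms, combined with subadditivity of the nuclear norm and the fact that a rank-one tensor has nuclear norm equal to its Hilbert--Schmidt norm (both consequences of \eqref{nucnormdef}), yields the upper bound $\|{\mathcal T}_{n,\lambda}\|_{1}\le\tfrac{1}{\sqrt2}\|u_+^{\otimes n}\|+\tfrac{1}{\sqrt2}\|u_-^{\otimes n}\|=\sqrt2$.

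For the spectral norm, Banach's theorem \eqref{Banthm} (applicable because ${\mathcal T}_{n,\lambda}$ is symmetric) reduces the problem to maximizing $|{\mathcal T}_{n,\lambda}\times\otimes^n\x|$ over unit vectors $\x=a|0\rangle+b|1\rangle\in\C^2$. Contracting gives ${\mathcal T}_{n,\lambda}\times\otimes^n\x=\tfrac{1}{\sqrt2}(\lambda p^n+\overline{\lambda}\,q^n)$ where $p=\tfrac{a+ib}{\sqrt2}$ and $q=\tfrac{a-ib}{\sqrt2}$. The decisive observation is the identity $|p|^2+|q|^2=\tfrac12(|a+ib|^2+|a-ib|^2)=|a|^2+|b|^2=1$ (parallelogram law), so for $n\ge2$ we get $|\lambda p^n+\overline{\lambda} q^n|\le|p|^n+|q|^n\le|p|^2+|q|^2=1$, with equality at $(p,q)=(1,0)$, which is attained by a unit $\x$. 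Hence $\|{\mathcal T}_{n,\lambda}\|_{\infty,\C}=1/\sqrt2$.

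Finally, substituting $\|{\mathcal T}_{n,\lambda}\|=1$ and $\|{\mathcal T}_{n,\lambda}\|_{\infty,\C}=1/\sqrt2$ into \eqref{prodineq} gives $1\le\|{\mathcal T}_{n,\lambda}\|_{1}/\sqrt2$, so $\|{\mathcal T}_{n,\lambda}\|_{1}\ge\sqrt2$, which together with the upper bound proves $\|{\mathcal T}_{n,\lambda}\|_{1}=\sqrt2$. I expect the only delicate point to be the spectral-norm step: correctly carrying out the contraction against a simple tensor and recognizing that the elementary inequality $|p|^n\le|p|^2$ for $|p|\le1$ — which forces $n\ge2$; for $n=1$ the tensor is in fact a unit vector with spectral norm $1$ — is precisely what makes the bound sharp. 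The rest is routine bookkeeping.
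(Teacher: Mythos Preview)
Your proof is correct, and it follows a genuinely different path from the paper's. The paper unfolds ${\mathcal T}_{n,\lambda}$ into a $2^{\lfloor n/2\rfloor}\times 2^{\lceil n/2\rceil}$ matrix $T$ and observes that the defining decomposition \eqref{defTnlmab} is literally a singular value decomposition of $T$ with two equal singular values $1/\sqrt{2}$; since the spectral and nuclear norms of a tensor are bounded by the corresponding norms of any matrix unfolding, and since the singular vectors here are themselves rank-one tensors, the bounds are attained. Your argument instead computes the spectral norm head-on via Banach's theorem and the parallelogram identity $|p|^2+|q|^2=1$, then gets the nuclear norm by sandwiching between the obvious upper bound from the two-term decomposition and the lower bound from \eqref{prodineq}.

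Your route is arguably more self-contained: it does not invoke the unfolding-norm comparison, and it makes the restriction $n\ge 2$ explicit (a point the paper's proof also needs but only signals by writing ``For $d\ge 2$''). The paper's route, on the other hand, gives a structural explanation --- the two-term decomposition is an SVD --- which yours does not surface. Either way the content is the same and the amount of work is comparable.
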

\begin{proof}  Clearly, $\cT_{n,\lambda}$ is symmetric :
\[\cT_{n,\lambda}=\frac{1}{\sqrt{2}}(\otimes^n\u +\otimes^n\bar\u), \quad\u=\frac{1}{\sqrt{2}}(1,\i)\trans.\] 
Furthermore, $\u$ and $\bar\u$ is an orthonormal basis in $\C^2$.  For $d\ge 2$ view the tensor $\cT_{n,\lambda}$ as a matrix $T$ of dimension $2^{d_1}\times 2^{d_2}$,
where $d_1=\lfloor\frac{d}{2}\rfloor, d_2=\lceil\frac{d}{2}\rceil$.  Hence \eqref{defTnlmab} is a singular value decomposition of $T$.  Thus $\sigma_1(T)=\frac{1}{\sqrt{2}}$
and $\|T\|_1=\sqrt{2}$.  Therefore, $\|\cT_{n,\lambda}\|_{\infty}\le \sigma_1(T)$ and $\|\cT_{n,\lambda}\|_1\ge \|T\|_1$.  However, this singular value decomposition are realized by 
left and right singular vectors which are rank one tensors.  Hence \eqref{Tnineq1} holds.\qed
\end{proof}
\begin{theo}\label{maxrealentqub}
For every mixed real  $n$-qubit state ${\mathcal T}$ we have $\|{\mathcal T}\|_{\infty,\R}\geq 2^{(1-n)/2}$ and $\|{\mathcal T}\|_{1,\R}\leq 2^{(n-1)/2}$ and these inequalities are tight when ${\mathcal T}={\mathcal T}_{n,\lambda}$. In particular, we have
$$
\alpha\Big(((\underbrace{2,2,\dots,2}_n),\R\Big)=2^{(n-1)/2}\mbox{ and }\beta\Big((\underbrace{2,2,\dots,2}_n),\R\Big)=2^{(1-n)/2}.
$$
\end{theo}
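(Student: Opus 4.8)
The plan is to establish the lower bound $\|\cT\|_{\infty,\R}\ge 2^{(1-n)/2}$ for every unit real $n$-qubit state $\cT$ by induction on $n$, using Lemma~\ref{lem:alphaineq} to descend one qubit at a time, and to base the induction at $n=2$ where Lemma~\ref{alphbetdim2} gives $\beta(2,2)=\tfrac{1}{\sqrt 2}$. Concretely, Lemma~\ref{lem:alphaineq} with $n_{d+1}=2$ gives $\beta(2^{\times(d+1)},\R)\ge \beta(2^{\times d},\R)/\sqrt 2$; iterating from the base case $\beta(2^{\times 2},\R)=1/\sqrt 2$ yields $\beta(2^{\times n},\R)\ge 2^{-(n-1)/2}$ for all $n\ge 2$. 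By the duality identity $\alpha(\n,\F)\beta(\n,\F)=1$ of Theorem~\ref{relalphbet} (valid for $d\ge 3$, with $d=2$ handled directly by Lemma~\ref{alphbetdim2}), this simultaneously gives $\alpha(2^{\times n},\R)\le 2^{(n-1)/2}$, i.e.\ $\|\cT\|_{1,\R}\le 2^{(n-1)/2}$ for every unit real $n$-qubit state.

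The remaining and decisive point is to show these bounds are \emph{tight}, witnessed by $\cT=\cT_{n,\lambda}$. Here the obstacle is that $\cT_{n,\lambda}$ is defined using the complex vectors $\u=\tfrac{1}{\sqrt 2}(1,\i)\trans$ and $\bar\u$, yet its entries are real (as the explicit expansions of $\cT_3,\cT_4,\cT_{4,-i}$ show), so it is a legitimate member of $\R^{2^{\times n}}$; the subtlety is that the \emph{real} spectral and nuclear norms could in principle differ from the complex ones computed in Lemma~\ref{Tnineq}. I would argue as follows. For the spectral norm: since $\|\cT_{n,\lambda}\|_{\infty,\R}\le\|\cT_{n,\lambda}\|_{\infty,\C}=1/\sqrt 2$ by the inequality $\|\cdot\|_{\infty,\R}\le\|\cdot\|_{\infty}$ recorded in Section~\ref{sec:specnucnorm}, and since the lower bound just proved gives $\|\cT_{n,\lambda}\|_{\infty,\R}\ge 2^{-(n-1)/2}$ — wait, that lower bound is $2^{(1-n)/2}$, which for $n\ge 2$ is $\le 1/\sqrt2$, so it does not immediately pin the value. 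Instead I would directly exhibit a real rank-one tensor achieving $1/\sqrt 2$: viewing $\cT_{n,\lambda}$ as a $2^{d_1}\times 2^{d_2}$ matrix $T$ with $d_1=\lfloor n/2\rfloor$, $d_2=\lceil n/2\rceil$ as in the proof of Lemma~\ref{Tnineq}, the singular value decomposition \eqref{defTnlmab} has singular value $1/\sqrt2$ realized by left/right singular vectors that are themselves rank-one \emph{real} tensors (being real linear combinations of $\otimes^{k}\u$ and $\otimes^{k}\bar\u$ with conjugate coefficients); pairing such real simple tensors gives $|\an{\cT_{n,\lambda},\cX}| = 1/\sqrt2$ with $\cX\in\Pi^{2^{\times n}}(\R)$, hence $\|\cT_{n,\lambda}\|_{\infty,\R}\ge 1/\sqrt2$, and combined with the upper bound, equality.

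For the nuclear norm I would run the dual argument: $\|\cT_{n,\lambda}\|_{1,\R}\ge\|\cT_{n,\lambda}\|_1=\sqrt2$ by the inequality $\|\cdot\|_{1,\R}\ge\|\cdot\|_1$ of Section~\ref{sec:specnucnorm}, while the general bound already proved gives $\|\cT_{n,\lambda}\|_{1,\R}\le\alpha(2^{\times n},\R)\le 2^{(n-1)/2}$; for $n\ge 3$ this gap is not automatically closed, so instead I produce a real decomposition of $\cT_{n,\lambda}$ of nuclear cost exactly $\sqrt2$. One route: use the real matrix unfolding $T$ above, whose real singular value decomposition $T=\sigma(\u'_L\u_R'^{\top}+\bar{\u}_L'\bar{\u}_R'^{\top})$ — better, take the real form of \eqref{defTnlmab} itself, which writes $\cT_{n,\lambda}$ as a sum of two conjugate rank-one terms each of Hilbert--Schmidt norm $1/\sqrt2$; expanding $\otimes^n\u+\otimes^n\bar\u$ over a real basis $\{(1,0),(0,1)\}$ (as in Section~\ref{sec:symtens}, a $\cT_{n,1}$-type state is $\sum\varepsilon_i\otimes^n\x_i$ with real $\x_i$) exhibits a real rank-one decomposition summing to $\sqrt2$, so $\|\cT_{n,\lambda}\|_{1,\R}\le\sqrt2$, whence equality. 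Finally, tightness of the two tensor bounds for this specific $\cT_{n,\lambda}$ forces $\beta(2^{\times n},\R)\le 1/\sqrt2\cdot(\text{scaling})$; more precisely, since $\|\cT_{n,\lambda}\|=1$ and $\|\cT_{n,\lambda}\|_{\infty,\R}=2^{(1-n)/2}$ — here I must double-check the arithmetic: Lemma~\ref{Tnineq} gives $1/\sqrt2$ independent of $n$, which equals $2^{(1-n)/2}$ only at $n=2$. I expect the resolution is that the \emph{inductive} lower bound $\beta(2^{\times n},\R)\ge 2^{(1-n)/2}$ is matched not by $\cT_{n,\lambda}$ in its complex incarnation but by noting $\beta$ is a minimum over \emph{all} unit real tensors and the W-type / chain construction behind Lemma~\ref{lem:alphaineq} is itself achieved in the real category; thus the main obstacle is verifying that equality propagates through Lemma~\ref{lem:alphaineq}, i.e.\ that the tensor ${\mathcal T}=\lambda_i{\mathcal T}_i\otimes e_i$ attaining the bound has ${\mathcal T}_i$ again extremal and real, which one checks by tracking the equality cases in its proof (all $|\lambda_i|$ equal, ${\mathcal T}_i$ each extremal). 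Assembling: the base case $n=2$, the inductive step via Lemma~\ref{lem:alphaineq}, the duality $\alpha\beta=1$, and the explicit real extremal tensors $\cT_{n,\lambda}$ together give $\alpha(2^{\times n},\R)=2^{(n-1)/2}$ and $\beta(2^{\times n},\R)=2^{(1-n)/2}$. I anticipate the genuinely delicate part is precisely this last bookkeeping — confirming that the real and complex norms of $\cT_{n,\lambda}$ coincide and that the inductive inequality is an equality at each step.
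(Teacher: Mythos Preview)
Your inductive lower bound $\beta(2^{\times n},\R)\ge 2^{(1-n)/2}$ via Lemma~\ref{lem:alphaineq} and the base case $n=2$ is correct and is essentially the paper's argument (the paper writes the induction out explicitly rather than citing the lemma, and starts at $n=1$, but the content is the same). The duality step $\alpha\beta=1$ giving the nuclear bound is also fine.

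The genuine gap is in the tightness argument, and you have correctly diagnosed the source of your own confusion without resolving it. Lemma~\ref{Tnineq} computes the \emph{complex} norms of $\cT_{n,\lambda}$, which are $1/\sqrt 2$ and $\sqrt 2$ independent of $n$; these are not the quantities you need. The \emph{real} spectral norm of $\cT_{n,\lambda}$ is $2^{(1-n)/2}$, strictly smaller than $1/\sqrt 2$ for $n\ge 3$, so your attempt to ``exhibit a real rank-one tensor achieving $1/\sqrt 2$'' cannot succeed (the singular vectors $\otimes^k\u$ with $\u=\tfrac{1}{\sqrt 2}(1,\i)\trans$ are not real rank-one tensors), and your attempt to show $\|\cT_{n,\lambda}\|_{1,\R}=\sqrt 2$ is aiming at the wrong value.

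What the paper does, and what you are missing, is a direct computation of $\|\cT_{n,\lambda}\|_{\infty,\R}$ using Banach's theorem. Since $\cT_{n,\lambda}$ is symmetric, one only needs to maximize $|\langle \cT_{n,\lambda},\otimes^n\x\rangle|$ over real unit vectors $\x=(x,y)\trans$, $x^2+y^2=1$. From the definition \eqref{defTnlmab} this inner product equals
\[
\sqrt 2\,\Bigl|\Re\Bigl(\lambda\Bigl(\tfrac{x+\i y}{\sqrt 2}\Bigr)^{n}\Bigr)\Bigr|.
\]
Because $|x+\i y|=1$, the complex number $(\tfrac{x+\i y}{\sqrt 2})^n$ has modulus $2^{-n/2}$, so the real part is at most $2^{-n/2}$ in absolute value; equality is attained by choosing the argument of $x+\i y$ appropriately. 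Hence $\|\cT_{n,\lambda}\|_{\infty,\R}=\sqrt 2\cdot 2^{-n/2}=2^{(1-n)/2}$, which matches the inductive lower bound and settles $\beta(2^{\times n},\R)=2^{(1-n)/2}$. The nuclear norm equality $\|\cT_{n,\lambda}\|_{1,\R}=2^{(n-1)/2}$ then follows from Theorem~\ref{relalphbet} (the extremizers for $\beta$ coincide with those for $\alpha$), not from any explicit real decomposition.
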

\begin{proof}
We prove the theorem by induction on $n$. The case $n=1$ is clear. If ${\mathcal T}$ is a $n$-qubit tensor of unit length, then we can write
$$
{\mathcal T}=|0\rangle\otimes {\mathcal S}_0+|0\rangle\otimes {\mathcal S}_1
$$
with $\|{\mathcal S}_0\|^2+\|{\mathcal S}_1\|^2=1$. For some $i\in \{0,1\}$ we have $\|{\mathcal S}_i\|_{\infty,\R}^2\geq \frac{1}{2}$, so we get
$$
\|{\mathcal T}\|_{\infty,\R}\geq \|{\mathcal S}_i\|_{\infty,\R}\geq 2^{(2-n)/2}\|{\mathcal S}_i\|\geq 2^{(1-n)/2}.
$$
To calculate $\|{\mathcal T}_{n,\lambda}\|_{\infty,\R}$, we use Banach's theorem. We have to optimize
$$
\Big\langle (x\langle0|+y\langle 0|)^{\otimes n}, {\mathcal T}_{n,\lambda}\Big\rangle=\sqrt{2}\left|\Re\left(\lambda \Big(\frac{x+iy}{\sqrt{2}}\Big)^n\right)\right|
$$
under the constraint $x^2+y^2=1$. The optimal value clearly is equal to $2^{(1-n)/2}$ which shows that  $\|{\mathcal T}_{n,\lambda}\|_{\infty,\R}=2^{(1-n)/2}$.
It follows now that ${\mathcal T}_{n,\lambda}$ also has an optimal nuclear norm, which must be equal to $2^{(n-1)/2}$.\qed
\end{proof}

\begin{theo}\label{M4maxentg} Let  $\cM_4=|M_4\rangle$ be the state given in is \cite{HS00}:
\begin{eqnarray}\notag
&&\frac{1}{\sqrt{6}}\big(\e_1\otimes\e_1\otimes\e_2\otimes\e_2+\e_2\otimes\e_2\otimes\e_1\otimes\e_1+
\omega(\e_2\otimes\e_1\otimes\e_2\otimes\e_1+\e_1\otimes\e_2\otimes\e_1\otimes\e_2)\\
&&+\omega^2(\e_2\otimes\e_1\otimes\e_1\otimes\e_2+\e_1\otimes\e_2\otimes\e_2\otimes\e_1)\big)=\cM_4, \;\omega=e^{2\pi\i/3}.\label{4qubM4}
\end{eqnarray}
Then $\|\cM_4\|_{\infty}=\frac{\sqrt{2}}{3}$.  Hence  
\begin{equation}\label{beta4qub}
\beta(2^{\times 4},\C)=\frac{\sqrt{2}}{3}, \quad \alpha(2^{\times 4},\C)=\frac{3}{\sqrt{2}},
\end{equation}
\end{theo}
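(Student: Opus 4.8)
The plan is to obtain the easy bound $\|\cM_4\|_{\infty}\ge\tfrac{\sqrt2}{3}$ from the material already established, and to prove the matching upper bound by contracting two of the four tensor legs, which reduces the whole question to a single $2\times2$ singular‑value inequality. For the lower bound, each of the six basis vectors in \eqref{4qubM4} has coefficient of modulus $\tfrac1{\sqrt6}$, so $\|\cM_4\|=1$ and $\cM_4$ is admissible in \eqref{charbeta}; hence $\|\cM_4\|_{\infty}\ge\beta(2^{\times4},\C)$, and by Corollary \ref{specnrmminineq}, inequality \eqref{specnrmminineq3} (which rests on $\beta(2,2,2,\C)=\tfrac23$ from \eqref{valalphbet23}), $\beta(2^{\times4},\C)\ge\tfrac23\cdot2^{-1/2}=\tfrac{\sqrt2}{3}$. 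Thus $\|\cM_4\|_{\infty}\ge\tfrac{\sqrt2}{3}$.

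For the upper bound, take unit vectors $\u,\v,\w,\z\in\C^2$ with $\u=(u_1,u_2)$ and $\v=(v_1,v_2)$. Contracting the first two modes of $\cM_4$ against $\bar\u,\bar\v$ produces the $2\times2$ matrix $W(\u,\v)$ with entries $W(\u,\v)_{i_3i_4}=\sum_{i_1,i_2}(\cM_4)_{i_1i_2i_3i_4}\,\bar u_{i_1}\bar v_{i_2}$, and then $\langle\cM_4,\u\otimes\v\otimes\w\otimes\z\rangle=\langle W(\u,\v),\w\otimes\z\rangle$. Maximizing over unit $\w,\z$ gives the spectral norm of the $2\times2$ matrix $W(\u,\v)$, i.e.\ its largest singular value $\sigma_1(W(\u,\v))$, and maximizing then over unit $\u,\v$ yields
\[
\|\cM_4\|_{\infty}=\max_{\|\u\|=\|\v\|=1}\sigma_1\!\left(W(\u,\v)\right).
\]
Since only the weight‑two basis vectors occur in \eqref{4qubM4}, one reads off, with $p:=\bar u_2\bar v_1$ and $q:=\bar u_1\bar v_2$,
\[
W(\u,\v)=\frac1{\sqrt6}\begin{bmatrix}\bar u_2\bar v_2 & \omega^2 p+\omega q\\ \omega p+\omega^2 q & \bar u_1\bar v_1\end{bmatrix},
\]
and, using $\omega^3=1$ and $1+\omega+\omega^2=0$, a short expansion gives $\det W(\u,\v)=-\tfrac16(p-q)^2$ and $\|W(\u,\v)\|^2=\tfrac16\bigl(1+|p-q|^2\bigr)$ (Hilbert--Schmidt norm).

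Put $w:=|p-q|^2=|u_2v_1-u_1v_2|^2$; since $u_2v_1-u_1v_2$ is a $2\times2$ determinant with unit‑norm rows, $0\le w\le1$ by Cauchy--Schwarz. Because $W^*W$ is $2\times2$ positive semidefinite, $\sigma_1(W)^2\le\tfrac29$ is equivalent to $\tfrac29 I-W^*W\succeq0$, hence to the two scalar conditions $\tr(W^*W)=\|W\|^2\le\tfrac49$ and $\det\bigl(\tfrac29 I-W^*W\bigr)=\tfrac4{81}-\tfrac29\|W\|^2+|\det W|^2\ge0$. The first reads $\tfrac16(1+w)\le\tfrac49$, i.e.\ $w\le\tfrac53$, which holds since $w\le1$; the second, after inserting $\|W\|^2=\tfrac{1+w}{6}$ and $|\det W|^2=\tfrac{w^2}{36}$ and clearing denominators, becomes exactly $(3w-2)^2\ge0$, which always holds. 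Therefore $\sigma_1(W(\u,\v))^2\le\tfrac29$ for all unit $\u,\v$, so $\|\cM_4\|_{\infty}\le\tfrac{\sqrt2}{3}$. Combined with the lower bound this gives $\|\cM_4\|_{\infty}=\tfrac{\sqrt2}{3}$, and then $\beta(2^{\times4},\C)=\tfrac{\sqrt2}{3}$ and $\alpha(2^{\times4},\C)=1/\beta(2^{\times4},\C)=\tfrac3{\sqrt2}$ by Theorem \ref{relalphbet}.

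The step I expect to be the crux is the reduction $\|\cM_4\|_{\infty}=\max_{\u,\v}\sigma_1(W(\u,\v))$: one has to keep two of the four normalization constraints and absorb only the remaining two into a singular value — a plain flattening of $\cM_4$ into a $2\times8$ or a $4\times4$ matrix gives only the useless bound $\sigma_1=\tfrac1{\sqrt2}\gg\tfrac{\sqrt2}{3}$, because it discards the product structure of the other legs. The only other delicate point is the bookkeeping of the cube‑root‑of‑unity phases, which is what makes $\det W(\u,\v)$ collapse to the perfect square $-\tfrac16(p-q)^2$ and, with it, the governing inequality to $(3w-2)^2\ge0$.
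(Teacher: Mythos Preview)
Your proof is correct, and it takes a genuinely different route from the paper's. The paper's argument is representation-theoretic: it splits $\cM_4$ along the \emph{last} qubit as $\cM_4=\tfrac{1}{\sqrt{2}}(\cW_0\otimes|0\rangle+\cW_1\otimes|1\rangle)$, where $\cW_0,\cW_1$ are phase-twisted $W$-states, and shows that $\span(\cW_0,\cW_1)$ is an irreducible module for the diagonal $\rU(2)$-action on $(\C^2)^{\otimes 3}$. This forces every unit combination $a\cW_0+b\cW_1$ to have spectral norm exactly $\tfrac{2}{3}$ (since $\rU(2)$ already acts transitively on the orbit of $\cW_0$), and then contracting the fourth leg gives $\|\cM_4\|_\infty=\tfrac{1}{\sqrt{2}}\cdot\tfrac{2}{3}=\tfrac{\sqrt{2}}{3}$ immediately. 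Your approach instead contracts \emph{two} legs to a $2\times2$ matrix $W(\u,\v)$ and reduces the claim to the explicit inequality $(3w-2)^2\ge 0$ with $w=|u_1v_2-u_2v_1|^2$. Your argument is more elementary and fully self-contained --- no Lie algebra generators, no invariant-subspace claim --- and the algebra you need ($\det W=-\tfrac{1}{6}(p-q)^2$, $\|W\|^2=\tfrac{1}{6}(1+w)$, and the PSD criterion via trace and determinant) checks out line by line. The paper's approach, by contrast, explains \emph{why} the bound is sharp structurally (the $\rU(2)$-orbit through $\cW_0$ sweeps out all of $\span(\cW_0,\cW_1)$) and would adapt more readily to analogous states built from other small irreducible summands, whereas your method is tailored to this particular tensor but avoids any machinery beyond $2\times2$ linear algebra.
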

and $\cM_4$ is the most entangled state.

\begin{proof} Let $\phi_3:\U(2)\to \otimes^3\U(2)$ be the diagonal map $A\mapsto \otimes^3 A$.  So $\phi_3(\U(2))$ acts on $3$-qubits.
We claim that the two dimensional space $\W=\span(\cW_0,\cW_1)$ is invariant under the action $\phi_3(\U(2))$.  Consider first a diagonal
unitary matrix: $A=\diag(a,b)$, where $|a|=|b|=1$.  Then 
\[(\otimes^3 A)\cW_0=a b^2\cW_0, \quad (\otimes^3 A)\cW_1=a^2 b\cW_1.\]
Hence $\phi_3(A)\W=\W)$.  It is left to show that $\W$ is invariant under the action of $\phi_3(S\U(2))$,
where $S\U(2)$ is the special unitary group.  Hence, it is enough to show $\W$ is invariant under the action of Lie group of $\phi_3(S\U(2))$.
The generators of the $S\U(2)$ are $i$ times the Pauli matrices:
\[B_1=\left[\begin{array}{cc} i&0\\0&-i\end{array}\right], B_2=\left[\begin{array}{cc} 0&1\\-1&0\end{array}\right], B_3=\left[\begin{array}{cc} 0&i\\i&0\end{array}\right].\]
The generators of $\phi_3(S\U(2)$ are
\[C_j=B_j\otimes I_2\otimes I_2+I_2\otimes B_j\otimes I_2+I_2\otimes I_2\otimes B_j, quad j\in [3].\]
As $\W$ invariant under the action $\phi_3(A)$, where $A$ is a diagonal unitary matrix, we deduce that $\W$ is invariant under the action of $C_1$.
A straightforward calculation shows that $\W$ is invariant under the action of $C_2$ and $C_3$. 
Furthermore, the action of $\phi_3(\rS\U_2)$ on $\W$ is identical to the action of $S\U(2)$ on the two dimensional subspace $\W$, with respect to the orthogonal
basis $\cW_0,\cW_1$.  That is, given a state $(a,b)\trans \in\C^2$, there exists $A$, 
\[A= \left[\begin{array}{cc} a&-\bar b\\b& \bar a\end{array}\right]\]
such that $\phi_2(A)\cW_0=a\cW_0+b\cW_1$.  Hence
\[\|a\cW_0+b\cW_1\|_{\infty}=\|\phi_2(A)\cW_0\|_{\infty}=\frac{2}{3} \quad \textrm{ for } |a|^2+|b|^2=1.\]
Observe that
\[\cM_4=\frac{1}{\sqrt{2}}\left(\cW_0\otimes|0\rangle + \cW_1\otimes|1\rangle\right).\]
Let $\cX=\x\otimes\y\otimes\u\otimes \v=\cY\otimes \v$, where $\|\x\|=\|\y\|=\| \u\|=\|\v\|=1$, be a product state.  Then
\[\sqrt{2}|\an{\cM_4,\cX}|=|\an{a\cW_0+b\cW_1,\cY}|, \quad a=\an{|0\rangle,\v},b=\an{|1\rangle,\v}.\]
Clearly $|a|^2+|b|^2=\|\v\|^2=1$.  Thus if we maximize on all product states $\cY$ and keep $\v$ fixed we get this this maximum is
$\|a\cW_0+b\cW_1\|_{\infty}=\frac{2}{3}$.  This shows that $\|\cM_4\|_{\infty}=\frac{\sqrt{2}}{3}$.  The inequaltity \eqref{specnrmminineq3} for $d=4$
yields that $\beta(2^{\times 4},\C)=\frac{\sqrt{2}}{3}$.  This equality yields the second equality in \eqref{beta4qub}.  Furthermore, $\cM_4$ is the most entangled
$4$-qubit.\qed
\end{proof}
Numerical simulaitons point out that that  \eqref{specnrmminineq3} is not sharp for $d=5$. 
 
Inequality \eqref{specnrmminineq3} yields that 
$$\omega(2^{\times d})\le d-5 +2\log_2 3 \textrm{ for }d\ge 3.$$
The concentration result of \cite{GFE09} claims that most of $d$ quibits. with respect to the corresponding Haar measure,  satisfy the inequality
$$\eta(\cT)\ge d-2\log_2 d-3 \textrm{ for most of } \cT\otimes^d\in\C^2 \textrm{ for } d\gg 1.$$

A Boson is a symmetric state in $\cS\in\rS^d\C^n$, where $\|\cS\|=1$.  The maximum entanglement of Bosons in $\rS^d\C^n$ is $-2\log_2\alpha'(n,d,\C)$.
For $d=3$ the most entangled state $\cW$ is symmetric.
  The most entangled symmetric $4$-qubit is conjectured to be \cite[Example 6.1]{AMM10}
\[\frac{1}{\sqrt{3}}(\otimes^4 \e_1+\frac{1}{\sqrt{2}}(\e_1\otimes\otimes^3\e_2+ \e_2\otimes\e_1\otimes^2\e_2+\otimes^2\e_2\otimes \e_1\otimes\e_2+\otimes^3\e_2\otimes\e_1)).\]
Its spectral norm is $\frac{1}{\sqrt{3}}\approx 0.5774$ \cite{AMM10}.
It is shown in \cite{FK16} that 
\[-2\log_2\alpha'(2^{\times d},\C)\le \log_2(d+1).\]  
(Note that this is totally different from the results of Theorem \ref{maxrealentqub}.)
Furthermore
\[\eta(\cS)\ge  \log_2 d-\log_2\log_2 d-3, \textrm{ for most } \cS\in\rS^d\C^2 \textrm { for } d\gg 1.\]

\section{Preliminaries to computational part of the paper}\label{sec:mineucnrm}
In this section, we recall two well-known minimization problems, which are the foundations for proposing the alternating method for nuclear norm calculation. 

\subsection{Matrix Decomposition}
Let us consider the following matrix decomposition problem:

\begin{problem}\label{auxminprobmat}  Let $m,n\ge 2$ and $Q\ge \min(m,n)$ be given integers.  Assume that $A\in\F^{n\times m}\setminus\{0\}$ is given. Suppose furthermore that there exists a following decomposition of matrix $A$:
\begin{equation}\label{Adecomp}
	A=\sum_{i=1}^{Q} \u_i\v_i\trans, \quad \v_i\ne \0 \textrm{ for } i\in [Q].
\end{equation}
For fixed vectors $\v_i,i\in [Q]$, how to find a \emph{minimal  decomposition} with respect to the absolute norm sums of the components ?
\end{problem}
A minimal decomposition (\ref{Adecomp}) is a solution of  the following minimization problem:
\begin{equation}\label{min:matrix}
\min\limits_{\y_i\in \F^n, i\in [Q]}\sum_{i=1}^{Q} \|\v_i\|\|\y_i\| \quad \text{s.t.} \quad \sum_{i=1}^{Q} \y_i\v_i\trans=A.
\end{equation}
Especially, if the given $v_i$ are unit vectors, we need to solve the following problem:
\begin{equation}\label{min:matrix:2}
\min\limits_{\y_i\in \F^n, i\in [Q]}\sum_{i=1}^{Q} \|\y_i\| \quad \text{s.t.} \quad \sum_{i=1}^{Q} \y_i\v_i\trans=A.
\end{equation}
This is a well known problem called the minimum sum of Euclidean norms \cite{ACCO}, which can be solved efficiently by reformulating it as a Second Order Cone Programming (SOCP) \cite{BV04}. By bringing extra $Q$ variables $t_i$, we reformulate (\ref{min:matrix:2}) as:
\begin{equation}\label{min:matrix:2:socp}
\min\limits_{\y_i\in \F^n, i\in [Q]}\sum_{i=1}^{Q} t_i \quad \text{s.t.}\, \|\y_i\|\leq t_i, ~i\in[Q], \quad \sum_{i=1}^{Q} \y_i\v_i\trans=A.
\end{equation} 

Instead of solving problem (\ref{min:matrix}) with $Q$ vector variables $\y_i,i\in{[Q]}$, we can further reformulate it as an unconstrained minimization problem.
Consider the linear constraint:
\begin{equation}\label{Adecompcond}
\sum_{i=1}^{Q} \y_i\v_i\trans=A,
\end{equation}
which is a non-homogeneous linear system with $nQ$ variables $\y_i$ and $mn$ equations. Let $\w=(w_1,\ldots,w_s)\trans$ be the vector of the free variables of this system.
Hence the general solution of \eqref{Adecompcond} is 
\begin{equation}\label{genforxi}
\y_i=\mathbf{u}_i+B_i\w, \quad i\in[Q],
\end{equation} 
where $\mathbf{u}_i$ is a special solution of the linear system (\ref{Adecompcond}), and $B_i$ is the basis of the null space. Then the minimum objective function $\sum_{i=1}^{N} \|\v_i\|\|\y_i\|$ of problem (\ref{min:matrix}) boils down to 
\begin{equation}\label{mineucsum}
\min_{\w\in\R^s}\sum_{i=1}^Q \| \mathbf{b}_i +A_i\w\|,  \;i\in[N].
\end{equation}
where $\mathbf{b}_i=\|\v_i\|\mathbf{u}_i,\; A_i=\|\v_i\|B_i$.
By bringing in extra variable $w_{s+i}\in \R$ for each $i\in [Q]$, we reformulate problem \eqref{mineucsum} as the following SOCP:
\begin{equation}\label{secordconeprog}
\min\limits_{\w\in\F^s,w_{s+i},i\in[Q]}\sum_{i=1}^N w_{s+i}, \textrm{ subject to } \|\mathbf{b}_i +A_i\w\|\le w_{s+i}, \quad i\in [Q].
\end{equation}

%

\subsection{Second Order Cone Programming}

We claim that each step minimization step in an alternating method for computing the nuclear norm given in \S\ref{sec:altmetnucnrm} is equivalent to the solution
of the following minimum problem in matrix decomposition:

Indeed, consider the minimization problem in \S\ref{sec:altmetnucnrm} for $k=1$.  Then 
\begin{eqnarray*}
	&&n=n_1, m=\prod_{j=2}^d n_j, Q=N(\F), \y_i=\y_{1,i}, \v_i=\otimes_{j=2}^d \y_{j,i},  i\in Q, \\
	&&\phi(\y_{1,1},\ldots,\y_{d,Q})=\sum_{i=1}^{Q} \|\v_i\|\|\y_i\|=\sum_{i=1}^{Q} \|\y_i\|.
\end{eqnarray*}

Let  $N=N(\F)$ and $\y:=(\y_1\trans,\ldots,\y_{N(\F)}\trans)\trans$.  Then \eqref{Adecompcond} is a solvable system of
linear non-homogeneous system of linear equations. 

%

We now restate the minimization problem \eqref{secordconeprog} as SDPT3, MATLAB software for semidefinite-quadratic-linear programming.
With each $\y_i\in \F^n$  we associate a vector $\z_i=(w_i,\y_i\trans)\trans \in K_n$,
where $K_n\subset\R^{n+1}$ is the Lorentzian cone 
\[K_n:=\{\z=(w,\y\trans)\trans\in \R^{n+1}, \;w\ge \|\y\|\}.\]
In the notation of \cite{TTT06} in our problem we have only the variables $\z_i\in K_n$ for $i\in [N]$. The condition \eqref{Adecompcond}, can be restated as the following condition second condition (P) in \cite{TTT06}:
\begin{equation}\label{linearsidecondP}
\sum_{i=1}^N A_i\z_i=\hat A.
\end{equation}
Here $\hat A\in \R^{mn}$ is a vector formed from the matrix $A=[\mathbf{a}_1\;\cdots\;\mathbf{a}_m]\in\R^{n\times m}$ as follows: $\hat A=(\mathbf{a}_1\trans\;\cdots\;\mathbf{a}_m\trans)\trans$.  Then
\begin{eqnarray}\label{defAi}
&&A_i=[v_{1,i} B\trans\;\cdots v_{m,i} B\trans]\trans \in \R^{mn\times (n+1)}, \quad B=[\0\; I_n]\in \R^{n\times (n+1)},\\
&&\v_i=(v_{1,i},\ldots,v_{m,i})\trans, \quad i\in [N].\notag
\end{eqnarray}
The minimizing function is
\begin{equation}\label{minfunc}
\sum_{i=1}^N \|\v_i\|w_i=\sum_{i=1}^N \mathbf{c}_i \trans\z_i, \quad \mathbf{c}_i=(\|\v_i\|,0,\ldots,0)\trans\in \R^{n+1}, i\in[N].
\end{equation}
\section{An alternating method for computing the nuclear norm of nonsymmetric tensors}\label{sec:altmetnucnrm}
Assume that $d\ge 3$ and $n_1,\ldots,n_d\ge 2$ are positive integers.  Let $\cT\in\F^{n_1\times\cdots\times n_d}$ be a given tensor. 
Recall that the unfolding of a tensor $\cT$ in the $j$-th mode is a matrix $T_j\in\F^{n_j\times \frac{n_1\cdots n_d}{n_j}}$.  The entries of $T_{j}$ are indexed by the rows $i_j\in[n_j]$
and the columns by a $d-1$ tuple $\mathbf{k}=(i_1,\ldots,i_{j-1},i_{j+1},\ldots,i_d)$, where $i_p\in[n_p]$ for $p\in [d]\setminus\{j\}$.  Furthermore the entry $(i_j,\bf{k})$ of $T_j$
is $t_{i_1,\ldots,i_d}$.  Let $r_j=r_j(\cT)=\rank T_j$.  If $\rank T_j <n_j$ one can use Gram-Schmidt process to find an orthonormal basis $\mathbf{b}_{1,j}, \ldots, \mathbf{b}_{r_j,j}$ of the columns space of $T_j$, denoted as $\V_j\subseteq \F^{n_j}$.  We assume that if $r_j=n_j$ then $\mathbf{b}_{1,j},\ldots,\mathbf{b}_{n_j,j}$ is the standard orthonormal basis $\e_{1,j},\ldots,\e_{n_j,j}$ in $\F^{n_j}$.
(It is well known that there exists $\cT\in \F^{n_1\times\ldots \times n_d}$ such that $r_j=n_j$, which are most the tensors, if and only if $n_j\le \frac{n_1\cdots n_d}{n_j}$ \cite{FO14}.)  Thus, $\cT\in \otimes_{j=1}^d \V_j$.  Equivalently, $\cT$ has the Tucker representation
\[\cT=\sum_{l_j\in[r_j], j\in [d]} t'_{l_1,\ldots,l_d} \otimes_{j=1}^d \mathbf{b}_{l_j,j},\quad \text{where}~t'_{l_1,\ldots,l_d}\in \F.\]
For simplicity of the exposition we assume that $\rank T_j=n_j$ for $j\in[d]$.  

We now describe the alternating method for computing the nuclear norm of tensor $\cT$. The iteration process is designed over vector variables
$\y_{k,1},\ldots,\y_{k,N}\in \F^{n_k}$ for each $k\in[d]$ in an alternating scheme.  Let 
\begin{equation}\label{defNdF}
 N(n_1,\ldots,n_d,\R)=\prod_{i=1}^d n_i,\quad N(n_1,\ldots,n_d,\C)=2\prod_{i=1}^d n_i.
\end{equation}

{\bf \underline{Extension Step:}} Assume first that we have a decomposition of the tensor $\cT$ as a sum of rank one (nonzero) tensors
\[\cT=\sum_{i=1}^{N'} \otimes_{j=1}^d \x_{j,i}, \quad N'\le N(n_1,\ldots,n_d,\F).\]
If $N'<N(n_1,\ldots,n_d,\F)$ we first extend the above decomposition  to
\begin{equation}\label{initdectenT}
\cT=\sum_{i=1}^{N} \otimes_{j=1}^d \y_{j,i}, \quad N=N(n_1,\ldots,n_d,\F).
\end{equation}
as follows:  
\begin{enumerate}
\item  $\y_{j,i}=\frac{1}{\|\x_{j,i}\|}\x_{j,i}$ for $j\in[d]\setminus\{k\},i\in [N']$.  
\item $\y_{k,i}=(\prod_{j\in [d]\setminus\{k\}} \|\x_{j,i}\|) \x_{k,i}$ for $i\in[N']$.
\item The vectors $\y_{j,i}\in \F^{n_j}$ for $j\in [d]\setminus\{k\}$ and  $i=N'+1,\ldots, N$ are random norm one vectors.
\item $\y_{k,i}=\0$ for  $i=N'+1,\ldots, N$.
\end{enumerate}

{\bf \underline{Minimization Step:}} We fix the vectors $\y_{j,i}$ for $j\in [d]\setminus\{k\}, i\in[N]$, and view the equality \eqref{initdectenT} as a system of $N(n_1,\ldots,n_d,\F)$
scalar equations in $N$ vector variables $\y_{k,1},\ldots,\y_{k,N}\in\F^{n_k}$.  Define the objective function


\begin{equation}\label{objectfunc:k}
\phi_k(\y_{k,1},\ldots,\y_{k,N})=\sum_{i=1}^N \left(\prod_{j\in[d]\setminus\{k\}} \|\y_{j,i}\|\right) \|\y_{k,i}\|.
\end{equation}

Observe that $\phi_k(\y_{k,1},\ldots,\y_{k,N})$ is an upper bound on $\|\cT\|_{1,\F}$ induced by the decomposition \eqref{initdectenT}.   
Then our minimization problem is
\begin{equation}\label{minaltnucnrm}
\min\left\{\phi_k(\y_{k,1},\ldots,\y_{k,N}), \;\y_{k,1},\ldots,\y_{k,N}\in\F^{n_k} \textrm{ subject to conditions \eqref{initdectenT}}\right\}.
\end{equation}

As   $\prod_{j\in[d]\setminus\{k\}} \|\y_{j,i}\|>0$ for each $i\in [N]$ the function $\phi_k$ is a strict convex function in variables $\y_{k,i}, i\in[N]$.  Hence the above minimum is achieved 
at the unique $\y_{k,1}^{\star},\ldots,\y_{k,N}^\star$.  This gives rise to another decomposition of $\cT$
\begin{equation}\label{objectfuncs}
\cT=\sum_{i=1}^{N} \left(\otimes_{j=1}^{k-1} \y_{j,i}\right)\otimes \y_{k,i}^\star\otimes\left(\otimes_{j=k+1}^d \y_{j,i}\right).
\end{equation}



We now repeat the above {\bf Extension Step} and {\bf Minimization Step} for $k'\in[d]$ until the relative decrease of the objective function $\phi_k$ is smaller than the determined threshold $\epsilon$ (a tiny positive number). We stop the algorithm and output the last value of the target function $\phi_k$ as the nuclear norm of $\cT$ and the corresponding (\ref{objectfuncs}) is a decomposition of tensor $\cT$. 

\begin{algo}\label{alg:nonsym} \emph{(Nonsymmetric Tensor Nuclear Norm Computation)}
	~
	
	{\bf Input:} Nonsymmetric tensor $\cT\in \F^{n_1\times \cdots \times n_d}$, tolerance $\epsilon>0$, iteration $I = 1 $, maximum iteration $I_{\max}$, $N$, and initial point $x_{j,i}\in \F^{n_j}, i\in [N]$,
	let $\phi_{k}^{0} = +\infty $ for all $k\in [d]$.
	
	
	{\bf Step 1:} For $k=1:d$,
	
	(a) do {\bf Extension Step} 1-4 and {\bf Minimization Step} by solving (\ref{minaltnucnrm});
	
	(b) if $|\phi_{k}^{I} - \phi_{k}^{I-1}|<\epsilon$, then break and 
	go to Step 2, otherwise go to Step 3. 
	
	{\bf Step 2:} Output $\|\cT\|_{1,\F} = \phi_{k}^{I}$, and decomposition (\ref{objectfuncs}).
	
	{\bf Step 3:} Set $I = I+1$, if $I\leq I_{\max}$, go to Step 1; otherwise, go to Step 2. 
	
\end{algo}
Note that the value of $\phi$ is a better lower bound for $\|\cT\|_{1,\F}$ then the lower bound $\phi(\y_{1,1},\ldots,\x_{d,N})$ induced by the
decomposition \eqref{initdectenT}.  
Note that it is possible that exactly $N-\hat N$ vectors $\y_{k,i}^\star=\0$.   Hence the decomposition \eqref{objectfuncs} gives rise to decomposition of $\cT$
to $\hat N$ rank one tensors.   
 

\section{An alternating method for computing the nuclear norm of symmetric tensors}\label{sec:altmetsymten}
Denote by $\Sigma_d$ the group of all permutations of $[d]$.  Recall that the cardinality of $\Sigma_d$ is $d!$.
 For each $\x_1,\ldots,\x_d\in\F^n$ let us denote
\[\sym_d(\x_1,\ldots,\x_d)=\frac{1}{d!}\sum_{\sigma\in\Sigma_d}\x_{\sigma(1)}\otimes\cdots\otimes \x_{\sigma(d)}\in \rS^d\F^n.\]
Observe the basic equality
\[\|\x_1\|\cdots\|\x_d\|=\frac{1}{d!}\sum_{\sigma\in\Sigma_d}\|\x_{\sigma(1)}\|\cdots \|\x_{\sigma(d)}\|.\]
Fix $\x_1,\ldots,\x_{d-1}$.  Then $L(\x_1,\ldots,\x_{d-1})$ is a linear operator from $\F^n$ to $\rS^d\F^n$ given by the equality 
\[L(\x_1,\ldots,\x_{d-1})(\x_d)=\sym_d(\x_1,\ldots,\x_d).\]
Observe the equality
\[L(\x_1,\ldots, \x_{k-1},\x_{k+1},\ldots,\x_{d})(\x_k)=\sym_d(\x_1,\ldots,\x_d) \textrm { for each } k\in[d].\]
Suppose we have a decomposition of a symmetric tensor $\cS\in\rS^d\F^n$ to a sum of the rank one tensor
\begin{equation}\label{nonsymdec}
\cS=\sum_{i=1}^{K} \otimes_{j=1}^d \x_{j,i} \in\rS^d\F^n.
\end{equation}
The decomposition \eqref{nonsymdec} induces a symmetric decomposition
\begin{equation}\label{symdec}
\cS=\sum_{i=1}^K L(\x_{1,i},\ldots,\x_{k-1,i},\x_{k+1,i}\ldots\x_{d,i})(\x_{k,i}) \textrm{ for each } k\in[d].
\end{equation}
Note that if $\span(\x_{1,i})=\cdots=\span(\x_{d,i})=\span(\x_{i})\subset\F^n$ for $i=1,\ldots,K$.  Then it follows that $\cS=\sum_{i=1}^K \varepsilon_i\otimes^d\z_i$, where
$\z_i\in \span(\x_i)$ and $\varepsilon_i=\pm 1$ for $i\in[K]$.  (We can always assume that $\varepsilon_i=1$ if $\F=\C$ or $\F=R$ and $d$ is odd.) 
The extension of Banach's theorem for the nuclear
norm of symmetric tensors \cite{FLn16} implies the following decomposition to a sum of rank one symmetric tensors
\begin{equation}\label{minnucdecsymF}
\cS=\sum_{i=1}^K \varepsilon_i\otimes^d\x_i, \x_i\in\F^n,\varepsilon_i=\pm 1,i\in[K], \|\cS\|_{1,\F}=\sum_{i=1}^K \|\x_i\|^d, K\le M(\F).
\end{equation}
Here
\begin{equation}\label{defMF}
M(\R)={n+d-1\choose d-1}, \quad M(\C)=2{n+d-1\choose d-1}.
\end{equation}
Hence for finding the nuclear norm of $\cS\in\rS^d\F^n$ using an alternating method we need consider only the decomposition \eqref{symdec} of $\cS$ of where each $\x_{j,i}\ne \0$
and $K\le M(\F)$.  

Let $r_j(\cS)$ be the rank of the unfolded matrix $T_j(\cS)$ as in \S\ref{sec:altmetnucnrm}.  Clearly, $r_1(\cS)=\cdots=r_d(\cS)=r$.  If $r<n$ then $\V_1=\cdots=\V_r=\V$ is
the columns space of each $T_j(\cS)$.  As in \S\ref{sec:altmetnucnrm} it follows that $\cS\in \rS^d\V$.  In what follows we assume that $r_j(\cS)=n$.

Recall that $\rS^d\F^n$ has a standard orthogonal basis consisting of $\sym(\e_{i_1},\ldots,\ldots,\e_{i_d})$, where $1\le i_1\le i_2\le \cdots\le i_d\le n$ \cite{FK16}. 
This representation gives rise to a representation of the form \eqref{symdec}.   Here $K=M(\R)-\hat K$, where  $\hat K$ is the number of zero coordinates of $\cS=[s_{i_1,\ldots,i_d}]$
satisfying $1\le i_1\le i_2\le\cdots\le i_d\le n$.   As in \S\ref{sec:altmetnucnrm} each representation \eqref{symdec} with $K\le M(\F)$, where all $\x_{i,j}\ne \0$ induces the following representation for a given $k\in[d]$
\begin{equation}\label{repsymSM}
\cS=\sum_{i=1}^{M} L(\y_{1,i},\ldots,\y_{k-1,i},\y_{k+1,i}\ldots\y_{d,i})(\y_{k,i}), \quad M=M(\F),
\end{equation}
which satisfies the following conditions.
\begin{enumerate}
\item $\y_{j,i}=\frac{1}{\|\x_{j,i}\|}\x_{j,i}$ for $j\in[d]\setminus\{k\}$ and $i\in[K]$.
\item $\y_{k,i}=(\prod_{j\in[d]\setminus\{k\}}\|\x_{j,i})\|\x_{k,i}$ for $i\in[K]$.
\item $\y_{j,i}$ is a random vector in $\F^n$ of norm one for $j\in[k]\setminus\{i\}$ and $i=K+1,\ldots,M$.
\item $\y_{k,i}=\0$ for $i=K+1,\ldots,M$.
\end{enumerate}
Then the upper bound for the nuclear norm induced by \eqref{repsymSM} is given by
\begin{equation}\label{defphi}
\psi(\y_{1,1},\ldots,\y_{d,M})=\sum_{i=1}^M \prod_{j=1}^d\|\y_{j,i}\|.
\end{equation}
Not that the above decomposition gives rise to a symmetric decomposition of $\cS$ to a sum of rank one matrix.  The upper bound of the nuclear norm
for this decomposition is also $\phi(\x_{1,1},\ldots,\x_{d,M})$.

Hence the alternating method for computing the nuclear norm of a given symmetric tensor is given by the basic minimum step 
\begin{equation}\label{basminstep}
\min\{\psi(\y_{1,1},\ldots,\y_{d,M}), \textrm{ on } \y_{k,1},\ldots,\y_{k,M}\in\F^n, \textrm{ subject to \eqref{repsymSM}}\}.
\end{equation}
The advantage of this method versus the method in \S\ref{sec:altmetnucnrm} is that we replace $N(\F)$ by much smaller number $M(\F)$.
Furthermore the number of linear conditions is  ${n+d-1\choose d}$ versus $n^d$.
As in the nonsymmetric case, the minimum problem \eqref{basminstep} can be solved by  SDPT3 software.  In the setting \eqref{linearsidecondP},
the vector $\hat A$ has ${n+d-1\choose n-1}$ coordinates.

\begin{algo}\label{alg:sym} \emph{Symmetric Tensor Nuclear Norm Computation} 
	
	{\bf Input:} Symmetric tensor $\cS\in \F^{n\times \cdots \times n}$, tolerance $\epsilon>0$, iteration $I = 1 $, maximum iteration $I_{\max}$, $N$, and initial point $x_{j,i}\in \F^{n_j}, i\in [N]$,
	let $\phi_{k}^{0} = +\infty $ for all $k\in [d]$.
	
	
	{\bf Step 1:} For $k=1:d$,
	
	(a) do {\bf Extension Step} 1-4 and {\bf Minimization Step} by solving (\ref{basminstep});
	
	(b) if $|\phi_{k}^{I} - \phi_{k}^{I-1}|<\epsilon$, then break and 
	go to Step 2, otherwise go to Step 3. 
	
	{\bf Step 2:} Output $\|\cT\|_{1,\F} = \phi_{k}^{I}$, and decomposition (\ref{objectfuncs}).
	
	{\bf Step 3:} Set $I = I+1$, if $I\leq I_{\max}$, go to Step 1; otherwise, go to Step 2. 	
	
\end{algo}

\section{Numerical Experiments}\label{sec:numres}
In this section, we give some numerical examples to demonstrate the performance of Algorithms \ref{alg:nonsym} and \ref{alg:sym}. All the computation are implemented with Matlab R2012a on a MacBook Pro 64-bit OSX (10.9.5) system with 16GB memory and 2.3 GHz Interl Core i7 CPU. The SOCP subproblem is formulated with software Yalmip \cite{yalmip} and solved with SDPT3 \cite{tutuncu2001sdpt3}. We use the default values of the parameters in SDPT3.

In the tables, $K_{\text{rand}}$ stands for the number of random examples;  $\text{Min}_{\F,\text{Iter}}$, $\text{Avg}_{\F,\text{Iter}}$, $\text{Max}_{\F,\text{Iter}}$ stand for the minimum, average, maximum iterations for the $K_{\text{rand}}$ random examples over field $\F = \R$ or $\C$. Similarly, $\text{Min}_{\F,\text{Time}}$, $\text{Avg}_{\F,\text{Time}}$, $\text{Max}_{\F,\text{Time}}$ stand for the minimum, average, maximum computational CPU time for the $K_{\text{rand}}$ random examples over field $\F = \R$ or $\C$. For cleanness of the paper, we keep four digits for all numerical results. 

For different starting point $\x_0$, Algorithms \ref{alg:nonsym} and \ref{alg:sym} might converge to different local minimizer, so we consider to implement Algorithms \ref{alg:nonsym} and \ref{alg:sym} with random starting points $\x_0$ for 30 times, and choose the one with smallest objective function value as the nuclear norm. 

Equation (\ref{optcondT}) is a necessary condition that holds for maximum entangled states. We are interested to find these maximum entangled states over field $\F$, So in the tables, we report both spectral norm and complex norms for each tensor, and we also report the product of these two norms over field $\F$ as {\sf P}$_{\F}$, i.e, $\|\cT\|_{1,\F}\|\cT\|_{\infty,\F}$. 

For each real state, we report its spectral norm and nuclear norm over field $\F = \R$ and $\C$, so there are four norms. For each complex state, we only report its spectral norm and nuclear norm over $\F =\C$.
In the following, we list all the numerical methods used to calculate these norms:
\begin{enumerate}
	\item For nonsymmetric tensor, we calculate its nuclear norm by Algorithm \ref{alg:nonsym}.
	\item For symmetric tensor, we use two numerical methods to calculate its nuclear norm. The first one is Algorithm \ref{alg:sym}, another one is semidefinite relaxation method which is proposed by Nie \cite{Ni16}. We implement semidefinite relaxation method with software Gloptipoly \cite{henrion2009gloptipoly} and the formulated SDP problem is solved by Sedumi \cite{sturm1999using}. 
	\item For real nonsymmetric state, we calculate its real spectral norm by using semidefinite relaxation method \cite{NW14}.
	\item For symmetric state (either real or complex), we calculate its real and complex spectral norms by using the method proposed in \cite{FW16}, which is numerically implemented with Software Bertini \cite{BHSW06} (version 1.5, released in 2015).
	\item For complex nonsymmetric state, we calculate its complex spectral norm by using the semidefinite relaxation method \cite{NW14}, which is originally designed for real spectral norm computation. For complex spectral norm calculation, we can replace each complex variable with two real variables, then (\ref{Banthm}) can be easily reformulated as a homogeneous polynomial optimization problem \cite{NW12}. The method discussed in \cite{NW14} can also be applied to find the complex spectral norm, however, the global optimality certification condition may not always hold.	In this case, the complex spectral norm provided by the semidefinite relaxation method might only be an upper bound. Please refer to \cite{NW14} for details.
\end{enumerate}



 

\subsection{Density Tensor Separability Checking}\label{subsec:density}
In the following, we first test the alternating algorithm \ref{alg:nonsym} on some density tensors whose separability
is known in advance, i.e., its nuclear norm is equal to 1 or not is known. Also, we randomly generate some separable density tensors, and calculate their nuclear norm by Algorithm \ref{alg:nonsym} to see if their nuclear norm is equal to 1 or not. 

\begin{example}\label{cite:exm:2.5}\emph{\cite[Example 2.5]{Nak08}}
	Let us consider the following density tensor $\cT\in \C^{2\times 2\times 2\times 2}$ with $b\in [0,1]$, which is known to be inseparable for any $\frac{1}{3}< b\leq 1$. For $b\in \{1,3/4,2/3,1/2,1/3,1/4,1/5,0\}$, we calculate the nuclear norm by Algorithm \ref{alg:nonsym}, and the results are shown in Table \ref{inseparable:exm:1}. For $b\in [0,1/3]$, we get the numerical nuclear norms of these density tensor are 1, which match Lemma \ref{traceineqlem}. For $b\in (1/3,]$, it is known that tensors $\cT$ are inseparable, and $\text{tr}(\cT) = 1$, so we must have
	$\|\cT\|_{1,F}> 1$	by Lemma \ref{traceineqlem}. For most $b$, we do get the nuclear norm is bigger than 1. However, there is one special case, for $b = 1/2$, we numerically find the nuclear norm of tensor $\cT$ is equal to 1, which contradict to Lemma \ref{traceineqlem}. We conjecture here, the nuclear norm for $b = 1/2$ should be a number that is very close to 1, but numerically, we might not able to detect this fact in our implementation. 
	\begin{equation*}
	\begin{aligned}
	\cT_{1,1,1,1}& = \cT_{2,2,2,2} = \frac{1-b}{4},\ \cT_{1,2,1,2} = \cT_{2,1,2,1}= \frac{1+b}{4},\ \cT_{1,2,2,1} = \cT_{2,1,1,2}=-\frac{b}{2}.
	\end{aligned}
	\end{equation*} 
	\begin{table}[htb]
		\centering
		\begin{scriptsize}
			\begin{tabular}{|c|c|c|c|c|c|c|c|c|c|c|c|}  \hline
				$b$ & 1& 3/4 & 2/3 & 0.60 & 0.55 & 0.52 & 1/2& 1/3& 1/4& 1/5 & 0  \\ \hline
				$\|\cT\|_{1,\mathbb{C}}$	&    2.0000  & 1.5000 
				&  1.3333 & 1.2000 & 1.1000& 1.04000& 1.0000 & 1.0000 & 1.0000 & 1.0000 & 1.0000
				\\ \hline 
			\end{tabular}
		\end{scriptsize}\caption{Nuclear norm of example \ref{cite:exm:2.5}.  } \label{inseparable:exm:1}
	\end{table}     
\end{example}

\begin{example}\label{cite:exm:2.6}\emph{\cite[Example 2.6]{Nak08}} Let us consider the following density tensor $\cT\in \C^{2\times 4\times 2\times 4}$ with parameter $b\in [0,1]$, which is known to be inseparable for $b\in (0,1]$, and separable for $b=0$. In Table \ref{inseparable:exm:2}, we list the nuclear norms for $b= \{1, 3/4, 2/3, 1/2, 1/3, 1/4, 1/5, 0\}$, which are calculated by Algorithm \ref{alg:nonsym}. For $b= 0$, tensor $\cT$ is separable, by Lemma \ref{traceineqlem}, we know its nuclear norm is equal to 1. From Table \ref{inseparable:exm:2}, we can see our numerical result also certifies this fact. For each $b>0$, we get $\|\cT\|_{1,\C}>1$. Since for any $b$, $\text{tr}(\cT) = 1$, by Lemma \ref{traceineqlem}, we have $\|\cT\|_{1,\C}>1$ since $\cT$ is inseparable for any $b>0$. Numerical results in Table \ref{inseparable:exm:2} also certify this fact.
	
	\begin{equation*}
	\begin{aligned}
	\cT_{1,1,1,1}& = \cT_{1,2,1,2}=\cT_{1,3,1,3}=\cT_{1,4,1,4}=\cT_{1,1,2,2} = \cT_{1,2,2,3} =\frac{b}{7b+1},\\
	\cT_{2,2,2,2}& = \cT_{2,3,2,3} = \cT_{1,3,2,4}  = \cT_{2,2,1,1} = \cT_{2,3,1,2} = \cT_{2,4,1,3}= \frac{b}{7b+1},\\
	\cT_{2,1,2,1}& = \cT_{2,4,2,4} = \frac{1+b}{2(7b+1)},\quad \cT_{2,1,2,4}  = \cT_{2,4,2,1} = \frac{\sqrt{1-b^2}}{2(7b+1)},\\ 
	\end{aligned}
	\end{equation*} 
	
	\begin{table}[htb]
		\centering
		\begin{scriptsize}
			\begin{tabular}{|c|c|c|c|c|c|c|c|c|c|c|c|}  \hline 
				$b$ & 1& 3/4 & 2/3 & 1/2& 1/3& 1/4& 1/5 & 0  \\ \hline			   
				$\|\cT\|_{1,\mathbb{C}}$	& 
				1.0106 & 1.0232 & 1.0282 & 1.0367 & 1.0376 & 1.0372 & 1.0362 & 1.0000
				\\ \hline 
			\end{tabular}
		\end{scriptsize}\caption{Nuclear norm of example \ref{cite:exm:2.6}} \label{inseparable:exm:2}
	\end{table}	
\end{example}

\begin{example} \emph{(Random Separable Density Tensor Examples)} We test Algorithm \ref{alg:nonsym} on random density tensors $\cT\in \C^{n_1\times \ldots \times n_d\times n_{d+1} \times \ldots \times n_{2d}}$, which are generated as follows: (i) let $r\in [2\prod_{j=1}^d n_j]$ be a random integer number; (ii) randomly generate $r$ positive numbers $p_i$ satisfy $\sum\limits_{i=1}^rp_i=1$; (iii) randomly generate nonzero vectors $x_{j,i}\in \C^{n_j}, ~j\in [d], i\in [r]$, and normalize them as length 1 vectors; (iv) calculate tensor $\cT$ by formula (\ref{defsepdenten}). Computationally, for each density tensor, we will get its nuclear norm close to 1, with tiny numerical error. In Table \ref{tensor:nuclear:random:density}, we list the average iteration and average computational time. 
	\begin{table}[htb]
		\centering
		\begin{scriptsize}
			\begin{tabular}{|c|c|c|c|c|c|c|c|c|c|} \hline
				$d$ & ($n_1,\ldots,n_d$) & $K_{\text{rand}}$ & Type &    $\text{Avg}_{\text{Iter}}$ & $\text{Avg}_{\text{Time}}$ 
				\\
				\hline 
				2 & (2,4) & 20 & nonsym & 6.10  & 0:00:28     \\ \hline	
				2 & (2,5) & 20 & nonsym &4.45  &  0:00:54 	\\ \hline
				2 & (2,6) & 20 & nonsym &5.55  &  0:03:30	\\ \hline
				2 & (3,4) & 20 & nonsym &5.50 &  0:02:42   	\\ \hline
				2 & (4,4) & 10 & nonsym  & 7.25  &  0:19:22  	\\ \hline 
				3 & (2,2,2) & 20 & nonsym & 11.00 & 0:01:38	\\ \hline	
				3 & (2,2,3) & 20 & nonsym & 10.25 & 0:20:01 	\\ \hline	
		 	3 & (2,3,3) & 5 & nonsym & 7.60 &  3:19:09	\\ \hline
				2 & (2,2) & 20 & sym  & 2  & 0:00:06 \\ \hline
				2 & (3,3) & 20 & sym  & 2  & 0:06:11 \\ \hline
				3 & (2,2,2) & 10 & sym &	3.5 & 0:37:25 \\ \hline 
			\end{tabular}\caption{Computational results for random density tensors} \label{tensor:nuclear:random:density}
		\end{scriptsize}
	\end{table}

\end{example}

\subsection{Nonsymmetric Tensors}\label{subsec:nonsym}
In this subsection, we report the performance of Algorithm \ref{alg:nonsym} on nonsymmetric tensors. We will test Algorithm \ref{alg:nonsym} on the following tensors: (1) explicit nonsymmetric $d$-qubits found from references \cite{Stephen2007,HS00}; (2) random nonsymmetric $d$-qubits; (3) random nonsymmetric tensors. 
 
%

\subsubsection{Nonsymmetric $d$-qubits }
We test Algorithm \ref{alg:nonsym} on some nonsymmetric $d$-qubits
that we can find from references. The tensors and their four norms are reported in Table \ref{tensor:1:small:nonsym:Stephen}. Examples No.1-3 are found from \cite{Stephen2007}. Example No.4 is conjectured in \cite{HS00} as the maximum entangled state for $d=4$ and $\F=\C$.
Example No.4 is a complex state, so its real nuclear norm and spectral norm do not exist, we use ``--'' in the table. We also report the product of the nuclear norm and spectral norm over field $\F$ for each tensor. In Table \ref{tensor:1:small:nonsym:Stephen}, we certify that the necessary condition (\ref{optcondT}) holds for Example No.4 over field $\F = \C$. The equation (\ref{optcondT}) also holds for Examples No. 1-3 over field $\F =\C$, however, their complex nuclear norm is smaller than Example No. 4, which shows that condition (\ref{optcondT}) is only necessary but not sufficient for maximum entangled state.

%
\begin{table}[htb]
	\centering
	\begin{scriptsize}
		\begin{tabular}{|c|c|p{3.9cm}|c|c|c|c|c|c|c|} \hline
			No.  & $d$ & \quad\quad\quad\quad \quad  Tensor  & $\|\cT\|_{1,\R}$  &  $\|\cT\|_{1,\C}$ & $\|\cT\|_{\infty,\R}$ &  $\|\cT\|_{\infty,\C}$ & {\sf P}$_{\R}$ & {\sf P}$_{\C}$ \\ 
			\hline  \multirow{2}{*}{1} & \multirow{2}{*}{4} & $\cT_{1,1,1,1} =\cT_{1,2,2,2} =\frac{1}{2}$ & \multirow{2}{*}{2.0005} & \multirow{2}{*}{2.0002}  & \multirow{2}{*}{0.5000}& \multirow{2}{*}{0.5000} & \multirow{2}{*}{1.0003} & \multirow{2}{*}{1.0001}\\
			& &$\cT_{2,1,1,2}=\cT_{2,2,2,1} =\frac{1}{2}$ &&&&&& \\ \hline 
			  \multirow{4}{*}{2} &\multirow{4}{*}{4}& $\cT_{1,1,1,1} =\cT_{2,2,1,2} =\frac{1}{2}$  & \multirow{4}{*}{2.0002} & \multirow{4}{*}{2.0001} & \multirow{4}{*}{0.5000}   & \multirow{4}{*}{0.5000}  & \multirow{4}{*}{1.0001} & \multirow{4}{*}{1.0001}\\
			&& $\cT_{2,1,2,2} =\cT_{1,2,2,1} =\frac{1}{2\sqrt{2}}$&&&&&&\\
			&& $\cT_{1,1,2,2} = \frac{1}{2\sqrt{2}}$ &&&&&&\\
				&& $\cT_{2,2,2,1} =-\frac{1}{2\sqrt{2}}$ &&&&&&\\  \hline
			\multirow{2}{*}{3} & \multirow{2}{*}{4} & $\cT_{1,1,1,1} =\cT_{1,2,1,2} =\frac{1}{2} $ & \multirow{2}{*}{2.0000} & \multirow{2}{*}{2.0000} & \multirow{2}{*}{0.5000} & \multirow{2}{*}{0.5000} & \multirow{2}{*}{1.0000} & \multirow{2}{*}{1.0000}\\
			&& $\cT_{2,1,2,1}=\cT_{2,2,2,2} =\frac{1}{2}$&&&&&&\\  
			\hline  \multirow{4}{*}{4} &\multirow{4}{*}{4}& $\cT_{1,1,2,2} =\cT_{2,2,1,1} = \frac{1}{\sqrt{6}}$ &\multirow{4}{*}{--} & \multirow{4}{*}{2.1216} & \multirow{4}{*}{--}&  \multirow{4}{*}{0.4714} & \multirow{4}{*}{--} & \multirow{4}{*}{1.0001}\\
			&& $\cT_{2,1,2,1}=\cT_{1,2,1,2} = \frac{\zeta}{\sqrt{6}}$&&&&&&\\ && $\cT_{2,1,1,2}=\cT_{1,2,2,1} =\frac{\zeta^2}{\sqrt{6}}$&&&&&&\\
			&& $\zeta=\frac{-1+i\sqrt{3}}{2}$&&&&& &\\
	 \hline  \multirow{4}{*}{5} & \multirow{4}{*}{5} & $\cT_{2,1,1,1,2} = \cT_{1,2,1,1,1} =\frac{1}{2\sqrt{2}}$ & \multirow{4}{*}{2.8284} & \multirow{4}{*}{2.8281} & \multirow{4}{*}{0.3536}  & \multirow{4}{*}{0.3536} & \multirow{4}{*}{1.0001} & \multirow{4}{*}{1.0000}\\
	 && $ \cT_{1,1,2,1,2}=\cT_{1,1,2,2,1}=\frac{1}{2\sqrt{2}}$&&&&&&\\
	 && $\cT_{2,2,2,1,1}=\cT_{2,2,2,2,2}=\frac{1}{2\sqrt{2}}$&&&& &&\\ 
	 && $\cT_{2,1,1,2,1}=\cT_{1,2,1,2,2} = - \frac{1}{2\sqrt{2}}$&&&& &&\\ 
			\hline
		\multirow{2}{*}{6} & \multirow{2}{*}{6} & 	$\cT_{i,j,k,i,j,k} =\frac{1}{\sqrt{8}}$  &  \multirow{2}{*}{2.8283} & 
		\multirow{2}{*}{2.8283}
		& \multirow{2}{*}{0.3536} & \multirow{2}{*}{0.3536}   
		  & \multirow{2}{*}{1.0001}& \multirow{2}{*}{1.0001} \\
		  && for $i,j,k\in\{1,2\}$ &&&&&&\\
			\hline 
		\end{tabular}
	\end{scriptsize}\caption{Computational results for nonsymmetric $d$-qubits} \label{tensor:1:small:nonsym:Stephen}
\end{table}

%
  
\subsubsection{Random nonsymmetric $d$-qubits}
It is interesting to find the maximum entangled states. In this example, we consider to randomly generate nonsymmetric states, and calculate their nuclear norm by implementing Algorithm \ref{alg:nonsym} over field $\F$. 
We generate a nonsymmetric tensor with each entry being a random variable obeying Gaussian distribution (by {\bf randn} in Matlab), then we normalize the generated random tensor and get a random $d$-qubit $\cT$ with $\|\cT\| = 1$. For $d = 3,4,5,6$, we randomly generate 500 states over field $\F=\R$ and $\C$, and report the maximal nuclear norm we find over these 500 randomly generated states. The computational results are shown in Table \ref{tensor:1:random:nonsym:qubit}. 
The corresponding real states that get the maximal real nuclear norm are shown in Table \ref{tensor:2:random:nonsym:qubit}, and the corresponding complex states that get the maximal complex nuclear norm are shown in Table \ref{tensor:3:random:nonsym:qubit}. 

\begin{table}[htb]
	\centering
	\begin{scriptsize}
		\begin{tabular}{|c|c|c|c|c||c|c|c|c|c|} \hline
			$d$  & $\F$ &  $\|\cT\|_{1,\F}$  &  $\|\cT\|_{\infty,\F}$ &  {\sf P}$_{\F}$ & $d$  & $\F$ &  $\|\cT\|_{1,\F}$  &  $\|\cT\|_{\infty,\F}$ &  {\sf P}$_{\F}$ \\ \hline 
			3  & $\R$ & 1.9743 & 0.5961 & 1.1769  & 3 & $\C$ & 1.4477 & 0.8082 & 1.1700\\ \hline 
			4  & $\R$ & 2.2665 & 0.6062 & 1.3737& 4 & $\C$ & 1.9279 & 0.6187 & 1.1928\\ \hline  
			5  & $\R$ & 2.5323 & 0.5533 & 1.4011 & 5 &  $\C$ & 2.2971 & 0.5213 & 1.1975\\ \hline  
			6  & $\R$ & 3.2200 & 0.4583 & 1.4757 & 6 &  $\C$ & 2.9072 & 0.4502 & 1.3088\\ \hline 	
		\end{tabular}
	\end{scriptsize}\caption{The maximal nuclear norm for randomly 500 nonsymmetric examples} \label{tensor:1:random:nonsym:qubit}
\end{table} 

\begin{table*}[htb]
	\centering
	\begin{scriptsize}
		\begin{tabular}{|c|ll|c|c|c|c|c|}  \hline 
			{\tiny $d = 3$} &\multirow{2}{*}{\tiny	$\cT(:,:,1) = \bbm  -0.3947 & -0.3663 \\ -0.3316  &  0.3077 \ebm $} & \multirow{2}{*}{\tiny $\cT(:,:,2) = \bbm -0.3170 &   0.2405\\ 0.3888   & 0.4426 \ebm$}   \\ 
			{\tiny $\F=\R$} & & \\
			\hline      
	& \multirow{2}{*}{\tiny $\cT(:,:,1,1) = \bbm -0.3363& 0.0504\\
		-0.3620  &   0.2986 \ebm$} & \multirow{2}{*}{\tiny $\cT(:,:,2,1) = \bbm 0.0920&  -0.2712\\
		-0.1341  &   0.2979 \ebm$}\\
	{\tiny $d = 4$} && \\
{\tiny $\F=\R$} &	\multirow{2}{*}{\tiny	$\cT(:,:,1,2) = \bbm -0.3065 & 0.2301\\
		0.2369 &   0.0084 \ebm$} & \multirow{2}{*}{\tiny $\cT(:,:,2,2) = \bbm -0.2535 & -0.2168\\
		-0.0545 &   0.3977 \ebm$} \\ 
	&  & \\\hline  
  &	\multirow{2}{*}{\tiny	$\cT(:,:,1,1,1) = \bbm 0.1924 &  0.2460 \\
	0.1419  &  0.0317 \ebm$}& 	\multirow{2}{*}{\tiny	$\cT(:,:,2,1,1) = \bbm -0.2657 & 0.1729\\
	0.0526  &  0.2609\ebm$} \\
 & &  \\
	&\multirow{2}{*}{\tiny	 $\cT(:,:,1,2,1) = \bbm -0.1521 & 0.0121\\
	0.1114 &  0.2368 \ebm$} & 	\multirow{2}{*}{\tiny	$\cT(:,:,2,2,1) = \bbm 0.1167 & -0.0275 \\
	-0.1487 &  -0.1793 \ebm$} \\
{\tiny $d = 5$} & & \\ 
{\tiny $\F =\R$} & \multirow{2}{*}{\tiny $\cT(:,:,1,1,2) = \bbm 0.2564 & -0.2466 \\
	-0.0789  &  0.1396 \ebm$} & \multirow{2}{*}{\tiny $\cT(:,:,2,1,2) = \bbm 0.2159 & -0.1361\\
	-0.1170  &  0.2367 \ebm$ }\\
 & & \\ 
& \multirow{2}{*}{\tiny $\cT(:,:,1,2,2) = \bbm -0.2018 & 0.1104 \\
	0.2411  &  0.1902 \ebm$}& \multirow{2}{*}{\tiny $\cT(:,:,2,2,2) = \bbm  -0.1590 & -0.2291\\
	-0.0246  &  0.1920  \ebm$} \\ 
  & & \\  \hline
& \multirow{2}{*}{\tiny $\cT(:,:,1,1,1,1) = \bbm  -0.1354 & -0.0111 \\
	0.1972  & -0.1357 \ebm$} & \multirow{2}{*}{\tiny $\cT(:,:,2,1,1,1) = \bbm  
	0.0895 & -0.0218\\
	-0.1806 &  -0.2043 \ebm$} \\
 & & \\  
&\multirow{2}{*}{\tiny  $\cT(:,:,1,2,1,1) = \bbm  0.0984 & -0.0183\\
	0.1156  & -0.0533 \ebm$} & \multirow{2}{*}{\tiny $\cT(:,:,2,2,1,1) = \bbm  
	0.1198 & -0.1609 \\
	0.1198 &   0.0605 \ebm$}\\
 & & \\ 
& \multirow{2}{*}{\tiny  $\cT(:,:,1,1,2,1) = \bbm  -0.1812  & 0.0505\\
	-0.0423 &  -0.0189  \ebm$} & \multirow{2}{*}{\tiny $\cT(:,:,2,1,2,1) = \bbm  
	0.0876 &  0.1185 \\
	0.0177 &  -0.0829  \ebm$}\\
 & & \\ 
  &\multirow{2}{*}{\tiny  $\cT(:,:,1,2,2,1) = \bbm  0.0653  & 0.1180\\
	0.1779 &  0.0927\ebm$} & \multirow{2}{*}{\tiny $\cT(:,:,2,2,2,1) = \bbm 0.0929 & -0.0781\\
	-0.0084 & 0.1328   
	\ebm$} \\ 
{\tiny $d = 6$} & & \\
{\tiny $\F =\R$}  & \multirow{2}{*}{\tiny  $\cT(:,:,1,1,1,2) = \bbm  0.1920  & 0.0142 \\
	-0.0406  &  0.1940\ebm$} & \multirow{2}{*}{\tiny $\cT(:,:,2,1,1,2) = \bbm 0.1931 &-0.1564\\
	0.1086 &  -0.1686    
	\ebm$ }\\ 
 & & \\
& \multirow{2}{*}{\tiny  $\cT(:,:,1,2,1,2) = \bbm  0.1311 & -0.1753\\
	0.0944 &   0.1288 \ebm$} &\multirow{2}{*}{\tiny $ \cT(:,:,2,2,1,2) = \bbm-0.0299 & -0.1418 \\
	0.1816  & -0.2000    
	\ebm$  }\\
 & & \\
& \multirow{2}{*}{\tiny  $\cT(:,:,1,1,2,2) = \bbm  0.1691 &  0.1193\\
	-0.2026  & -0.1685 \ebm$} & \multirow{2}{*}{\tiny $\cT(:,:,2,1,2,2) = \bbm  0.1379 & -0.0011\\
	-0.1126  &  0.0545  
	\ebm$}\\ 
 & & \\  
 & \multirow{2}{*}{\tiny  $\cT(:,:,1,2,2,2) = \bbm  -0.0198 & -0.1335 \\
 	-0.0528  &  0.0693 \ebm$} & \multirow{2}{*}{\tiny $ \cT(:,:,2,2,2,2) =\bbm  0.1190 & 0.0939\\
 		0.1995  &  0.1357    
 		\ebm$} \\   
 	& & \\	\hline
		\end{tabular}
\end{scriptsize}\caption{The most entangled real states for 500 nonsymmetric examples} \label{tensor:2:random:nonsym:qubit}
\end{table*}

\begin{table*}[htb]
	\centering
	\begin{scriptsize}
		\begin{tabular}{|c|ll|c|c|c|c|c|}  \hline 
			{\tiny $d = 3$} &\multirow{2}{*}{\tiny	$\cT(:,:,1) = \bbm  0.3643 + 0.0337i & 0.2663 - 0.3760i\\
				0.2328 - 0.4029i  & 0.2881 - 0.2092i \ebm $} & \multirow{2}{*}{\tiny $\cT(:,:,2) = \bbm 0.1555 - 0.0832i  & -0.2076 + 0.3741i \\
				-0.2736 - 0.0237i &  0.1155 + 0.0880i \ebm$}   \\ 
			{\tiny $\F=\C$} & & \\ 
			\hline       
			& \multirow{2}{*}{\tiny $\cT(:,:,1,1) = \bbm  0.1481 - 0.0069i & 0.2237 - 0.0739i \\
				0.0952 + 0.2873i & -0.1329 + 0.0183i \ebm$} & \multirow{2}{*}{\tiny $\cT(:,:,2,1) = \bbm -0.0470 - 0.1859i & -0.2708 - 0.0454i\\
				-0.1674 + 0.0011i & -0.2443 + 0.0936i  \ebm$}\\ 
			{\tiny $d = 4$} && \\
			{\tiny $\F=\C$} &	\multirow{2}{*}{\tiny	$\cT(:,:,1,2) = \bbm   0.2045 + 0.1013i & -0.0197 - 0.1797i\\
				-0.0932 + 0.2667i &  0.2413 - 0.2267i \ebm$} & \multirow{2}{*}{\tiny $\cT(:,:,2,2) = \bbm -0.1583 + 0.0379i & 0.0913 - 0.2778i \\
				0.2111 + 0.2736i  & 0.2281 + 0.2159i \ebm$} \\ 
			&  & \\\hline  
			&	\multirow{2}{*}{\tiny	$\cT(:,:,1,1,1) = \bbm -0.1892 + 0.0987i & 0.0592 - 0.0647i \\
				-0.1771 + 0.0787i & -0.1299 - 0.1403i  \ebm$}& 	\multirow{2}{*}{\tiny	$\cT(:,:,2,1,1) = \bbm  0.1445 - 0.1447i  &-0.1383 - 0.0326i\\
				-0.1370 - 0.1298i &  0.2077 + 0.1496i\ebm$} \\
			& &  \\
			&\multirow{2}{*}{\tiny	 $\cT(:,:,1,2,1) = \bbm -0.0253 - 0.0041i & -0.0781 - 0.0165i\\
				-0.0672 + 0.1327i & -0.0567 - 0.0179i \ebm$} & 	\multirow{2}{*}{\tiny	$\cT(:,:,2,2,1) = \bbm  -0.0449 - 0.0207i & -0.1598 + 0.1687i\\
				0.0385 - 0.0369i & -0.1941 - 0.2077i \ebm$} \\
			{\tiny $d = 5$} & & \\ 
			{\tiny $\F =\C$} & \multirow{2}{*}{\tiny $\cT(:,:,1,1,2) = \bbm   -0.0174 - 0.0851i &  0.1825 + 0.0812i\\
				0.1554 - 0.1894i  &-0.0990 + 0.0631i\ebm$} & \multirow{2}{*}{\tiny $\cT(:,:,2,1,2) = \bbm -0.1427 + 0.2029i & -0.1101 - 0.0420i\\
				0.1567 + 0.0221i  & 0.0613 - 0.1266i\ebm$ }\\
			& & \\ 
			& \multirow{2}{*}{\tiny $\cT(:,:,1,2,2) = \bbm   0.1962 + 0.0526i  & 0.1556 - 0.0521i\\
				0.0693 + 0.0981i  &-0.2059 - 0.2059i\ebm$}& \multirow{2}{*}{\tiny $\cT(:,:,2,2,2) = \bbm -0.1525 - 0.0337i & -0.0293 + 0.1235i\\
				0.1339 + 0.1066i &  0.1640 + 0.1765i  \ebm$} \\ 
			& & \\  \hline 
			& \multirow{2}{*}{\tiny $\cT(:,:,1,1,1,1) = \bbm   	-0.0677 + 0.0973i& 0.0106 + 0.1409i\\
				-0.0529 - 0.1296i&  -0.0443 - 0.0007i \ebm$} & \multirow{2}{*}{\tiny $\cT(:,:,2,1,1,1) = \bbm  0.1362 - 0.0385i& -0.1121 + 0.1283i\\
				-0.0286 - 0.1434i & -0.0997 - 0.1376i
			 \ebm$} \\
			& & \\  
			&\multirow{2}{*}{\tiny  $\cT(:,:,1,2,1,1) = \bbm -0.1131 + 0.0779i& -0.1089 - 0.1387i\\
				0.0179 + 0.0332i & -0.0174 - 0.0142i    \ebm$} & \multirow{2}{*}{\tiny $\cT(:,:,2,2,1,1) = \bbm  
			  -0.1153 + 0.0110i &  0.1405 - 0.1124i\\
			  0.1185 - 0.0366i&   0.0276 + 0.0703i\ebm$}\\
			& & \\ 
			& \multirow{2}{*}{\tiny  $\cT(:,:,1,1,2,1) = \bbm  -0.0674 - 0.1423i& -0.1374 - 0.0612i\\
				0.0913 - 0.0165i & -0.0321 - 0.0973i  \ebm$} & \multirow{2}{*}{\tiny $\cT(:,:,2,1,2,1) = \bbm-0.0762 + 0.1214i & 0.0930 + 0.0843i\\
				0.0724 - 0.0194i & -0.1188 - 0.0075i   
			   \ebm$}\\
			& & \\ 
			 &\multirow{2}{*}{\tiny  $\cT(:,:,1,2,2,1) = \bbm -0.0686 + 0.0616i &-0.1117 + 0.0676i\\
				-0.0427 - 0.1427i & -0.0279 + 0.0872i  \ebm$} & \multirow{2}{*}{\tiny $\cT(:,:,2,2,2,1) = \bbm  0.1469 + 0.1238i& 0.0309 + 0.0062i\\
				0.0862 + 0.0889i & 0.0763 - 0.0204i  
				\ebm$} \\ 
		{\tiny $d = 6$} 	& & \\
		{\tiny	$\F =\C$}  & \multirow{2}{*}{\tiny  $\cT(:,:,1,1,1,2) = \bbm  -0.0805 + 0.0289i & 0.1351 - 0.0362i\\
				-0.0895 - 0.0121i &  0.0901 + 0.1108i \ebm$} & \multirow{2}{*}{\tiny $\cT(:,:,2,1,1,2) = \bbm -0.0789 - 0.1379i& -0.0783 - 0.1164i\\
				0.0894 + 0.1324i & -0.0665 + 0.1341i    
				\ebm$ }\\ 
			& & \\
			& \multirow{2}{*}{\tiny  $\cT(:,:,1,2,1,2) = \bbm  0.0104 - 0.0051i &  0.1171 + 0.1257i\\
				0.0583 - 0.0200i & -0.0433 + 0.0583i \ebm$} &\multirow{2}{*}{\tiny $ \cT(:,:,2,2,1,2) = \bbm    
				0.1108 + 0.1394i&  0.1228 - 0.0153i\\
				-0.0786 + 0.1070i & 0.0241 + 0.1131i\ebm$  }\\
			& & \\
			& \multirow{2}{*}{\tiny  $\cT(:,:,1,1,2,2) = \bbm-0.1383 - 0.0247i& -0.1055 - 0.0455i\\
				-0.1247 - 0.0841i&  0.0665 + 0.1070i    \ebm$} & \multirow{2}{*}{\tiny $\cT(:,:,2,1,2,2) = \bbm  0.0313 - 0.0613i&  0.0587 + 0.1184i\\
				0.0122 + 0.0157i& -0.0778 - 0.0855i  
				\ebm$}\\ 
			& & \\  
			& \multirow{2}{*}{\tiny  $\cT(:,:,1,2,2,2) = \bbm  0.0083 + 0.0024i & -0.1214 + 0.0408i\\
				-0.0278 + 0.1363i& -0.1153 + 0.0933i  \ebm$} & \multirow{2}{*}{\tiny $ \cT(:,:,2,2,2,2) =\bbm -0.1024 + 0.1022i&  -0.0143 + 0.0358i\\
				0.0929 - 0.0538i & -0.0657 - 0.1403i   
				\ebm$} \\  
			& & \\	\hline
		\end{tabular}
	\end{scriptsize}\caption{The most entangled complex states for 500 nonsymmetric examples} \label{tensor:3:random:nonsym:qubit}
\end{table*} 

\subsubsection{Random nonsymmetric tensors}

We explore the performance of Algorithms \ref{alg:nonsym} on calculating the nuclear norm for randomly generated nonsymmetric tensors $\cT\in \F^{n\times \cdots\times n}$. The computational results are shown in Table \ref{tensor:nuclear:random:nonsym}. For each $(n,d)$ pair, we randomly generate $K_{\text{rand}}$ tensors. For each tensor, we run 30 times Algorithm \ref{alg:nonsym} with random initial points, and we choose the smallest objective value as the nuclear norm. 
The maximal (resp. minimal, average) iteration and time are calculated over these $30K_{\text{rand}}$ round calculation of Algorithm \ref{alg:nonsym}. From Table \ref{tensor:nuclear:random:nonsym}, we can see, for small examples, Algorithm \ref{alg:nonsym} can find the nuclear norm in few seconds. For relatively large examples, Algorithm \ref{alg:nonsym} needs longer time to solve.

 
	\begin{table}[htb]
	\centering
	\begin{scriptsize}
		\begin{tabular}{|c|c|c|c|c|c|c|c|c|c|} \hline
			$n$ & $d$ & $\F$ & $K_{\text{rand}}$  & $\text{Min}_{\F,\text{Iter}}$  & $\text{Avg}_{\F,\text{Iter}}$ & $\text{Max}_{\F,\text{Iter}}$ & $\text{Min}_{\F,\text{Time}}$& $\text{Avg}_{\F,\text{Time}}$  
			&   $\text{Max}_{\F,\text{Time}}$     
			\\
			\hline 
			3& 3	& $\R$ & 100 & 3 & 6.0 & 14 & 0:00:03 & 0:00:12& 0:00:30\\ \hline  
			3& 3	& $\C$ & 100 &  4 & 7.5 &13 & 0:00:12 & 0:00:25 &0:00:44\\ \hline 
			4& 3	& $\R$ & 50 & 6 & 9.3 & 17 & 0:00:13 & 0:00:28 & 0:00:53  \\ \hline
			4& 3	& $\C$ & 50& 8 & 10.3 & 15 & 0:01:25 & 0:02:44  & 0:04:19 \\ \hline
			5 & 3	& $\R$ & 50 & 7 & 11.7 &19 & 0:00:44 & 0:01:16 & 0:02:05 \\ \hline 
			3& 4	& $\R$ & 50 & 10&14.8 & 25 &0:00:54 & 0:01:51 & 0:03:29\\ \hline
			3& 4	& $\C$ & 50 & 9 & 14.1 & 23 & 0:03:50   & 0:07:45   & 0:13:28  \\ \hline  
			6 & 3	& $\R$ & 30 & 12 & 14.0	& 17 & 0:03:50  &  0:04:27  &0:05:01   \\ \hline
			5 & 3	& $\C$ & 20 & 10 & 13.3 & 19 & 0:14:32 &
			    0:27:30 & 0:41:25  \\ \hline  
			4 & 4	& $\R$ & 20 &  14 &21.4& 33 & 0:05:55  &0:12:52  & 0:21:07 \\ \hline    
			7 & 3	& $\R$ & 20 & 	15& 17.5 & 20 & 0:21:28 & 0:25:01 & 0:28:48  \\ \hline   
			8 & 3	& $\R$ & 20 & 15 & 18.0 &  23 & 1:12:50 & 1:27:26 & 1:51:58 \\ \hline  
			6 & 3	& $\C$ & 10& 14 & 16.2&	 19 & 2:43:17 & 3:17:53 & 3:55:35 \\ \hline 
			5 & 4	& $\R$ & 10 &  21 & 30.0 &  38 & 2:26:45  & 3:30:25 & 4:26:19 \\ \hline   
			4 & 4	& $\C$ & 10&  19 & 22.1	&  31 & 4:33:43 & 5:09:20 & 7:29:16 \\ \hline
		\end{tabular}\caption{Computational results for random nonsymmetric tensors} \label{tensor:nuclear:random:nonsym}
	\end{scriptsize}
\end{table}   
 	  
\subsection{Symmetric Tensors} \label{subsec:sym}
In this subsection, we test Algorithm \ref{alg:sym} on symmetric tensors. Semidefinite relaxation method \cite{Ni16} generally can find the nuclear norm for symmetric tensors, here we would like to compare the computational results of Algorithm \ref{alg:sym} with the semidefinite relaxation method. Note we only want to show Algorithm \ref{alg:sym} also works well in finding the nuclear norm. However, it might not be as fast as semidefinite relaxation method, since we need to solve several SOCP problems before it converges. We will test Algorithm \ref{alg:sym} on the following tensors: explicit symmetric $d$-qubits found from references \cite{AMM10,FLn16} and random symmetric $d$-qubits.

\subsubsection{Symmetric $d$-qubits}
We test Algorithm \ref{alg:sym} on several symmetric $d$-qubits found from references \cite{AMM10,FLn16}. The symmetric $d$-qubits are shown in Table \ref{tensor:nuclear:small:sym}. 
For convenience, we only list one element for each permutation index since the tensor is symmetric. 
Since these examples are real symmetric $d$-qubits, we calculate both real and complex nuclear norms by using two algorithms: alternating Algorithm \ref{alg:sym} and semidefinite relaxation method \cite{Ni16}. The computational results are shown in Table \ref{tensor:nuclear:small:sym}. In the table, $\|\cT\|_{1,\F}$(S) is the nuclear norm calculated by implementing Algorithm \ref{alg:sym}, and $\|\cT\|_{1,\F}$(N) is the nuclear norm calculated by semidefinite relaxation method \cite{Ni16}.
For all the symmetric $d$-qubits list in the Table \ref{tensor:nuclear:small:sym}, the semidefinite relaxation method found exact nuclear norms with certification. For the certification conditions of semidefinite relaxation method, please refer to \cite{Ni16} for details.
 From the Table \ref{tensor:nuclear:small:sym}, we can see that Algorithm \ref{alg:sym} also finds all nuclear norms with tiny numerical errors. The spectral norms of these symmetric $d$-qubits list in Table \ref{tensor:nuclear:small:sym} are shown in Table \ref{tensor:spectral:small:sym}. The computational time comparison results are shown in Table \ref{tensor:nuclear:small:sym:time} in seconds. $\text{Avg}_{\F,\text{Time}}$(S) 
 is the average computational time of Algorithm \ref{alg:sym} over field $\F$, and $\text{Avg}_{\F,\text{Iter}}$(S) is the average iteration of Algorithm \ref{alg:sym} over field $\F$. Time(N,$\F$) is the computational time of semidefinite relaxation method \cite{Ni16} over field $\F$.

\begin{table}[htb]
	\centering
	\begin{scriptsize}
		\begin{tabular}{|c|c|p{4.5cm}||c|c||c|c|c|c|}  \hline
			No.   & $d$ & \quad\quad\quad\quad \quad\quad  Tensor & \scriptsize{$\|\cT\|_{1,\R}$(S)}  & $\|\cT\|_{1,\C}$(S) & $\|\cT\|_{1,\R}$(N)  & $\|\cT\|_{1,\C}$(N)  
			\\
			\hline  1 & 3 & $\cT_{1,1,1} = \cT_{2,2,2} = \frac{1}{\sqrt{2}}$ & 1.4149 & 1.4141 & 1.4142  & 1.4142   \\ 
			\hline  2 & 3& $\cT_{1,1,2} = \frac{1}{\sqrt{3}}$ & 1.7321 & 1.5000 & 1.7321 & 1.5000 \\
			\hline  \multirow{3}{*}{3} & \multirow{3}{*}{3}& $\cT_{1,1,1} =\cT_{1,1,2}=\frac{1}{\sqrt{20}}$, $\cT_{2,2,1}= \cT_{2,2,2} = -\frac{2}{\sqrt{20}}$ & \multirow{3}{*}{1.9235}  & \multirow{3}{*}{1.4238}  & \multirow{3}{*}{1.9235} & \multirow{3}{*}{1.4230}   \\
			\hline  \multirow{2}{*}{4} & \multirow{2}{*}{3} & $\cT_{1,1,1}=0.3358,~\cT_{2,2,2} = -0.4283$, $\cT_{1,1,2}  =0.4305$, $\cT_{2,2,1}= -0.2220$ & \multirow{2}{*}{1.9903} & \multirow{2}{*}{1.4139} & \multirow{2}{*}{1.9903} & \multirow{2}{*}{1.4138}   \\
			\hline  5 & 3 & $\cT_{1,1,2} = \frac{1}{2},\cT_{2,2,2} = -\frac{1}{2}$ & 2.0000 & 1.4142 & 2.0000 & 1.4142    \\
			\hline  6 & 4 & $\cT_{1,1,1,2}  =\frac{1}{2}$ & 2.0000  & 1.5398 & 2.0000 & 1.5396  \\
	  	\hline  7 & 4 & $\cT_{1,1,1,1} = \frac{1}{\sqrt{3}},\cT_{2,2,2,1} =\frac{1}{2}\sqrt{\frac{2}{3}}$ &  1.9112  & 1.7329  & 1.9107 & 1.7321\\
			\hline 	8 & 5 & $\cT_{1,1,1,1,2}=\frac{1}{\sqrt{5}}$ & 2.2359 & 1.5629& 2.2361  & 1.5625     \\  
			\hline  \multirow{3}{*}{9} & \multirow{3}{*}{5} & $\cT_{1,1,1,1,1}=\frac{1}{\sqrt{1+A^2}},\cT_{1,2,2,2,2}=\frac{A}{\sqrt{5(1+A^2)}} $, where $A\approx 1.53154$ & \multirow{3}{*}{2.4190}  & \multirow{3}{*}{1.8291}     & \multirow{3}{*}{2.4190} & \multirow{3}{*}{1.8291} \\  
			\hline
			\multirow{2}{*}{10} & \multirow{2}{*}{6} &  $\cT_{1,1,1,1,1,2} = \frac{1}{2\sqrt{3}}$ & \multirow{2}{*}{3.4641} &\multirow{2}{*}{2.1219}& \multirow{2}{*}{3.4641} &\multirow{2}{*}{2.1213}  \\
			 & &  $\cT_{1,2,2,2,2,2} = \frac{1}{2\sqrt{3}}$ &&&&\\
			\hline   		   
		\end{tabular}
	\end{scriptsize}\caption{Computational results for symmetric $d$-qubits} \label{tensor:nuclear:small:sym}
\end{table}

\begin{table}[htb]
	\centering
	\begin{scriptsize}
		\begin{tabular}{|c|c|c|c|c|c|c|c|c|c|c|c|}  \hline
No. & 1 & 2 &3 &4 & 5&6 & 7&8&9 & 10\\ \hline
$\|\cT\|_{\infty,\R}$ & 0.7071 & 0.6667 & 0.6938 & 0.5785 & 0.5000 & 0.6495 & 0.5774 & 0.4472 &0.5492 & 0.4714 \\ \hline
$\|\cT\|_{\infty,\C}$ & 0.7071 & 0.6667 & 0.7171 & 0.7075 & 0.7071 & 0.6495 & 0.5774 & 0.6400 & 0.5492 & 0.4714\\ \hline	
{\sf P}$_{\R}$ & 1.0005 & 1.1548 & 1.3345 & 1.1514 & 1.0000 & 1.0000 & 1.1035 & 1.0000 & 1.3285 & 1.6329\\ \hline
{\sf P}$_{\C}$ & 1.0000 & 1.0000 & 1.0210 & 1.0003 & 1.0000 & 1.0001 & 1.0006 & 1.0002 & 1.0045 & 1.0003\\ \hline			   
		\end{tabular}
	\end{scriptsize}\caption{Spectral norms of symmetric $d$-qubits in Table \ref{tensor:nuclear:small:sym}} \label{tensor:spectral:small:sym}
\end{table}

\begin{table}[htb]
	\centering
	\begin{scriptsize}
		\begin{tabular}{|c||c|c||c|c||c|c|c|c|}  \hline
			No.   &   $\text{Avg}_{\R,\text{Time}}$(S) &  $\text{Avg}_{\C,\text{Time}}$(S) & $\text{Avg}_{\R,\text{Iter}}$(S) &  $\text{Avg}_{\C,\text{Iter}}$(S)  & Time(N,$\R$)& Time(N,$\C$)  \\
			\hline  1 & 2.65 & 4.09 & 3.4 & 13.1 & 0.07 & 0.15 \\ 
			\hline  2 & 1.76  & 2.22 & 1.9 & 5.9 & 0.10 & 0.13 \\
			\hline  3 & 1.51 & 3.59 & 1.5 & 10.2 & 0.11 & 0.12  \\
			\hline  4 & 1.50 & 3.95 & 1.5 & 10.9  & 0.10 & 0.11 \\
			\hline  5 & 1.58 & 2.73 & 1.5 & 6.8 & 0.07 & 0.10  \\
			\hline  6 & 1.89 & 8.74 & 2.6 & 6.5 & 0.11 & 1.11\\
			\hline 	7 & 4.32 & 6.51 & 5.8 & 6.7 &  0.13 & 1.31 \\  
			\hline 8 & 1.70 & 3.98 & 3.6 & 8.6 & 0.11 & 0.34 \\
			\hline 9 &  1.88 & 4.35  &3.6 & 8.8& 0.11 &  0.36 \\
			\hline 10 & 4.79  & 4.82  & 7.3 &  8.2& 0.19 & 0.37\\
			\hline
		\end{tabular}
	\end{scriptsize}\caption{Computational time comparison for symmetric $d$-qubits in Table \ref{tensor:nuclear:small:sym}} \label{tensor:nuclear:small:sym:time}
\end{table}

 \subsubsection{Random symmetric $d$-qubits}
Similar like nonsymmetric case, we are also interested in finding the maximum entangled states. 
Here we randomly generate 5000 symmetric states over field $\F$ for $d=3,4,5,6$, and calculate their nuclear norms over field $\F$ by using alternating Algorithm \ref{alg:sym} and semidefinite relaxation method \cite{Ni16}. Algorithm \ref{alg:sym} is implemented with random starting point for 30 times, and choose the one with smallest objective value as the nuclear norm. For all these 5000 randomly examples, two algorithms find the same nuclear norm with tiny numerical errors for Algorithm \ref{alg:sym}. So Algorithm \ref{alg:sym} also works well for calculating nuclear norms for symmetric tensors.
We show the most entangled symmetric states among the 5000 random examples in Table \ref{tensor:1:real:most:bosons} and Table \ref{tensor:2:comp:most:bosons}
for $\F = \R$ and $\F = \C$ respectively.

\begin{table*}[htb]
	\centering
	\begin{scriptsize}
		\begin{tabular}{|c|p{5.8cm}|c|c|c|c|c|c|}  \hline
			$d$ & \quad\quad\quad\quad\quad\quad \quad\quad Tensor & $\|\cT\|_{1,\R}$  &   $\|\cT\|_{\infty,\R}$  &  {\sf P}$_{\R}$\\ 
			\hline   
			\multirow{2}{*}{3} & $\cT_{1,1,1} = -0.4950, \cT_{1,2,1}= -0.1078$ & \multirow{2}{*}{1.9999}   &  \multirow{2}{*}{0.5001} &    \multirow{2}{*}{1.0000}  \\ 
			&$\cT_{1,2,2} = 0.4864, \cT_{2,2,2}= 0.1018$&&&\\
			\hline 
			\multirow{3}{*}{4} &  $\cT_{1,1,1,1} = -0.3132; \cT_{1,1,1,2} = 0.1703;$ & \multirow{3}{*}{2.8283}  & \multirow{3}{*}{0.3581} &   \multirow{3}{*}{1.0128}\\ 
			&$ \cT_{1,1,2,2} =0.3089; \cT_{1,2,2,2} = -0.1737;$&&&\\
			&$ \cT_{2,2,2,2} = -0.3042$&&&\\
			\hline
			\multirow{3}{*}{5} &  $\cT_{1,1,1,1,1} = 0.2388; \cT_{1,1,1,1,2} = 0.0952;$ & \multirow{3}{*}{3.9984}   & \multirow{3}{*}{0.2716} &   \multirow{3}{*}{1.0860}  \\ 
			&$ \cT_{1,1,1,2,2} =  -0.2304;\cT_{1,1,2,2,2} =  -0.0947; $&&&\\
			&$\cT_{1,2,2,2,2} =  0.2289; \cT_{2,2,2,2,2} = 0.1223$&&&\\
			\hline 
			\multirow{4}{*}{6} & $\cT_{1,1,1,1,1,1} = 0.0541; \cT_{1,1,1,1,1,2} = 0.1972;  $ & \multirow{4}{*}{5.5931}  & \multirow{4}{*}{0.2049}  & \multirow{4}{*}{1.1464} \\
			&$\cT_{1,1,1,1,2,2} = -0.0598; \cT_{1,1,1,2,2,2} =-0.170;  $&&&\\
			&$\cT_{1,1,2,2,2,2} = 0.0273; \cT_{1,2,2,2,2,2} = 0.1411;$&&&\\
			&$\cT_{2,2,2,2,2,2} = 0.0274$&&&\\
			\hline     				
		\end{tabular}
	\end{scriptsize}\caption{The most entangled real symmetric states for 5000 random examples.} \label{tensor:1:real:most:bosons}
\end{table*}

\begin{table*}[htb]
	\centering
	\begin{scriptsize}
		\begin{tabular}{|c|p{8.2cm}|c|c|c|c|c|c|}  \hline
			$d$ & \quad\quad\quad\quad\quad\quad \quad\quad Tensor  & $\|\cT\|_{1,\C}$ &    $\|\cT\|_{\infty,\C}$ &  {\sf P}$_{\C}$\\
			\hline   
			\multirow{2}{*}{3} & $\cT_{1,1,1} = 0.3029 - 0.3436i, \cT_{1,2,1}= -0.3423 - 0.2033i$ & \multirow{2}{*}{1.4985} &  \multirow{2}{*}{0.6869}  &  \multirow{2}{*}{1.0293}    \\ 
			&$\cT_{1,2,2} = 0.0727 - 0.3054i, \cT_{2,2,2}= 0.1365 + 0.0183i$&&&\\
			\hline   
			\multirow{3}{*}{4} &  $\cT_{1,1,1,1} = 0.2679-0.2422i, \cT_{1,1,1,2} = 0.3086+0.0143i$ & \multirow{3}{*}{1.7247}& \multirow{3}{*}{0.6136} & \multirow{3}{*}{1.0583}  \\ 
			&$ \cT_{1,1,2,2} =-0.1385 -0.1266i , \cT_{1,2,2,2} = -0.0986-0.0715i $&&&\\
			&$ \cT_{2,2,2,2} =  0.1396 - 0.4449i$&&&\\
			\hline 
			\multirow{3}{*}{5} &  $\cT_{1,1,1,1,1} = 0.2025-0.1845i , \cT_{1,1,1,1,2} =  0.1868-0.2069i$ & \multirow{3}{*}{1.8112} &\multirow{3}{*}{0.6236 } & \multirow{3}{*}{1.1295}   \\ 
			&$ \cT_{1,1,1,2,2} = 0.0060 +0.1202i,\cT_{1,1,2,2,2} = 0.0177-0.0913i $&&&\\
			&$\cT_{1,2,2,2,2} = -0.0953-0.1989i, \cT_{2,2,2,2,2} =  0.1904-0.1606i$&&&\\
			\hline  
			\multirow{4}{*}{6} & $\cT_{1,1,1,1,1,1} = 0.0907 -   0.0477i, \cT_{1,1,1,1,1,2} = -0.1934-0.2027i  $ & \multirow{4}{*}{2.0312} & \multirow{4}{*}{0.6271}  & \multirow{4}{*}{1.2738}   \\
			&$\cT_{1,1,1,1,2,2} = 0.0655+0.0558i, \cT_{1,1,1,2,2,2} =0.0196+0.0043i  $&&&\\
			&$\cT_{1,1,2,2,2,2} = 0.0371 +   0.0115i, \cT_{1,2,2,2,2,2} =  0.1736 - 0.1746i $&&&\\
			&$\cT_{2,2,2,2,2,2} = -0.0595 + 0.0973i$&&&\\
			\hline     				
		\end{tabular}
	\end{scriptsize}\caption{The most entangled complex symmetric states for 5000 random examples.} \label{tensor:2:comp:most:bosons}
\end{table*} 

 \clearpage
\bibliographystyle{plain}

\end{document}